\tikzstyle{green dot}=[fill={rgb,255: red,0; green,120; blue,0}, draw=black, shape=circle]
\tikzstyle{red dot}=[fill=red, draw=black, shape=circle]
\tikzstyle{blue dot}=[fill=blue, draw=black, shape=circle]
\tikzstyle{purpule dot}=[fill={rgb,255: red,98; green,0; blue,98}, draw=black, shape=circle]
\tikzstyle{white dot}=[fill=white, draw=black, shape=circle]
\tikzstyle{yellow dot}=[fill=yellow, draw=black, shape=circle]
\tikzstyle{black sot}=[fill=black, draw=black, shape=circle]
\tikzstyle{white rectangle}=[fill=white, draw=black, shape=rectangle]
\tikzstyle{empty}=[fill=white, draw=white, shape=rectangle]
\tikzstyle{green edge}=[-, draw=green, fill=none]
\tikzstyle{red edge}=[-, draw=red]
\tikzstyle{blue edge}=[-, draw=blue]
\tikzstyle{purpule edge}=[-, draw={rgb,255: red,98; green,0; blue,98}]
\tikzstyle{special edge}=[draw=black, {|-|}]
\tikzstyle{special blue edge}=[draw=blue, {|-|}]
\tikzstyle{black edge}=[-, fill=none]
\renewcommand{\phi}{\varphi}
\DeclareMathOperator*{\argmin}{arg\,min}
\newcommand{\hide}[1]{ }
\renewcommand{\mathbf}{\bm}
\theoremstyle{plain}
\renewcommand{\include}{\input}
\newcolumntype{Y}{>{\centering\arraybackslash}X}
\theoremstyle{plain}
\newtheorem{thm1}{Theorem}[section]
\theoremstyle{remark}
\newtheorem{pthm1}[thm1]{Theorem}
\theoremstyle{plain}
\newtheorem{lem1}[thm1]{Lemma}
\theoremstyle{plain}
\newtheorem{obs1}[thm1]{Observation}
\theoremstyle{plain}
\newtheorem{inv1}[thm1]{Invariant}
\theoremstyle{plain}
\newtheorem{cor1}[thm1]{Corollary}
\theoremstyle{definition}
\newtheorem{defn1}[thm1]{Definition}
\theoremstyle{plain}
\newtheorem{fact1}[thm1]{Fact}
\theoremstyle{remark}
\newtheorem{rem1}[thm1]{Remark}
\theoremstyle{plain}
\newtheorem{prop1}[thm1]{Proposition}
\theoremstyle{plain}
\newtheorem{asmp1}[thm1]{Assumption}
\newenvironment{proof}[1][\protect\proofname]{\par
\normalfont\topsep6\p@\@plus6\p@\relax
\trivlist
\itemindent\parindent
\item[\hskip\labelsep\scshape #1]\ignorespaces
}{%
\endtrivlist\@endpefalse
}
\providecommand{\proofname}{Proof}
\def\special{1}
\def\specialproof{1}
\def\specialdefinition{1}
\def\specialremark{1}
 \def\highlight{0}
 \def\sidebar{1}
\def\colorequations{0}
\newenvironment{theorem}[1][]{%
\begin{thm1}[#1]%
}{\end{thm1}%
}
\newenvironment{lemma}[1][]{%
\begin{lem1}[#1]%
}{\end{lem1}%
}
\newenvironment{observation}[1][]{%
\begin{obs1}[#1]%
}{\end{obs1}%
}
\newenvironment{inv}[1][]{%
\begin{inv1}[#1]%
}{\end{inv1}%
}
\newenvironment{fact}[1][]{%
\begin{fact1}[#1]%
}{\end{fact1}%
}
\newenvironment{remark}[1][]{%
\begin{rem1}[#1]%
}{\end{rem1}%
}
\newenvironment{pthm}[1][]{%
\begin{pthm1}[#1]%
}{\end{pthm1}%
}
\newenvironment{corollary}[1][]{%
\begin{cor1}[#1]%
}{\end{cor1}%
}
\newenvironment{definition}[1][]{%
\begin{defn1}[#1]%
}{\end{defn1}%
}
\newenvironment{asmp}[1][]{%
\begin{asmp1}[#1]%
}{\end{asmp1}%
}
\newenvironment{proposition}[1][]{%
\begin{prop1}[#1]%
}{\end{prop1}
}%
        \renewenvironment{theorem}[1][]{%
        \begin{mdframed}[nobreak=false,backgroundcolor=Aquamarine!60]\begin{thm1}[#1]%
        }{\end{thm1}\end{mdframed}%
        }
        \renewenvironment{lemma}[1][]{%
        \begin{mdframed}[nobreak=false,backgroundcolor=YellowGreen!60]\begin{lem1}[#1]%
        }{\end{lem1}\end{mdframed}%
        }
        \renewenvironment{observation}[1][]{%
        \begin{mdframed}[nobreak=false,backgroundcolor=Salmon!60]\begin{obs1}[#1]%
        }{\end{obs1}\end{mdframed}%
        }
        \renewenvironment{corollary}[1][]{%
        \begin{mdframed}[backgroundcolor=Mulberry!60]\begin{cor1}[#1]%
        }{\end{cor1}\end{mdframed}%
        }
\let\expandafter\oldproof\csname\string\proof\endcsname
        \let\oldendproof\endproof
        \renewenvironment{proof}[1][\proofname]{%
        \begin{mdframed}[nobreak=false,backgroundcolor=lightgray!60]\oldproof[#1]%
        }{\oldendproof\end{mdframed}}
        \renewenvironment{definition}[1][]{%
        \begin{mdframed}[innerbottommargin=0.1cm,innertopmargin=0.1cm,backgroundcolor=Apricot!60]\begin{defn1}[#1]%
        }{\end{defn1}\end{mdframed}%
        }
\title{Online Joint Replenishment Problem with Arbitrary Holding and Backlog Costs}
 \author{Yossi Azar
     \thanks{School of Computer Science, Tel-Aviv University, Israel. Email: azar@tau.ac.il. Research supported in part by the Israel Science Foundation (grant No. 2304/20).}
 	\and
 	Shahar Lewkowicz
 	\thanks{School of Computer Science, Tel-Aviv University, Israel. Email: shaharlewko22@gmail.com.}
 }
\begin{document}

\maketitle

\begin{comment}
\begin{abstract}

\end{abstract}
\end{comment}
\begin{abstract}

In their seminal paper Moseley, Niaparast, and Ravi \cite{doi:10.1137/1.9781611978322.130} introduced the Joint Replenishment Problem (JRP) with holding and backlog costs that models the
trade-off between ordering costs, holding costs, and backlog costs in supply chain planning systems. Their model generalized the classical the make-to-order version as well make-to-stock version. For the case where holding costs function of all items are the same and all backlog costs are the same, they provide a constant competitive algorithm, leaving designing a constant competitive algorithm for arbitrary functions open. Moreover, they noticed that their algorithm does not work for arbitrary (request dependent) holding costs and backlog costs functions. We resolve their open problem and design a constant competitive algorithm that works for arbitrary request dependent functions. 
Specifically, we establish a 4-competitive algorithm for the single-item case and a 16-competitive for the general (multi-item) version.
The algorithm of \cite{doi:10.1137/1.9781611978322.130} is based on fixed priority on the requests to items, and request to an item are always served by order of deadlines. In contrast, we design an algorithm with dynamic priority over the requests such that instead of servicing a prefix by deadline of requests, we may need to service a general subset of the requests.

\begin{comment}
A retailer places orders to a supplier for multiple items over time: each request is for some item that the retailer needs in the future, and has an arrival time and a soft deadline. If a request is served before its deadline, the retailer pays a holding cost per unit of the item until the deadline. However, if a request is served after its deadline, the retailer pays a backlog cost per unit. Each service incurs a fixed joint service
cost and a fixed item-dependent cost for every item included in a service. These fixed costs are the same irrespective of the units of each item ordered. The goal is to schedule services to satisfy all the online
requests while minimizing the sum of the service costs, the holding costs, and the backlog costs. Constant competitive online algorithms have been developed for two special cases: the make-to-order version when the deadlines are equal to arrival times [16], and the make-to-stock version with hard deadlines with zero holding costs [9]. Our general model with holding and backlog costs has not been investigated earlier, and no online algorithms are known even in the make-to-stock version with hard deadlines and non-zero holding costs. We develop a new online algorithm for the general version of
online JRP with both holding and backlog costs and establish that it is 30-competitive. Along the way, we
develop a 3-competitive algorithm for the single-item case that we build on to get our final result. Our
algorithm uses a greedy strategy and its competitiveness is shown using a dual fitting analysis.
\end{comment}

\end{abstract}

\thispagestyle{empty}
\newpage
\setcounter{page}{1}
\section{Introduction}
Joint replenishment (JRP) is one of the most fundamental problems in  supply chain management
\cite{literature,PengWangWang}. It has been considered in various papers both for the offline model and for the online model. Over time, there is a demand for items that should be satisfied by ordering the  items. Ordering items has an ordering cost. Specifically, every order irrespective of the set
of items ordered incurs a joint ordering cost $c(r)$. In addition, if an item $v$ is part of the order, there is an item-dependent cost of $c(v)$ irrespective of the number of units of item $v$ ordered. Traditionally, there are two versions. The first is that requests for items should be satisfied upon item arrival and holding items encounter holding costs. Then the goal is to minimize the holding cost in additional to the ordering cost. In the second version, holding inventory is not allowed and only backlogging that encounter a delay cost is possible. Then the goal to minimize the backlog or delay cost in additional to the ordering cost. The first version is called the make-to-stock version, while the second one is called make-to-order version.

In the online version, Bienkowski et al. \cite{bienkowski2014better} designed a
2-competitive algorithm for the make-to-stock version and showed that this competitive ratio was optimal. For the  make-to-order version, Buchbinder et al. \cite{buchbinder2013online} gave an elegant 3-competitive algorithm using the online primal-dual framework. They also provide a lower bound of approximately 2.64 on the competitiveness of any deterministic online
algorithm for the problem. 

Recently, in a seminal paper, Moseley, Niaparast, and Ravi \cite{doi:10.1137/1.9781611978322.130} considered the model that both holding and backlog are allowed. Specifically, each request arrives at a certain time with a target time (also called deadline). If the request is served before its deadline, the algorithms suffers from holding cost that is non-increasing up to the deadline of the requests, and if it is served after the deadline it suffers from the backlog or delay costs which is non-decreasing over the time after the deadline. They considered both the single item and the multi-item versions and provided a constant competitive algorithm for the problems by assuming that all functions are identical. %and linear
The case in which each request has its own arbitrary holding and backlog functions %(and the function may be linear or not) 
has remained opened. Moreover, they provide a lower bound for their algorithm (even for a single item) when the functions are  different. They concluded that their algorithm fails to be constant competitive for general holding and backlog functions. In addition, their proof techniques are somewhat based on linear programming.
%and hence, apparently, not useful for non-linear functions.

We resolved this open problem and provide a constant competitive algorithm for the JRP problem with arbitrary and possibly different functions for each request, both for the single item version and for the multi-item. Our algorithm is based on dynamic priority of the requests. The dynamic priority is based on virtual deadlines, which are changing over time for each request. This is in contrast to the algorithm of \cite{doi:10.1137/1.9781611978322.130} which is based on static priority.
Specifically, a request with an earlier deadline always has a higher priority to be served over a request with a later one. However, for our algorithm the priority between requests is changing over time (in principle unlimited number of times) and decisions are done in a more delicate way.

\subsection{Our Results}
 Our main results are as follows:

\begin{theorem}
There exists a 4-competitive deterministic polynomial-time algorithm for Online JRP with
backlog and holding costs with a single item for arbitrary (and request-dependent) monotone backlog and holding functions.
\end{theorem}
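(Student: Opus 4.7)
The plan is to design an online primal-dual algorithm with dynamic priorities and analyze it via dual fitting. I would first write the natural LP relaxation of the problem: let $x_t$ indicate opening an order at time $t$ (cost $c$) and $z_{r,t}$ indicate that request $r$ is served at time $t$, where the per-request cost $c_r(t)$ equals $h_r(t)$ for $t\le d_r$ and $b_r(t)$ for $t>d_r$. Taking the LP dual of the covering constraints $\sum_t z_{r,t}\ge 1$ and the coupling constraints $z_{r,t}\le x_t$, and eliminating the coupling dual variables, one obtains the compact dual: maximize $\sum_r \beta_r$ subject to $\sum_r (\beta_r - c_r(s))^+ \le c$ for every time $s$.

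The algorithm I would analyze grows $\beta_r$ continuously at unit rate for every pending request $r$, and tracks, for every candidate service time $s$, the potential $\phi(s) = \sum_r (\beta_r - c_r(s))^+$ where the sum is over currently pending requests. At the moment some time $s^\star$ (the current ``virtual deadline'') makes $\phi(s^\star)$ reach the threshold $c$, the algorithm issues an order and serves precisely the set $R^\star=\{r:\beta_r > c_r(s^\star)\}$. The virtual deadline $s^\star$ embodies the dynamic priority: as requests arrive and $\beta_r$'s grow, the maximizer $s^\star$ of $\phi$ can shift, and the serving set $R^\star$ can consequently change non-monotonically. Different orders may therefore select completely incomparable subsets of the pending requests, in sharp contrast to the static deadline-ordered priority of \cite{doi:10.1137/1.9781611978322.130}.

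The core of the competitive analysis is the observation that at the moment of triggering, $\sum_{r\in R^\star}\beta_r = c + \sum_{r\in R^\star}c_r(s^\star)$, so the duals of the served requests pay for both the joint ordering cost $c$ and the per-request service costs that the algorithm incurs. If the final $\{\beta_r\}$ formed a feasible dual solution, weak duality would immediately give $\mathrm{ALG}\le \mathrm{OPT}$; but the algorithm's duals are only feasible after scaling them by a constant, which is exactly what yields the competitive ratio of $4$ after summing the per-order charges and applying weak duality to the scaled duals.

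The main obstacle is precisely the dual feasibility argument. Requests not chosen at a given triggering time continue to grow their $\beta_r$, so the constraint $\phi(s^\star)\le c$ at an earlier virtual deadline $s^\star$ can be violated later by the extra dual mass accumulated by subsequently-served requests. The key structural lemma would bound, for any past time $s$, the total violation by a constant multiple of $c$, invoking the monotonicity of $h_r$ (for $t\le d_r$) and $b_r$ (for $t>d_r$) together with the triggering condition: a pending request not selected at $s^\star$ must satisfy $\beta_r\le c_r(s^\star)$, and one designs an invariant that charges its later dual growth against at most a constant number of subsequent orders of the algorithm. Absent a static priority order on the requests, the combinatorial bookkeeping of which requests are ``in'' and ``out'' of each virtual order is precisely where the analysis of \cite{doi:10.1137/1.9781611978322.130} breaks down, and the dynamic-priority framework above is what enables the general result.
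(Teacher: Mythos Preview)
Your primal-dual sketch is a genuinely different route from the paper, but it has a real gap: you yourself flag dual feasibility as ``the main obstacle'' and then do not resolve it. You write ``one designs an invariant that charges its later dual growth against at most a constant number of subsequent orders,'' but this is exactly where arbitrary request-dependent costs bite. A request $r$ skipped at some past virtual time $s^\star$ satisfies $\beta_r\le c_r(s^\star)$ only at that instant; its $\beta_r$ then keeps growing, and with heterogeneous $c_r(\cdot)$ there is no uniform bound on how many such requests can later overflow a single fixed constraint $\sum_r(\beta_r-c_r(s))^+\le c$. Absent an explicit lemma here you do not have a proof, and the paper in fact remarks that LP-based arguments ``do not seem to be effective when dealing with arbitrary holding and backlog costs.'' There is a second gap in the cost accounting: at a trigger you get $\sum_{r\in R^\star}\beta_r=c+\sum_{r\in R^\star}c_r(s^\star)$, but the algorithm must serve at the current clock time $t$, paying $c+\sum_{r\in R^\star}c_r(t)$; you never relate $c_r(t)$ to $c_r(s^\star)$, nor explain where the specific constant $4$ comes from.

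For comparison, the paper's algorithm and analysis are entirely combinatorial. It triggers a service when the backlog of already-overdue requests at the \emph{current} time reaches $c$, and then in a separate holding phase serves additional pre-deadline requests (of cost at most $c$) in order of their \emph{virtual deadlines with respect to the current time}, stopping once the total holding cost exceeds $2c$; hence each service costs at most $4c$. The analysis partitions $ALG$'s services into four sets $\Psi_A,\Psi_B,\Psi_C,\Psi_D$ positioned relative to $OPT$'s services, shows $|\Psi_C|\le|\Psi_{OPT}|$ by a direct injection, and proves that for every service in $\Psi_A\cup\Psi_B\cup\Psi_D$ the optimum pays at least $c$ on the requests $ALG$ served there. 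The virtual-deadline ordering is precisely what makes the delicate $\Psi_D$ case go through for heterogeneous functions, via an explicit case analysis rather than any dual-feasibility invariant.
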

We then extend this to the general multi-item case with a worse competitive ratio.

\begin{theorem}
There exists a 16-competitive deterministic polynomial-time algorithm for Online multi-item JRP with backlog and holding costs for arbitrary (and request-dependent) monotone backlog and holding functions.
\end{theorem}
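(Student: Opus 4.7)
The plan is to obtain the multi-item algorithm by nesting the single-item 4-competitive algorithm from the previous theorem at two levels, losing a factor of $4$ at each level. Let $\mathcal{A}$ denote that single-item algorithm.

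At the \emph{item level}, I would run $\mathcal{A}$ independently for every item $v$ on the sub-stream of requests for $v$, using the per-item cost $c(v)$ as the ordering cost and ignoring the joint cost $c(r)$ entirely. This produces a set $T_v \subseteq \reals_{\ge 0}$ of tentative order times for item $v$, and by the single-item guarantee the item-level cost (one $c(v)$ per tentative order, plus holding and backlog paid on $v$'s requests) is at most $4 \cdot \optsin{v}$, where $\optsin{v}$ is the optimum of the single-item instance for $v$ alone with ordering cost $c(v)$. Since the restriction of any multi-item schedule to item $v$ is feasible for that single-item instance, one immediately gets $\sum_v \optsin{v} \le OPT$, so the total item-level cost is bounded by $4 \cdot OPT$.

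At the \emph{aggregation level}, I would turn tentative orders into real joint orders. Each pair $(v,t)$ with $t \in T_v$ becomes a meta-request at time $t$ that must eventually be served by some joint order incurring cost $c(r)$. To permit the aggregation step to delay a tentative order from time $t$ to some later time $t' \ge t$, I would attach to this meta-request the monotone non-decreasing backlog function
\[
g_{v,t}(t') \;=\; \sum_{r \in R_v^t}\bigl(\mathrm{cost}_r(t') - \mathrm{cost}_r(t)\bigr),
\]
where $R_v^t$ is the set of original $v$-requests that $\mathcal{A}$ assigns to its tentative order at time $t$. The function $g_{v,t}$ is fully determined at time $t$, so the meta-instance is itself a valid online single-item JRP instance (monotone backlog, zero holding, ordering cost $c(r)$), and running $\mathcal{A}$ on it produces an aggregation-level cost of at most $4 \cdot \text{meta-}OPT$. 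Crucially, the sum of item-level cost and aggregation-level cost telescopes to the \emph{actual} cost of my algorithm, because the $g_{v,t}$ terms are precisely the extra holding/backlog beyond what the item-level analysis already charged for the batch $R_v^t$.

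To close the argument I would bound $\text{meta-}OPT \le c \cdot OPT$ for a small constant $c$ (matching $16 = 4 \cdot 4$) by building a feasible meta-schedule from $OPT$: for every tentative order $(v,t)$, route its meta-request to a nearby joint order of $OPT$ that already contains $v$; the $c(r)$ portion of the meta-cost matches exactly the joint-order cost of $OPT$, and the $g_{v,t}$ portion is charged against the holding/backlog that $OPT$ pays on the requests in $R_v^t$ via a standard prefix/interval charging. The main obstacle will be exactly this last step: the meta-requests $T_v$ are generated by $\mathcal{A}$'s decisions rather than by $OPT$'s, so matching them to $OPT$'s joint orders without double-charging holding/backlog across different items, and controlling the resulting constant so as to obtain $16$ rather than a larger factor, will require a careful simultaneous covering argument across all items.
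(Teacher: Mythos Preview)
Your two-level nesting idea is appealingly modular, but it is \emph{not} the route the paper takes, and the acknowledged obstacle at the end is a genuine gap rather than a bookkeeping detail.  Two concrete problems:

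\textbf{(i) The backlog functions $g_{v,t}$ need not be monotone.}  The batch $R_v^t$ contains holding-phase requests that $\mathcal{A}$ picked up at time $t$; for such a request $r$ the term $\mathrm{cost}_r(t')-\mathrm{cost}_r(t)$ is \emph{negative and decreasing} on $[t,d_r]$.  If the backlog-phase requests in $R_v^t$ have flat backlog (a step to $c(v)$ and then constant), then $g_{v,t}$ strictly decreases right after $t$, so the meta-instance is not a make-to-order instance and you cannot invoke $\mathcal{A}$ on it.  Replacing $g_{v,t}$ by a monotone envelope destroys the telescoping identity you rely on.

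\textbf{(ii) The ``standard prefix/interval charging'' for $\text{meta-}OPT\le c\cdot OPT$ does not exist here.}  Route a meta-request $(v,t)$ to $OPT$'s first joint order at time $\tau\ge t$ and look at a request $r\in R_v^t$ that $OPT$ served at some time $t_r^{OPT}<t$.  Then $\mathrm{cost}_r(\tau)$ is a backlog value taken at a time that $OPT$ never visits, and there is no monotonicity relation tying it to $c_r^{OPT}$ (which may even be a holding cost).  With request-dependent functions this gap can be made arbitrarily large, so the charging must somehow use the \emph{internal structure} of $\mathcal{A}$'s batching --- exactly the kind of structure the paper builds in by hand.  The paper does \emph{not} nest the single-item algorithm at all: it runs one integrated procedure with four phases (mature backlog, premature backlog via a new $\widehat{mature}_t(v)$ threshold, local holding, global holding), partitions its services into six classes $\Psi_A,\dots,\Psi_H$, and for every non-$\Psi_C$ service directly exhibits $c(r)$ or $c(v)$ worth of $OPT$ cost on the requests served there.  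The constant $16$ comes from the per-service cost bound $4\bigl(c(r)+\sum_{v\in V_1(S)}c(v)+\sum_{v\in V_2(S)}c(v)\bigr)$ together with those charges, not from a product $4\times 4$ of two black-box reductions.  To make your plan work you would essentially have to redo that per-request charging inside the meta-OPT bound, at which point the modularity is lost.
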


The algorithms use some basic ingredients that were used in previous algorithms but has some major differences since previous algorithms for this problem are known to be non-constant competitive. For a single item, the algorithm decides to serve a request once the total backlog cost of requests (that reached their deadlines) reaches the ordering cost $c(r)$. Then, previous algorithms (for identical functions) always prefer to serve requests with earlier deadlines and the only question is where  to cut the prefix. In contrast to that, our algorithm has a dynamic priority over the requests that may change an arbitrary number of times, and hence the subset of served request may be arbitrary. Specifically, we define a virtual deadline for each request that changes over time (while the deadline remains fixed). In particular, the virtual deadline of the requests are decreasing over time in a nonuniform request-dependent way. The priority of which requests to serve depends on these dynamic virtual deadlines.

For the multi-item case, the algorithm is more complicated. First, services are triggered when the backlog delay of requests on a subset of elements becomes large enough. Then the triggered elements are the ones that suffer a total backlog delay which is larger than the item cost $c(v)$. All pending requests in their backlog period are served. Moreover, at this opportunity more items should be served (Premature Backlog Phase) and one needs to decide which. In previous algorithms, the idea was to forward the time and see which elements will be triggered first (and cut the prefix at a given cost). In contrast, an important ingredient that is crucial to our algorithm is not to let all requests contributed equally. Specifically, when we forward the time from the current time $t_1$ to a time $t_2$, the contribution of each request to trigger an element at $t_2$ is done only by requests whose current virtual deadline (at time $t_1$) is smaller then $t_2$.
The above delicate definition is captured by the variable $\hat{mature}$ which is different from the classical and standard variable $mature$ that has been used previously. Finally, we serve requests of triggered items in their holding period similarly to a single item both Local Holding Phase (per triggered item separately) and in Global Holding Phase (globally among all triggered items).

\subsection{Previous Work}

We start with the online version of the problem.
As mentioned before for make-to-stock JRP version, Bienkowski et al. \cite{bienkowski2014better}
designed a 2-competitive online algorithm and prove that it has an optimal competitive ratio.
For online make-to-order JRP Brito et al. \cite{brito2012competitive} gave a 5-competitive algorithm. Then, Buchbinder et al. \cite{buchbinder2013online} gave a 3-competitive algorithm for this problem and proved a lower bound of 2.64 for this problem. The lower bound was later improved to 2.754 in \cite{bienkowski2014better}. These lower bounds apply even for the case of linear delay functions.

For the online version of the single-item make-to-order JRP (which is also called Dynamic TCP acknowledgment), Dooly et al \cite{dooly1998tcp,dooly2001line} designed a 2-competitive algorithm, and showed  a matching lower bound for any deterministic algorithm. 
Seiden \cite{seiden2000guessing} showed a $e/(e-1)$ lower bound in the randomized online algorithm. Later, Karlin et al. \cite{karlin2001dynamic} designed a randomized algorithm which achieves the optimal $e/(e-1)$ competitive ratio.

As discussed before, the only paper that considered both holding cost and backlog cost is the seminal paper of Moseley, Niaparast, and Ravi \cite{doi:10.1137/1.9781611978322.130}. However, it assumes that all holding function of all requests are identical, as well as that all backlog delay functions of all requests are identical.
%As we mentioned before, the single-item variant of the problem with just holding was not studied before in the online setting.

The offline versions of the two variant of JRP (make-to-stock and make-to-order) are known to be strongly NP-hard \cite{arkin1989computational,becchetti2009latency,nonner2009approximating}.
There is a sequence of papers that provide below two approximation starting with 
Levi, Roundy and Shmoys \cite{LRS06} who used a primal-dual approach for both holding and backlog costs. Then Levi et al. \cite{levi2008constant,levi2006improved}  improved the bound to 1.8 and later Bienkowski et al. \cite{bienkowski2014better} further improved the approximation ratio to 1.791. 

There are also results for special cases: for zero holding costs, the ratio was reduced to $5/3$ by Nonner and Souza \cite{nonner2009approximating} and then to 1.574 by Bienkowski et al. \cite{bienkowski2015approximation}.
The special case of the make-to-order JRP with a single item (Dynamic TCP Acknowledgement) can be solved with dynamic programming\cite{dooly1998tcp,dooly2001line}.

A more general problem to the JRP is the Online Aggregation problems \cite{chrobak2014online}, which was considered in several papers. Here we are given multi-level tree 
(JRP can be viewed as two-level tree). Requests appears on the leaves and services is done with rooted subtree. This problem has been considered only with backlog (waiting) cost. 
Bienkowski et al. \cite{bienkowski2020online} provided a $O(D^4 2^D)$-competitive algorithm where $D$ is the depth of the tree. 
This ratio is later improved to $O(D^2)$ by Azar and Touitou \cite{azar2019general}. 
For the special case of a path, Bienkowski et al. \cite{bienkowski2013online}
showed that the competitive ratio of this problem is between 3.618 and 5.
There are many more additional results (see \cite{bienkowski2021new,bosman2020improved,buchbinder2017depth,cheung2016submodular,FNR14,mari2024online,mcmahan2021d,nagarajan2016approximation}).

A more general problem is called Online Service with Delay (OSD), first introduced by Azar et al. \cite{azar2021online} on arbitrary metric spaces. In this problem, 
a server can move in the graph to serve the requests and the goal is to minimize the sum of the distance traveled by the server and delay cost. They gave $O(\log^4 n)$-competitive algorithm where n is the number of nodes in the graph.
This result was subsequently improved to $O(\log^2 n)$ by Azar and Touitou \cite{azar2019general} and later by Touitou \cite{touitou2023improved} to $O(\log n)$ which is the best currently known competitive ratio for this problem.

\section{Preliminaries}
\label{section_preliminaries}

%In this section we define notations which are used in both section  \ref{section_single_item} which deals with the single item problem and section \ref{section_multi_item} which deals with the multi items problem.

We denote by $ALG$ the online algorithm we are currently analyzing, which is Algorithm \ref{hold_back_single_item.alg} in section \ref{section_single_item} (which deals with the single item problem) and Algorithm \ref{hold_back_multi_item.alg} in section \ref{section_multi_item} (which deals with the multi items problem). We denote by $OPT$ the optimal algorithm for the problem.

In the single item problem the service cost is denoted by $c$.

In the multi items problem we denote by $V=\{v_1,v_2,...,v_n\}$ the set of items. The cost of serving the item $1\leq i\leq n$ is $c(v_i)$ and the cost of the root is $c(r)$. For a service $S$, regardless if it was triggered by $ALG$ or $OPT$, we denote by $V(S)\subseteq V$ the set of items that were served in the service $S$. Therefore the service cost of the service $S$ is $c(r)+\sum_{v\in V(S)}c(v)$.

In the single item problem each request $p$ is defined by $(a_p,d_p,c_p)$ and in the multi items problem each request $p$ is defined by $p=(v_p,a_p,d_p,c_p)$ where:
\begin{itemize}
\item $v_p\in V$ is the item being requested.

\item $a_p$ is the arrival time of the request $p$, i.e. this is the time when the online algorithm is exposed to the request $p$. In addition, both $ALG$ and $OPT$ cannot serve the request $p$ prior to this time. 

\item $d_p$ is the deadline of the request $p$ (where $a_p\leq d_p$). 

\item $c_p:[a_p,\infty)\rightarrow R_{\geq 0}$ is the cost of $p$. $c_p(t)$ is the cost of the request $p$ at time $t$. The cost function $c_p$ is monotone-non increasing in the range $[a_p,d_p]$, in which we refer to it as holding cost, and is monotone-non decreasing in the range $[d_p,\infty)$, in which we refer to it as backlog cost.
Without loss of generality, we assume that each request $p$ is exposed at time $a_p$ (if it is exposed before, we can ignore it until time $a_p$).

\end{itemize}

We define the following:
\begin{definition}
For a request $p$ and time $t$ such that $t\leq d_p$ we define the \textbf{virtual deadline of $p$ with respect to $t$}, which we denote as $\hat{d_p^t}$, as the time after the deadline of $p$ where its backlog cost will reach the holding cost at time $t$, i.e. $\hat{d_p^t}\geq d_p$ is defined such that $c_p(\hat{d_p^t})=c_p(t)$.
\end{definition}

An example of the term virtual deadline appears in figure \ref{fig.virtualDeadlineFig}. Note that since both holding cost and backlog cost functions are monotone, we may assume that they are continuous (otherwise perturb them and make then continuous).

\begin{figure}
\centering
\includegraphics[width= 8cm]{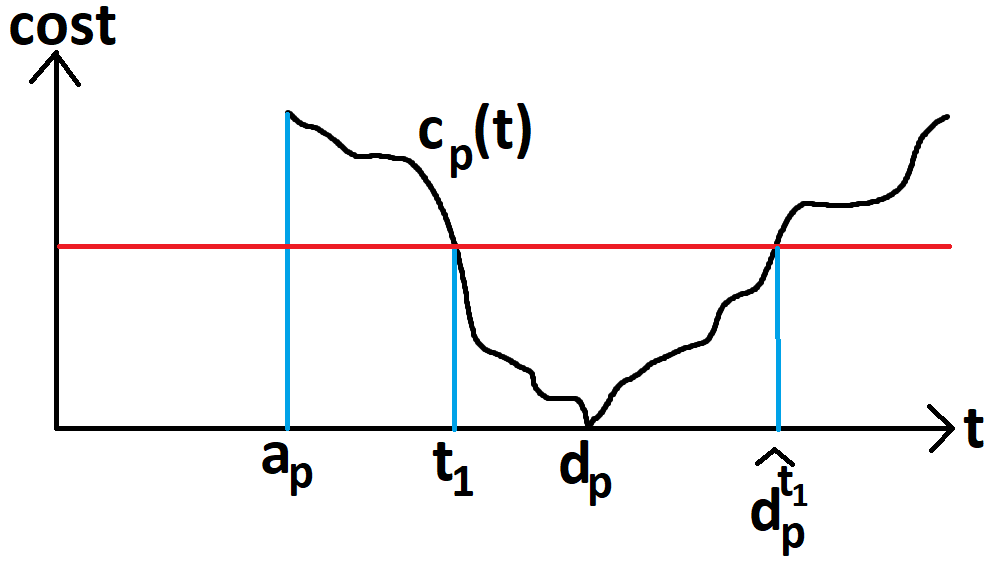}
\caption{The virtual deadline of the request $p$ at time $t_1$ is the time $\hat{d_p^{t_1}}$. The red horizontal line is the line $cost=c_p(t_1)$, or $cost=c_p(\hat{d_p^{t_1}})$, which is the same.}
\label{fig.virtualDeadlineFig}
\end{figure}

For a request $p$ we denote by $t_p^{ALG}$ the time when $ALG$ served $p$. Recall that $t_p^{ALG}\geq a_p$. We denote by $c_p^{ALG}$ the holding cost that $ALG$ paid for serving $p$ if $t_p^{ALG}\leq d_p$ and the backlog cost that $ALG$ paid for serving $p$ if $t_p^{ALG}\geq d_p$. In the same way we define $t_p^{OPT}$ (where $t_p^{OPT}\geq a_p)$ and $c_p^{OPT}$. It can be seen that we have $$c_p^{ALG}=c_p(t_p^{ALG}),$$ $$  c_p^{OPT}=c_p(t_p^{OPT}).$$

We denote by $\Psi_{ALG}$ ($\Psi_{OPT}$) the set of services that $ALG$ ($OPT$) have triggered when analyzing a (known from context) sequence of requests $\sigma$. 
For an item $v\in V$, we denote by $\Psi_{OPT}^{v}$ all the services of $OPT$ which included the item $v$, i.e. $$\Psi_{OPT}^v=\{S\in\Psi_{OPT}:v\in V(S)\}.$$

We denote by $ALG(\sigma)$ ($OPT(\sigma)$) the cost $ALG$ ($OPT$) pay for serving the sequence $\sigma$ and have that $ALG$ is $\alpha$-competitive if for every sequence of requests $\sigma$ we have $$ALG(\sigma)\leq \alpha\cdot OPT(\sigma).$$ Recall that the cost of an algorithm in the single item problem is the number of services it triggers multiplied by $c$, in addition to the holding and backlog costs the algorithm pays for the requests. In the multi items problem the cost of an algorithm is the total costs of the services it makes, where the cost of a service $S$ is $c(r)+\sum_{v\in V(S)}c(v)$, in addition to the holding and backlog costs the algorithm pays for the requests.

We have that $OPT$ does not trigger more than one service at the same time (otherwise combine the services and reduce the cost, contradicting the fact that $OPT$ is an optimal algorithm). The algorithms we show for the single item problem and for the multi items problem also have the property that they do not trigger more than one service at the same time. 

Therefore, in both the single item and multi items, we can consider the sequence of services that both $ALG$ and $OPT$ performed (ordered according to the time they performed them). If both $ALG$ and $OPT$ triggered one service (each) at the same time, then their order in the sequence can be chosen arbitrary.
%when considering the sequence of services. 

\section{Holding and Backlog Costs - Single Item}
\label{section_single_item}

\subsection{Algorithm}

To describe our algorithm, we need some definitions. Recall that for time $t$ we define the "Virtual deadline" of request $p$ with respect to time $t\leq d_p$ as the time $\hat{d_p^t}\geq d_p$ for which $c_p(\hat{d_p^t})=c_p(t)$. Note that the virtual deadline is changing over time. Moreover, the order of the virtual deadlines of different requests may change over time.

Similar to previous algorithms, a service is triggered once the sum of the backlog costs becomes $c$. The main difference between our algorithm and the algorithm of \cite{doi:10.1137/1.9781611978322.130} is that in the holding phase we serve additional requests according to their virtual deadlines with respect to the current time $t$ when the service is being triggered rather than the actual deadlines. We also ignore requests which cost more than $c$ in the holding phase. %Another minor difference is that 
In each service, we include additional requests as long as the sum of their holding costs is at most $2\cdot c$.
%rather than $c$, which adds $1$ to our competitive ratio ($4$ instead of $3$, which is the competitive ratio of \cite{doi:10.1137/1.9781611978322.130} for the single item problem). 
However, the analysis is very different because the usage of linear programs does not seem to be effective when dealing with arbitrary holding and backlog costs.

\begin{algorithm}[H]
\caption{Single-Item Online Joint Replenishment Problem}
\label{hold_back_single_item.alg}
%\textbf{Upon} 
\textbf{Event triggering.} Let $t$ be the time in which the sum of the backlog costs of unsatisfied overdue requests (denote by $T$) becomes $c$ \textbf{do}:

\Indp 

\textbf{Backlog Phase.} Trigger a service $S$ at time $t$ and satisfy all requests in $T$.
%all of the unsatisfied requests with deadlines less or equal to $t$ using this service.

\textbf{Holding phase.} Consider all the remaining active unsatisfied requests with costs of at most $c$ at time $t$ in order of their \textbf{virtual} deadlines with respect to $t$. Add them one by one to $S$, as long as the sum of their holding costs at time $t$ is at most $2\cdot c$.

\Indm
\end{algorithm}

Throughout this section, Algorithm \ref{hold_back_single_item.alg} will be denoted by $ALG$.

\subsection{Analysis}

Henceforth, in this section our target is to prove the following theorem.

\begin{theorem}
\label{hold_back_single_item.theorem}
Algorithm \ref{hold_back_single_item.alg} is $4$-competitive.
\end{theorem}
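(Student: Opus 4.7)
The plan is to split $ALG$'s total cost into a per-service cost and a count of services, and then bound each piece. For the per-service part, each service triggered by $ALG$ has cost at most $4c$: the ordering fee contributes $c$, the backlog phase contributes exactly $c$ by the triggering rule (a service is triggered precisely when the overdue backlog mass reaches $c$), and the holding phase contributes at most $2c$ by the algorithm's stopping condition. Writing $k$ for the total number of services triggered by $ALG$, this yields $ALG\leq 4kc$, so the theorem reduces to proving $kc\leq OPT$.

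To prove $kc\leq OPT$, I would set up a charging argument in which each $ALG$ service $S_i$, triggered at time $t_i$ with backlog set $T_i$ satisfying $\sum_{p\in T_i}c_p(t_i)=c$, is assigned at least $c$ units of $OPT$'s cost in a way that no piece of $OPT$'s cost is overcharged by more than a constant factor. A first useful observation is that the sets $T_i$ are pairwise disjoint, because once $ALG$ serves a request it cannot reappear in a later backlog set. For every $p\in T_i$ that $OPT$ serves at a time $t_p^{OPT}\geq t_i$, monotonicity of backlog on $[d_p,\infty)$ gives $c_p^{OPT}=c_p(t_p^{OPT})\geq c_p(t_i)$, and these contributions sum cleanly over $i$; if they already account for mass $c$ at $S_i$, we charge $S_i$ directly to them.

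The main obstacle is the complementary case, in which much of the backlog mass of $T_i$ comes from requests that $OPT$ served strictly before $t_i$. In that regime I would charge $S_i$ to an earlier $OPT$ service using its fixed fee $c$, and the delicate point is to argue that any single $OPT$ service is named by only a constant number of $ALG$ services. This is exactly where the dynamic virtual deadlines $\hat{d_p^t}$---in contrast to the static deadline priorities used in \cite{doi:10.1137/1.9781611978322.130}---must do the work: the ordering of the holding phase by virtual deadline, together with the $c$-cap on an individual request's holding cost and the $2c$-cap on the aggregate holding mass, tie $ALG$'s admission decisions to time windows controlled by virtual deadlines. I expect the bulk of the proof to consist of picking, for each such $S_i$, a canonical earlier $OPT$ service (for instance, the latest $OPT$ service that contains a heavy request of $T_i$) and then using the monotonicity properties of the virtual deadlines together with the disjointness of the $T_i$'s to bound how many $S_i$'s can name the same $OPT$ service. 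Once this case analysis closes, combining it with the first case and with the per-service cost bound above yields the claimed $4$-competitive ratio.
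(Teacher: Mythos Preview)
Your opening moves are exactly right: each $ALG$ service costs at most $4c$, so it suffices to show $kc\le OPT$; and for any $p\in T_i$ that $OPT$ serves at time $\ge t_i$, backlog monotonicity gives $c_p^{OPT}\ge c_p(t_i)$, with disjointness of the $T_i$'s making the sum clean. The gap is in your plan for the complementary case. You propose to charge each ``hard'' $S_i$ to the fixed fee $c$ of a canonical earlier $OPT$ service and then bound the fan-in. But the number of $ALG$ services between two consecutive $OPT$ services need not be bounded by any constant, and nothing in your sketch actually forces such a bound; the virtual-deadline mechanism does not control fan-in in this sense.

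The paper's proof does \emph{not} try to bound fan-in. It partitions $\Psi_{ALG}$ into four types $\Psi_A,\Psi_B,\Psi_C,\Psi_D$ relative to the $OPT$ services. Only one $ALG$ service per $OPT$ gap---the $\Psi_C$ service---is charged to $OPT$'s fixed fee; every other $ALG$ service is charged to $OPT$'s \emph{request} costs. For $\Psi_A$ and $\Psi_D$, the charge goes through the backlog set $T_i$ as you suggest, with the $\Psi_D$ case needing the virtual-deadline trick: the preceding $\Psi_C$ service either had holding mass $\le c$ (so it would have served $p$, contradiction) or served some $q$ with $\hat d_q^{t_0}\ge t_1$, which forces $\hat d_p^{t_0}\ge t_1\ge t_p^{ALG}$ and hence $c_p^{ALG}\le c_p(t_0)\le c_p^{OPT}$. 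The piece you are missing entirely is $\Psi_B$: these services are charged not via $T_i$ but via the \emph{holding-phase} requests. By definition of $\Psi_B$, the holding phase of such a service has total mass $\ge c$ and every served request $p$ has $\hat d_p^{t_0}\le t_1$ (the next $OPT$ time); whichever side of $t_0$ $OPT$ serves $p$ on, one then gets $c_p^{OPT}\ge c_p^{ALG}$. Without invoking the holding-phase requests as a separate source of $OPT$ cost, the argument cannot close.
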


Consider the sequence $\sigma$ of requests and the sequence of services that both $ALG$ and $OPT$ triggered, ordered according to the time they were triggered.
%, we define the following.
For the purpose of the following definition, we add one dummy service of $OPT$ that does not serve any request after all the services of $ALG$ and $OPT$ have been triggered. We do not charge $OPT$ for the cost of that dummy service. We partition $\Psi_{ALG}$ to the following four disjoint sets $\Psi_A,\Psi_B,\Psi_C,\Psi_D$ as follows:

%Observe that $ALG$ never triggers more than one service at the same time. $OPT$ also does not do that (otherwise combine the services and reduce the cost).

%\footnote{If $OPT$ triggers more than one service at the same time $t$, we could define an algorithm $\hat{OPT}$ that will trigger only one service at time $t$ and use it to serve all the requests that $OPT$ served at time $t$. We have that $\hat{OPT}$ behaves the same as $OPT$, pays the same backlog cost and holding cost but pays less service cost, so in total $\hat{OPT}$ would pay less than $OPT$, thus a contradiction.}. 
%It is possible that both $ALG$ and $OPT$ will trigger one service (each) at the same time. This does not prevent us from
%Consider the sequence of services that both $ALG$ and $OPT$ performed (ordered according to the time they performed them). If both $ALG$ and $OPT$ triggered one service (each) at the same time, then their order in the sequence can be chosen arbitrary when considering the sequence of services. 

%We partition the set of services that $ALG$ performed to different types as follows.

%Now that we can consider the sequence of services that both $ALG$ and $OPT$ triggered through the time horizon, we are ready to partition the set of services that $ALG$ performed to different types as follows.

\begin{definition}
\label{definition_partition_psi_alg_single_item} Firstly,
\begin{itemize}
\item Let $\Psi_A$ be all the services in $\Psi_{ALG}$ that occurred before the first service that $OPT$ performed.
\end{itemize}
Given two consecutive services $S_i,S_{i+1}\in \Psi_{OPT}$ triggered at times $t_i$ and $t_{i+1}$ (where $t_i < t_{i+1}$) let $J_i$ be the first service that $ALG$ performed between $S_i$ and $S_{i+1}$ (triggered at time $y_i$) such that $J_i$ served at least one request in the holding phase with virtual deadline with respect to time $y_i$ of at least $t_{i+1}$ \textbf{or} the sum of all the holding costs of the requests $J_i$ served in its holding phase was of at most $c$. 
\begin{itemize}
\item Let $\Psi_B$ be union over $i$ of all the services of $ALG$ between $S_i$ and $J_i$ (excluding).
    If the service $J_i$ as defined above does not exist, then all the services of $ALG$ between $S_i$ and $S_{i+1}$ are in $\Psi_B$.
\item Let $\Psi_C$ be union over $i$ of all the services $J_i$ .
\item Let $\Psi_D$ be union over $i$ of all the services of $ALG$ between $J_i$ and $S_{i+1}$ (excluding).
\end{itemize}
\end{definition}

An example of a possible partition appears in figure \ref{fig.psiABCDsingleItem}.

Note that $ALG(\sigma) = c\cdot|\Psi_{ALG}|+\sum_{p}c_p^{ALG}$ and $OPT(\sigma) = c\cdot|\Psi_{OPT}|+\sum_{p}c_p^{OPT}$.

Observe that for each service $S\in\Psi_{ALG}$ we have that $ALG$ pays $c$ service cost, $c$ backlog cost and at most $2c$ holding cost, which sums up to a total cost of at most $4c$. This yields the following observation.
\begin{observation}
\label{hold_back.lemma.service_cost_alg}
    The total cost $ALG$ paid is at most $4$ times the number of services it made: $$ALG(\sigma)\leq 4\cdot c\cdot |\Psi_{ALG}|.$$
\end{observation}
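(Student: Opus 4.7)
The plan is to show that each service triggered by $ALG$ contributes at most $4c$ to the total cost $ALG(\sigma)$, and then sum over all services in $\Psi_{ALG}$. I would begin by decomposing $ALG(\sigma) = c\cdot|\Psi_{ALG}| + \sum_p c_p^{ALG}$ further, by attributing every term $c_p^{ALG}$ to the unique service $S \in \Psi_{ALG}$ that served $p$. So it suffices to prove, for each individual $S \in \Psi_{ALG}$, that $(\text{service cost of } S) + \sum_{p \text{ served by } S} c_p^{ALG} \le 4c$.

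Fix such a service $S$ triggered at time $t$. I would split the requests served by $S$ into two groups according to the algorithm's structure: requests served in the Backlog Phase, for which $c_p^{ALG} = c_p(t)$ is a backlog cost, and requests served in the Holding Phase, for which $c_p^{ALG} = c_p(t)$ is a holding cost. The service cost of $S$ is exactly $c$ by definition.

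For the backlog contribution, the event-triggering rule says precisely that $S$ is triggered at the moment the sum of backlog costs of unsatisfied overdue requests first reaches $c$; since backlog functions are monotone non-decreasing on $[d_p, \infty)$ and we are assuming continuity (as noted in Preliminaries), this sum equals exactly $c$ at the trigger moment, so the total backlog cost $ALG$ pays in $S$'s Backlog Phase is $c$. For the holding contribution, the Holding Phase rule explicitly adds requests one at a time while maintaining that the sum of their holding costs at time $t$ is at most $2c$, so the total holding cost $ALG$ pays in $S$'s Holding Phase is at most $2c$.

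Combining, each $S \in \Psi_{ALG}$ contributes at most $c + c + 2c = 4c$ to $ALG(\sigma)$; summing over $|\Psi_{ALG}|$ services yields the claim. The only subtle point, which I would address briefly, is justifying that the backlog trigger sum equals (rather than overshoots) $c$; this follows from the continuity assumption on the cost functions stated in the Preliminaries, so there is no genuine obstacle.
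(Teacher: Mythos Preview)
Your proposal is correct and follows essentially the same approach as the paper: the paper also observes that each service $S\in\Psi_{ALG}$ incurs service cost $c$, backlog cost $c$, and holding cost at most $2c$, for a total of at most $4c$ per service. Your write-up simply fills in a bit more detail (the continuity justification for the backlog sum equaling exactly $c$) than the paper's one-line remark.
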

Note that we can consider a function that maps each service $S\in\Psi_C$ to the last service of $OPT$ that occurred before it, and this function is single-valued. Also note that the dummy service of $OPT$ which we added after all the services of $ALG$ and $OPT$ in definition \ref{definition_partition_psi_alg_single_item} does not have any service of $\Psi_C$ that is mapped to it since it was "triggered" by $OPT$ after all the services of $ALG$ and $OPT$. An illustration appears in Figure \ref{fig.psiABCDsingleItem}. This implies the following observation:
\begin{observation}
\label{hold_back.lemma.type_c}
    We have that $$|\Psi_C|\leq |\Psi_{OPT}|.$$
\end{observation}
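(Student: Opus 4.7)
The plan is to exhibit an explicit injection from $\Psi_C$ into $\Psi_{OPT}$ (excluding the dummy service), which immediately gives the desired cardinality bound. By Definition \ref{definition_partition_psi_alg_single_item}, every element of $\Psi_C$ is of the form $J_i$, and each $J_i$ is uniquely associated with an index $i$ indexing a pair of consecutive $OPT$ services $(S_i, S_{i+1})$. The map I would use sends each $J_i \in \Psi_C$ to $S_i$, i.e.\ to the last $OPT$ service that occurred strictly before $J_i$ in the global time order.

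First I would check that this map is well-defined and single-valued: since $J_i$ was defined as a specific service of $ALG$ triggered at a time $y_i \in (t_i, t_{i+1})$, strictly between $S_i$ and $S_{i+1}$, the last $OPT$ service preceding $J_i$ in time is unambiguously $S_i$. Next I would verify injectivity: if $J_i$ and $J_{i'}$ map to the same $OPT$ service, then $S_i = S_{i'}$, which forces $i = i'$, and hence $J_i = J_{i'}$.

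Finally, I would argue that the image of the map is disjoint from the dummy $OPT$ service introduced in Definition \ref{definition_partition_psi_alg_single_item}: the dummy was placed after every service of $ALG$ and of $OPT$, so no $ALG$ service can have the dummy as its immediately preceding $OPT$ service. Consequently, every $S_i$ in the image is a genuine (non-dummy) $OPT$ service, and injectivity gives $|\Psi_C| \leq |\Psi_{OPT}|$.

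I do not expect any genuine obstacle here; the statement is essentially a bookkeeping consequence of how the partition $\Psi_A, \Psi_B, \Psi_C, \Psi_D$ was set up, with the role of the dummy service being precisely to ensure that the last real $OPT$ service also sits in a well-defined consecutive pair so that the argument is uniform across all $i$.
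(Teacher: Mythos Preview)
Your proposal is correct and matches the paper's own argument essentially verbatim: the paper also defines the map sending each $J_i\in\Psi_C$ to the last $OPT$ service preceding it, observes that distinct $J_i$'s go to distinct $S_i$'s, and notes that the dummy service is never in the image. There is nothing to add.
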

\begin{figure}[t]
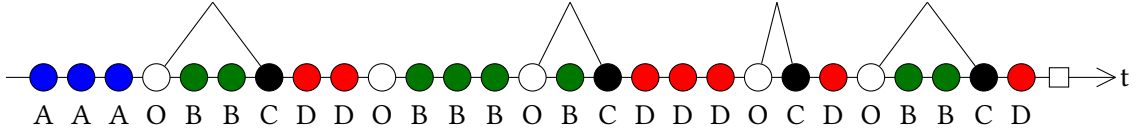

\centering
\ctikzfig{figs/psiABCDsingleItem}
\caption{An example of a possible sequence of services of $ALG$ and $OPT$ during the time horizon. The white dots represent services of $OPT$. The white rectangle in the right represents the dummy service of $OPT$ from definition \ref{definition_partition_psi_alg_single_item}. The other dots represent services of $ALG$: the blue represent services in $\Psi_A$, the green represent services in $\Psi_B$, the black represent services in $\Psi_C$ and the red represent services in $\Psi_D$. The edges represents the one-way function from $\Psi_C$ to $\Psi_{OPT}$ which is the reason for Observation \ref{hold_back.lemma.type_c}. Note that between the second and third services of $OPT$ there are only services of $\Psi_B$, as a result the second service of $OPT$ does not have any service in $\Psi_C$ that the function mentioned above maps to it.}
\label{fig.psiABCDsingleItem}
\end{figure}
Next, we show that for each of the services $S\in\Psi_{ALG}\setminus\Psi_C$ that the total cost $OPT$ pays for the requests $ALG$ served in $S$ is at least $c$. This, combined with observations \ref{hold_back.lemma.type_c} and \ref{hold_back.lemma.service_cost_alg} would imply a competitive ratio of $4$.
We claim the following:
\begin{lemma}
\label{hold_back.lemma.type}
    For each service $S\in\Psi_{ALG}\setminus\Psi_C$ the total cost $OPT$ pays for the requests $ALG$ served in the service $S$ is at least $c$.
\end{lemma}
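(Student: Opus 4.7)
The plan is to prove, for each such $S$ triggered at time $y$, the stronger pointwise bound $c_p^{OPT} \ge c_p(y)$ for every request $p$ in a well-chosen subset of the requests $ALG$ served in $S$; writing $T$ for the backlog set of $S$ and $H$ for its holding set, summing this pointwise bound with the trigger equality $\sum_{p \in T} c_p(y) = c$ (when I charge to the backlog set) or with the strict inequality $\sum_{p \in H} c_p(y) > c$ (when I charge to the holding set, which will follow from $S \notin \Psi_C$) then yields the lemma. For $S \in \Psi_B$ and $S \in \Psi_D$ a shared ingredient is that $OPT$ has no service in the open interval $(t_i, t_{i+1})$, so $t_p^{OPT} \le t_i$ or $t_p^{OPT} \ge t_{i+1}$ and I will split accordingly.

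For $S \in \Psi_A$ every $p \in T$ is overdue at $y$ and $OPT$ serves $p$ at some $t_p^{OPT} \ge y \ge d_p$ in its backlog range, so $c_p^{OPT} \ge c_p(y)$ by monotonicity of the backlog cost. For $S \in \Psi_B$, since $S$ strictly precedes $J_i$ neither defining property of $J_i$ holds at $S$, which gives $\hat{d_p^y} < t_{i+1}$ for every $p \in H$ and a total holding cost exceeding $c$. For $p \in H$, if $t_p^{OPT} \le t_i$ then $t_p^{OPT} < y \le d_p$ lies in the holding range and non-increasingness gives $c_p(t_p^{OPT}) \ge c_p(y)$; if instead $t_p^{OPT} \ge t_{i+1} > \hat{d_p^y} \ge d_p$, then $OPT$ is past the virtual deadline and $c_p(t_p^{OPT}) \ge c_p(\hat{d_p^y}) = c_p(y)$.

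The main obstacle is the case $S \in \Psi_D$, where $S$ sits between $J_i$ and $S_{i+1}$. I will again use the backlog set $T$ and aim for the pointwise bound $c_p^{OPT} \ge c_p(y)$. Since $J_i$'s backlog phase at $y_i$ cleared every request overdue at $y_i$, every $p \in T$ has $d_p > y_i$. The case $t_p^{OPT} \ge t_{i+1}$ is immediate as above; the delicate subcase is $t_p^{OPT} \le t_i$, where $OPT$ pays a holding cost at a time strictly before $d_p$ and I must compare a holding value with a backlog value. My plan is to reduce it to proving $c_p(y_i) \ge c_p(y)$, since holding non-increasingness will then give $c_p(t_p^{OPT}) \ge c_p(y_i) \ge c_p(y)$.

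To establish $c_p(y_i) \ge c_p(y)$ I will case-split on which defining property $J_i$ satisfies. If $J_i$'s holding sum is at most $c$, the algorithm must have served every request eligible at $y_i$, because adding a request of cost at most $c$ to a running sum of at most $c$ never breaks the $2c$ budget; hence $p$ being unserved at $J_i$ forces either $c_p(y_i) > c$, which combined with $c_p(y) \le c$ (the backlog set $T$ sums to exactly $c$) yields $c_p(y_i) > c \ge c_p(y)$, or $a_p > y_i$, which contradicts $t_p^{OPT} \ge a_p > y_i > t_i$. If instead $J_i$ served some $q$ with $\hat{d_q^{y_i}} \ge t_{i+1}$, the ineligibility subcase is handled identically, while if $p$ is eligible at $J_i$ the greedy loop (processing in order of virtual deadline, stopping at the first overflow of the $2c$ budget) forces $\hat{d_p^{y_i}} \ge \hat{d_q^{y_i}} \ge t_{i+1} > y$, so the definition of the virtual deadline yields $c_p(y_i) = c_p(\hat{d_p^{y_i}}) \ge c_p(y)$ by backlog monotonicity. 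Summing the pointwise bound over $T$ then produces $\sum_{p \in T} c_p^{OPT} \ge c$.
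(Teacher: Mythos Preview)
Your proposal is correct and follows essentially the same approach as the paper: charge the backlog set for $\Psi_A$ and $\Psi_D$, charge the holding set for $\Psi_B$, and in all cases establish the pointwise bound $c_p^{OPT}\ge c_p(y)$ by splitting on whether $OPT$ serves $p$ before or after the enclosing pair of $OPT$ services, with the $\Psi_D$ case reduced to comparing the virtual deadline of $p$ at the preceding $\Psi_C$ service $J_i$ against the next $OPT$ service time. One small streamlining worth noting: in the $\Psi_D$ case the paper first disposes of any request with $c_p^{OPT}\ge c$ separately and then uses the hypothesis $c_p^{OPT}\le c$ to deduce that $p$ is eligible at $J_i$'s holding phase, whereas you instead use $c_p(y)\le c$ (which follows since the backlog set sums to exactly $c$) to handle the ineligibility subcase directly; this avoids the extra split but is otherwise the same argument.
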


%More precisely, we define the following lemmas:
\begin{comment}
\begin{lemma}
\label{hold_back.lemma.type_a}
    For each service $S\in\Psi_A$ we have that the total cost $OPT$ pays for the requests $ALG$ served in $S$ is at least $c$.
\end{lemma}
\begin{lemma}
\label{hold_back.lemma.type_b}
    For each service $S\in\Psi_B$ we have that the total cost $OPT$ pays for the requests $ALG$ served in $S$ is at least $c$.
\end{lemma}
\begin{lemma}
\label{hold_back.lemma.type_d}
    For each service $S\in\Psi_D$ we have that the total cost $OPT$ pays for the requests $ALG$ served in $S$ is at least $c$.
\end{lemma}
\begin{lemma}
\label{hold_back.lemma.type_e}
    For each service $S\in\Psi_E$ we have that the total cost $OPT$ pays for the requests $ALG$ served in $S$ is at least $c$.
\end{lemma}
\begin{lemma}
\label{hold_back.lemma.type_f}
    For each service $S\in\Psi_F$ we have that the total cost $OPT$ pays for the requests $ALG$ served in $S$ is at least $c$.
\end{lemma}
\end{comment}

%Lemma \ref{hold_back.lemma.type_a} will be proven in subsection \ref{section_holdBack_single_item_a}, lemma \ref{hold_back.lemma.type_b} will be proven in subsection \ref{section_holdBack_single_item_b}, lemma \ref{hold_back.lemma.type_d} will be proven in subsection \ref{section_holdBack_single_item_d}, lemma \ref{hold_back.lemma.type_e} will be proven in subsection \ref{section_holdBack_single_item_e} and lemma \ref{hold_back.lemma.type_f} will be proven in subsection \ref{section_holdBack_single_item_f}. 

Before proving Lemma \ref{hold_back.lemma.type} we show how to use it to prove Theorem \ref{hold_back_single_item.theorem}.

\begin{proof}[Proof of Theorem \ref{hold_back_single_item.theorem}]
Clearly, each request $p$ is served in at most one service in $\Psi_{ALG}\setminus\Psi_C$. By using lemma \ref{hold_back.lemma.type}
%\ref{hold_back.lemma.type_a}, \ref{hold_back.lemma.type_b}, \ref{hold_back.lemma.type_d}, \ref{hold_back.lemma.type_e} and \ref{hold_back.lemma.type_f} 
and summing for all the services in $\Psi_{ALG}\setminus\Psi_C$ we get that $$c\cdot |\Psi_{ALG}\setminus\Psi_C|\leq \sum_{p}c_p^{OPT}.$$
The above inequality together with observation \ref{hold_back.lemma.type_c} yield
\begin{align*}
OPT(\sigma) & = c\cdot|\Psi_{OPT}|+\sum_{p}c_p^{OPT} %\\
 \geq c\cdot|\Psi_C|+c\cdot|\Psi_{ALG}\setminus\Psi_C|=c\cdot |\Psi_{ALG}|
\end{align*}
and thus
$$ALG(\sigma)\leq 4\cdot c\cdot |\Psi_{ALG}|\leq 4\cdot OPT(\sigma)$$ where the first inequality is due to observation \ref{hold_back.lemma.service_cost_alg}.
\end{proof}

\subsection{Analyzing the services of type A}
\label{section_holdBack_single_item_a}

The goal of this subsection is to prove lemma \ref{hold_back.lemma.type} for the case $S\in\Psi_A$. First, we have the following lemma:

\begin{lemma}
\label{hold_back.lemma.type_a_single_request}
Let $S\in\Psi_A$ be a service and let $p$ be a request which $ALG$ served in $S$ during the backlog phase. Then $$c_p^{ALG}\leq c_p^{OPT}.$$
\end{lemma}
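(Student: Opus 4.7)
The plan is straightforward: exploit the definition of $\Psi_A$ (services of $ALG$ that precede $OPT$'s very first service) together with the fact that $p$ is served by $ALG$ in the \emph{backlog} phase, and then invoke monotonicity of the cost function beyond the deadline.

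First I would unpack the hypotheses. Let $t = t_p^{ALG}$ be the time at which $S$ is triggered. Since $p$ is served by $ALG$ in the backlog phase of $S$, the request is overdue at time $t$, i.e.\ $t \geq d_p$. Since $S \in \Psi_A$, by definition $S$ occurs strictly before the first service of $OPT$, so $OPT$ has not served $p$ by time $t$; consequently $t_p^{OPT} \geq t \geq d_p$.

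Next I would use the monotonicity of $c_p$ on $[d_p, \infty)$, which by the problem definition is non-decreasing (the backlog part of the cost function). Since both $t_p^{ALG}$ and $t_p^{OPT}$ lie in $[d_p,\infty)$ with $t_p^{OPT} \geq t_p^{ALG}$, we conclude
\[
c_p^{ALG} \;=\; c_p(t_p^{ALG}) \;\leq\; c_p(t_p^{OPT}) \;=\; c_p^{OPT},
\]
which is the claim. There is no real obstacle here; the only subtlety is making sure to note that $OPT$ really does eventually serve $p$ (so $t_p^{OPT}$ is well-defined and finite), which follows from the assumption that every request is served by any feasible algorithm, together with $a_p < t$ implying $p$ has already arrived before $S$.
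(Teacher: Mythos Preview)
Your proof is correct and follows exactly the same approach as the paper: use $S\in\Psi_A$ to get $t_p^{OPT}\geq t_p^{ALG}$, use the backlog-phase hypothesis to get $t_p^{ALG}\geq d_p$, and then apply monotonicity of $c_p$ on $[d_p,\infty)$.
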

\begin{proof}
    We have $d_p\leq t_p^{ALG} \leq t_p^{OPT}$ where the first inequality follows since the request $p$ was served by $ALG$ during the backlog phase of the service $S$, and the second inequality is since $ALG$ triggered $S$ before the first time $OPT$ triggered a service (due to $S\in\Psi_A$) so, in particular, $ALG$ served $p$ before $OPT$ served $p$. Due to the monotonicity of the backlog cost we have $c_p^{ALG}\leq c_p^{OPT}.$
\end{proof}

We now prove lemma \ref{hold_back.lemma.type} for the case $S\in\Psi_A$.

\begin{proof}[Proof of Lemma \ref{hold_back.lemma.type} (case $S\in\Psi_A$)]
Note that $ALG$ pays a total backlog cost of $c$ for the requests it served during the backlog phase of the service $S$. The lemma follows by summing for these requests and using lemma \ref{hold_back.lemma.type_a_single_request}.
\end{proof}

\subsection{Analyzing the services of type B}
\label{section_holdBack_single_item_b}

The goal of this subsection is to prove lemma \ref{hold_back.lemma.type} for the case $S\in\Psi_B$. In order to do that, we begin by presenting an observation which follows from the definition of $\Psi_B$. Let $t_0$ be the time $ALG$ triggered the service $S$ and let $t_1$ be the time of the next service triggered by $OPT$ after the service $S$. 

\begin{observation}
\label{hold_back.lemma.type_b_properties}
For each service $S\in\Psi_B$ we have:
\begin{enumerate}
\item For each request $p$ that was served during the holding phase of $S$ we have $\hat{d_p^{t_0}}\leq t_1$.
\item The sum of all the holding costs of the requests $S$ served in its holding phase was at least $c$.
\end{enumerate}
\end{observation}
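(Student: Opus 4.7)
The plan is to simply unpack Definition \ref{definition_partition_psi_alg_single_item}: the observation is essentially a reformulation of what it means for a service of $ALG$ to lie in $\Psi_B$. Fix $S \in \Psi_B$. By the definition, there exist two consecutive services $S_i, S_{i+1} \in \Psi_{OPT}$ triggered at times $t_i < t_{i+1}$ such that $S$ is a service of $ALG$ triggered strictly between $S_i$ and $J_i$ (or strictly between $S_i$ and $S_{i+1}$ in the case where $J_i$ does not exist). In particular $t_i < t_0 < t_{i+1}$, so the next $OPT$ service after $S$ is exactly $S_{i+1}$; this identifies $t_1 = t_{i+1}$, which is the only piece of the setup one needs before invoking the characterization of $J_i$.

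I would then argue both properties simultaneously by taking the negation of the triggering condition defining $J_i$. The service $J_i$ is the \emph{first} $ALG$-service between $S_i$ and $S_{i+1}$ that either (i) serves in its holding phase a request whose virtual deadline with respect to its trigger time is at least $t_{i+1}$, or (ii) has holding-phase cost at most $c$. Because $S$ is triggered strictly before $J_i$, the service $S$ satisfies neither (i) nor (ii). Negating (i) yields that every request $p$ served in the holding phase of $S$ has $\hat{d_p^{t_0}} < t_{i+1} = t_1$, giving property 1. Negating (ii) yields that the sum of holding costs of requests served by $S$ in its holding phase exceeds $c$, giving property 2.

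The only subtlety is the edge case where $J_i$ does not exist. In that case the definition places \emph{every} $ALG$-service strictly between $S_i$ and $S_{i+1}$ into $\Psi_B$, and the non-existence of $J_i$ says precisely that no such intermediate service satisfies (i) or (ii); so the same negation argument applies verbatim to $S$. Since the argument is a direct unpacking of a compound definition, I do not anticipate any real obstacle — the main thing is to be careful that the ``first satisfying service'' phrasing is used correctly and that the edge case is handled so that the conclusion holds uniformly for every $S \in \Psi_B$.
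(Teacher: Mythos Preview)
Your proposal is correct and takes essentially the same approach as the paper: the paper simply states that the observation ``follows from the definition of $\Psi_B$'' without spelling out details, and your argument is exactly the intended unpacking of that definition (including the identification $t_1 = t_{i+1}$ and the negation of the defining condition for $J_i$, with the edge case handled).
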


We show the following lemma:

\begin{lemma}
\label{hold_back.lemma.type_b_single_request}
Let $S\in\Psi_B$ be a service and let $p$ be a request which $ALG$ served in $S$ during the holding phase. Then, $$c_p^{ALG}\leq c_p^{OPT}.$$
\end{lemma}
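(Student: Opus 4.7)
The plan is to split the analysis into two cases according to whether $OPT$ serves $p$ before or after the time $t_0$ at which $ALG$ triggers $S$. Throughout, I will use that $p$ was served in the holding phase, so $a_p \leq t_0 \leq d_p$ and therefore $c_p^{ALG} = c_p(t_0) = c_p(\hat{d_p^{t_0}})$ by the definition of the virtual deadline. I will also use both parts of Observation \ref{hold_back.lemma.type_b_properties}; in particular, the key fact $\hat{d_p^{t_0}} \leq t_1$, which encodes exactly the reason $S$ is in $\Psi_B$ rather than $\Psi_C$.

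In the first case, suppose $t_p^{OPT} \leq t_0$. Then $a_p \leq t_p^{OPT} \leq t_0 \leq d_p$, so $OPT$ also serves $p$ during its holding period. Since $c_p$ is monotone non-increasing on $[a_p, d_p]$, this gives $c_p^{OPT} = c_p(t_p^{OPT}) \geq c_p(t_0) = c_p^{ALG}$.

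In the second case, suppose $t_p^{OPT} > t_0$. $OPT$ serves $p$ at one of its own services, and the service of $OPT$ immediately before $S$ is $S_i$ at time $t_i \leq t_0$, so the first chance for $OPT$ to serve $p$ after time $t_0$ is at its next service, which occurs at time $t_1$. Hence $t_p^{OPT} \geq t_1$. Combining with the observation yields $t_p^{OPT} \geq t_1 \geq \hat{d_p^{t_0}} \geq d_p$, so $OPT$ serves $p$ in its backlog period. Using monotonicity of the backlog cost on $[d_p, \infty)$, we get $c_p^{OPT} = c_p(t_p^{OPT}) \geq c_p(\hat{d_p^{t_0}}) = c_p(t_0) = c_p^{ALG}$.

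The only real subtlety, and what I would flag as the main obstacle, is Case 2: a priori, $OPT$ could serve $p$ later during its holding period, at a strictly smaller holding cost than what $ALG$ incurred, which would contradict the lemma. Ruling this out is exactly what the virtual-deadline design of the algorithm accomplishes. Because $S \in \Psi_B$, every request picked up in its holding phase satisfies $\hat{d_p^{t_0}} \leq t_1$, so the next $OPT$ service can no longer fall in $p$'s holding window; $OPT$ is forced into the backlog regime, where monotonicity goes the right way. All other steps reduce to monotonicity of $c_p$ on the two sides of $d_p$ and the definition of the virtual deadline.
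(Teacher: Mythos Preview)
Your proof is correct and follows essentially the same approach as the paper's: a two-case split on whether $t_p^{OPT}\le t_0$ or $t_p^{OPT}>t_0$, using holding-cost monotonicity in the first case and, in the second, the chain $t_p^{OPT}\ge t_1\ge \hat{d_p^{t_0}}\ge d_p$ from part~1 of Observation~\ref{hold_back.lemma.type_b_properties} together with backlog monotonicity and the identity $c_p(t_0)=c_p(\hat{d_p^{t_0}})$. One small remark: only part~1 of the observation is actually used here (part~2 is what drives the summation in the surrounding lemma), so your mention of ``both parts'' is harmless but unnecessary.
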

\begin{proof}
%Denote by $t_0$ the time when $ALG$ triggered the service $S$ and served the request $p$, i.e. 
As denoted above $t_0=t_p^{ALG}$. We have $t_0\leq d_p$ since $p$ was served by $ALG$ using the service $S$ during the holding phase.
In case $OPT$ served the request $p$ before $ALG$, i.e. $t_p^{OPT}\leq t_0$ we have that $t_p^{OPT}\leq t_0\leq d_p$. The monotonicity of the holding cost implies that $c_p^{ALG}\leq c_p^{OPT}$.
The remaining case is that $OPT$ served the request $p$ after $ALG$, i.e. $t_p^{OPT}>t_0$. Recall that $t_1$ be the time of the next service triggered by $OPT$ after the service $S$. Note that the case of $t_p^{OPT}>t_0$ cannot happen if the next service triggered by $OPT$ after the service $S$ is the dummy service, %from definition \ref{definition_partition_psi_alg_single_item} because 
since that dummy service served no requests. 

We have $$t_0\leq d_p\leq \hat{d_p^{t_0}} \leq t_1\leq t_p^{OPT}$$ where the third inequality holds due to observation \ref{hold_back.lemma.type_b_properties} (part 1) and the last inequality is due to the definition of $t_1$. Therefore, $$c_p^{ALG}=c_p(t_0)=c_p(\hat{d_p^{t_0}})\leq c_p(t_p^{OPT})=c_p^{OPT}$$ where the second equality is due to the definition of virtual deadlines and the first inequality is due to the monotonicity of the backlog cost.
\end{proof}

We now prove lemma \ref{hold_back.lemma.type} for the case $S\in\Psi_B$.

\begin{proof}[Proof of Lemma \ref{hold_back.lemma.type} (case $S\in\Psi_B$)]
Due to observation \ref{hold_back.lemma.type_b_properties} (part 2) we have that $ALG$ pays a total cost of at least $c$ for the requests it served in the service $S$ during the holding phase. The lemma follows by summing for these requests together with lemma \ref{hold_back.lemma.type_b_single_request}.
\end{proof}

\subsection{Analyzing the services of type D}
\label{section_holdBack_single_item_d}

The goal of this subsection is to prove lemma \ref{hold_back.lemma.type} for the case $S\in\Psi_D$. First, we have the following lemma:

\begin{lemma}
\label{hold_back.lemma.type_d_single_request}
Let $S\in\Psi_D$ be a service and let $p$ be a request which $ALG$ served in $S$ during the backlog phase such that $c_p^{OPT}\leq c$. Then $$c_p^{ALG}\leq c_p^{OPT}.$$
\end{lemma}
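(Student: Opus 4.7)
The plan is to split into the easy case where $OPT$ serves $p$ at or after the time $t_0$ when $ALG$ triggers $S$, and the harder case where $t_p^{OPT} < t_0$. In the easy case, both $c_p^{ALG}$ and $c_p^{OPT}$ are backlog costs (since $p$ is served by $ALG$ in the backlog phase of $S$, so $d_p \le t_0 \le t_p^{OPT}$), and monotonicity of the backlog function gives the result immediately.

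The main work is the case $t_p^{OPT} < t_0$. Let $S_i, S_{i+1}$ be the $OPT$ services surrounding the block containing $S$, at times $t_i < t_{i+1}$, and let $J_i \in \Psi_C$ be triggered at $y_i$ with $t_i \le y_i < t_0 < t_{i+1}$. The first step is to locate $t_p^{OPT}$: since $OPT$ has no service strictly between $t_i$ and $t_{i+1}$, and $t_p^{OPT} < t_0 < t_{i+1}$, we get $t_p^{OPT} \le t_i < y_i$. The second step is to show $d_p \ge y_i$: otherwise $p$ would be an unsatisfied overdue request at time $y_i$, hence in the set $T$ at $J_i$, contradicting the fact that $ALG$ only serves $p$ at $t_0 > y_i$. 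Combining, $t_p^{OPT} \le y_i \le d_p$, so $c_p^{OPT}$ is a holding cost, and by monotonicity (non-increasing) on $[a_p,d_p]$, $c_p(y_i) \le c_p(t_p^{OPT}) = c_p^{OPT} \le c$. Thus $p$ is eligible for the holding phase of $J_i$ (its cost at $y_i$ is $\le c$).

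The third step argues about why $J_i$ failed to serve the eligible, unsatisfied request $p$, invoking the definition of $\Psi_C$. If $J_i$'s total holding cost were $\le c$, then adding $p$ (of cost $\le c$ at $y_i$) would keep the sum $\le 2c$, and since the algorithm processes eligible requests in order of virtual deadline until the $2c$ budget would be exceeded, every eligible request must have been added, contradicting the fact that $p$ was not served by $J_i$. Therefore $J_i$ must have served some request $q$ with $\hat{d_q^{y_i}} \ge t_{i+1}$. Since $p$ is processed after $q$ in the virtual-deadline order (otherwise $p$ would have been added before the budget was depleted), we conclude $\hat{d_p^{y_i}} \ge t_{i+1}$.

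Finally, we chain the inequalities. Using $t_0 < t_{i+1} \le \hat{d_p^{y_i}}$ all lying in the backlog region (since $t_{i+1} > d_p$), monotonicity of backlog gives $c_p(t_0) \le c_p(t_{i+1}) \le c_p(\hat{d_p^{y_i}}) = c_p(y_i)$, where the equality is the definition of virtual deadline. Combined with $c_p(y_i) \le c_p^{OPT}$ established above, we get $c_p^{ALG} = c_p(t_0) \le c_p^{OPT}$. The main obstacle is the middle step: carefully exploiting the defining property of $J_i \in \Psi_C$ to conclude $\hat{d_p^{y_i}} \ge t_{i+1}$, which is exactly where the hypothesis $c_p^{OPT} \le c$ gets used (to guarantee $p$'s eligibility at $J_i$).
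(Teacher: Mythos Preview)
Your proof is correct and follows essentially the same approach as the paper's: split into the easy case where $OPT$ serves $p$ no earlier than $ALG$, and the hard case where $OPT$ served $p$ before the $\Psi_C$ service $J_i$; then argue that $p$ was eligible but not served in $J_i$'s holding phase, invoke the defining disjunction of $\Psi_C$ to rule out the ``total $\le c$'' branch and deduce $\hat{d_p^{y_i}} \ge t_{i+1}$, and finally chain the monotonicity inequalities. The only differences from the paper are notational (your $t_0, y_i, t_{i+1}$ correspond to the paper's $t_p^{ALG}, t_0, t_1$ respectively).
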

\begin{proof}
%Let $t_0$ be the time when $ALG$ triggered the service $S$ and served $p$, i.e. $t_0=t_p^{ALG}$

We have $d_p\leq t_p^{ALG}$ since the request $p$ was served by $ALG$ during the backlog phase of the service $S$.
In case $OPT$ served the request $p$ after $ALG$ served it, i.e. $t_p^{ALG}\leq t_p^{OPT}$ we have $d_p\leq t_p^{ALG}\leq t_p^{OPT}$ and thus due to the monotonicity of the backlog cost we have $c_p^{ALG}\leq c_p^{OPT}.$ We note that the case where $OPT$ served the request $p$ after $ALG$ served it does not happen if the next service triggered by $OPT$ after the service $S$ is the dummy service %from definition \ref{definition_partition_psi_alg_single_item} because 
since that dummy service served no requests.

Hence, we are left with the case where $OPT$ served the request $p$ before $ALG$ served $p$, i.e. $t_p^{OPT}< t_p^{ALG}$.

Let $S_C\in\Psi_C$ be the furthest service in $\Psi_C$ that was triggered by $ALG$ before the service $S$ and let $t_0$ be the time $ALG$ triggered the service $S_C$. Since $S\in\Psi_D$ we know that there were not any services triggered by $OPT$ between the two services $S_C$ and $S$, therefore we must have that $OPT$ served the request $p$ before the service $S_C$, i.e. $t_p^{OPT}\leq t_0$.

Recall that $t_p^{ALG}$ is the time when $ALG$ triggered the service $S$ (and served the request $p$). Let $t_1$ be the time of the next service performed by $OPT$ after the service $S$. Such a service performed by $OPT$ must exist since $S\in\Psi_D$. Therefore, $t_p^{ALG}\leq t_1$.

Recall that $OPT$ served the request $p$ before the service $S_C$ so $p$ was active in $ALG$ at time $t_0$. Hence, $t_0\leq d_p$ since otherwise $ALG$ would have served the request $p$ during the backlog phase of service $S_C$ rather than during the backlog phase of the service $S$.

Summing the above, we have 
\begin{equation} \label{inequality_times_d_single_item}
t_p^{OPT}\leq t_0\leq d_p\leq t_p^{ALG}\leq t_1 . %\leq \hat{d_p^{t_0}}.
\end{equation}
% where the last inequality $t_1\leq \hat{d_p^{t_0}}$ in inequality \ref{inequality_times_d_single_item} still needs to be explained. We shall explain why it holds and then continue with the proof of the lemma.

Next, we show that $t_1 \leq \hat{d_p^{t_0}}$.
Due to the monotonicity of the holding costs we have $$c_p(t_0)\leq c_p(t_p^{OPT})=c_p^{OPT}\leq c.$$ Observe that since $S_C\in\Psi_C$ we have that the service $S_C$ satisfies at least one of the following two conditions:
\begin{enumerate}
    \item There was a request $q$ served by $ALG$ during the holding phase of the service $S_C$ with a virtual deadline with respect to time $t_0$ of at least $t_1$, i.e. $\hat{d_q^{t_0}}\geq t_1$.
    \item The sum of the holding costs of the requests $S_C$ served in its holding phase was at most $c$.
\end{enumerate}
If the service $S_C$ satisfied the second condition and the request $p$ was active in $ALG$ at time $t_0$ and $c_p(t_0)\leq c$ then $ALG$ could have included the request $p$ in the holding phase of the service $S_C$ and the total holding costs of the requests $S_C$ served in its holding phase would be still at most $2c$. Hence, that yields a contradiction. 
Hence, $ALG$ did not include the request $p$ during the holding phase of service $S_C$ since the first condition occur. 
Recall that, $ALG$ considers which requests to serve in $S_C$ according to their virtual deadlines with respect to time $t_0$. Since $ALG$ includes the request $q$ but not the request $p$ in the holding phase of service $S_C$, we have that $t_1\leq \hat{d_q^{t_0}}\leq \hat{d_p^{t_0}}$ and hence indeed
$t_1\leq \hat{d_p^{t_0}}$. The last inequality, together with 
inequality \ref{inequality_times_d_single_item} implies that 
$t_p^{ALG}\leq t_1 \leq \hat{d_p^{t_0}}$.

Hence, 
$$c_p^{ALG}=c_p(t_p^{ALG})\leq c_p(\hat{d_p^{t_0}})=c_p(t_0)\leq f(t_p^{OPT})=c_p^{OPT}$$ where the first inequality is due to the inequality above and the monotonicity of the backlog costs, the second equality follows from the definition of virtual deadlines and the second inequality holds due to inequality \ref{inequality_times_d_single_item} and the monotonicity of the holding cost.

\end{proof}

We now prove lemma \ref{hold_back.lemma.type} for the case $S\in\Psi_D$.

\begin{proof}[Proof of Lemma \ref{hold_back.lemma.type} (case $S\in\Psi_D$)]
If there is a request $p$ which $ALG$ served in the service $S$ such that $c_p^{OPT}\geq c$ then we are done. Hence, the remaining case is that for each request $p$ that $ALG$ served in the service $S$ we have $c_p^{OPT}<c$.
Note that $ALG$ pays a total backlog cost of $c$ for the requests it served during the backlog phase of the service $S$. The lemma follows by summing for these requests and using lemma \ref{hold_back.lemma.type_d_single_request}.
\end{proof}
\section{Holding and Backlog Costs - Multi Items}
\label{section_multi_item}

If the joint service cost \( c(r) \) exceeds the total cost of individual items, i.e., $c(r) > \sum_{i=1}^n c(v_i),$
the problem simplifies to a single-item Joint Replenishment Problem (JRP) with service cost \( c(r) \), since the combined cost of servicing all items is within a constant factor of \( c(r) \). 
Conversely, if there exists an item \( v \) such that 
$ c(r) < c(v), $
we can treat \( v \) as a separate single-item problem, incurring only an additive constant  overhead to the competitive ratio. 
Although no assumptions are made, these observations suggest that the most challenging scenario arises when for all $i$ we have
$ c(r) > c(v_i)$
yet the total $\sum_{i=1}^n c(v_i)$
is significantly larger than \( c(r) \). 
In such multi-item cases, the algorithm must justify incurring both the joint service cost and the individual item costs for each service.

\subsection{Algorithm}

Let $A_t$ be the set of all the active requests at time $t$, i.e. the requests that have arrived before or at time $t$ and are not satisfied by any of the previous services of the algorithm.

For an item $v\in V$, let $A^v_t$ be the set of all active requests at time $t$ for the item $v$, i.e. $$A^v_t=\{p\in A_t:v_p=v\}.$$

The algorithm maintains a variable $b(v)$ for each item $v$ which starts with a value of $0$ at time $t=0$ and for each time $t$ its value $b_t(v)$ indicates the amount of backlog accumulated by the unsatisfied overdue requests for $v$ at time $t$, i.e. $b_t(v)=\sum_{p\in A^v_t,d_p<t} c_p(t)$. We call an item $v$ mature if $b_t(v)\geq c(v)$. Any backlog cost that is accumulated after $v$ becomes mature is called surplus backlog cost.

The algorithm also maintains a variable $b(r)$ to the root, which for each time $t$ its value $b_t(r)$ indicates the sum of the surplus backlog costs for the mature items at time $t$, i.e. $b_t(r)=\sum_{v:b_t(v)\geq c(v)}(b_t(v)-c(v))$.

For each premature item $v$ that has at least one unsatisfied request at time $t$, the authors of \cite{doi:10.1137/1.9781611978322.130} defined another term, $mature_t(v)$ to be the time $t'>t$ in which the variable $b(v)$ will reach the value of $c(v)$, i.e. $\sum_{p\in A^v_t,d_p<t'}c_p(t')=c(v).$ This is the time when the item $v$ will become mature. \textbf{We define another term}: $\hat{mature}_t(v)$. Unlike \cite{doi:10.1137/1.9781611978322.130}, we want to take into account only requests that either reached their deadline at time less or equal to $t$ or requests with deadline later than $t$ that will have a bigger backlog cost at the time when the variable is filled than their holding cost at the current time $t$. In other words, requests $p$ with $d_p>t$ are ignored until the time when their backlog cost reaches their current holding cost at time $t$, rather than their (earlier) deadlines $d_p$, only then we take into account their contribution to $b(v)$ when calculating $\hat{mature}_t(v)$. Formally: 
\begin{definition}
For each time $t$ and each premature item $v\in V$ of time $t$ that has at least one unsatisfied request at time $t$ we define 
$$\hat{mature}_t(v)=\argmin_{t'}\left\{\left(\sum_{p\in A^v_t,d_p\leq t}c_p(t')+\sum_{p\in A^v_t,t<d_p\leq t',c_p(t)\leq c_p(t')}c_p(t')\right)\geq c(v)\right\}.$$ 
\end{definition}
Please note that unlike the definition of $mature_t(v)$ as given by \cite{doi:10.1137/1.9781611978322.130}, unless the item $v$ is served shortly, the variable $b(v)$ will probably reach the value of $c(v)$ before (rather than at) the time $\hat{mature}_t(v)$ because we do not take into account requests which have high holding cost at time $t$.

Another difference between our algorithm and the algorithm of \cite{doi:10.1137/1.9781611978322.130} is that in the local holding phase and global holding phase we serve requests in order according to their virtual deadline with respect to the service time rather than according to their deadlines. This is similar to the difference between our algorithm and their algorithm for single item as well. In the local (global) holding phase, we ignore requests that have a cost of more than $c(v)$ ($c(r)$) at the time of the service.

\begin{algorithm}[H]
\caption{Multi-Item Online Joint Replenishment Problem}
\label{hold_back_multi_item.alg}

\textbf{Event triggering.} Let $t$ be the time in which $b(r)$ reaches the value of $c(r)$ (i.e. the surplus backlog cost accumulated equals the joint service cost) \textbf{do}:

\Indp %Let $t$ the time when this happens.

\textbf{Mature Backlog Phase.} Trigger a service $S$ at time $t$. Include all the mature items $v$ in $S$, satisfy all their overdue requests using $S$ and remove them from $A_t$.

\textbf{Premature Backlog Phase.} Sort all the premature items in non-decreasing order of $\hat{mature}_t(v)$ and include them in $S$ one by one, as long as the sum of their costs $c(v)$ is at most $2\cdot c(r)$. For each newly included item, satisfy all of their overdue requests using $S$ and remove them from $A_t$.

\textbf{Local Holding Phase.} For each item $v$ included in $S$, iterate over the remaining active requests for $v$ with costs of at most $c(v)$ at time $t$ in non-decreasing order of their \textbf{virtual} deadlines with respect to time $t$ and satisfy them one by one using $S$, as long as the sum of their holding cost is at most $2\cdot c(v)$. 
%(Note that since all the overdue requests in $v$ are already satisfied using one of the backlog phases, all the remaining active requests in $v$ have deadlines in the future).

\textbf{Global Holding Phase.} Go over the remaining active requests for items included in $S$ with costs of at most $c(r)$ at time $t$ in non-decreasing order of their \textbf{virtual} deadlines with respect to time $t$ and satisfy them one by one using the service $S$, as long as the sum of their holding cost is at most $2\cdot c(r)$.

\Indm
\end{algorithm}

Throughout this section, Algorithm \ref{hold_back_multi_item.alg} will be denoted by $ALG$.

%\subsection{Notation}
%\label{multi_level_notation_section}

\subsection{Analysis}

Henceforth, in this section our target is to prove the following theorem.

\begin{theorem}
\label{hold_back_multi_item.theorem}
Algorithm \ref{hold_back_multi_item.alg} is $16$-competitive.
\end{theorem}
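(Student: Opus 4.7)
The plan is to lift the structure of the proof of Theorem~\ref{hold_back_single_item.theorem} to two levels, treating the joint ordering cost $c(r)$ and the per-item costs $c(v)$ with analogous but separate partitioning arguments. The overall factor of $16$ I expect to arise as $4\times 4$: a factor of $4$ from the per-service cost being bounded by four times the service cost $c(r)+\sum_{v\in V(S)}c(v)$, and a further factor of $4$ from charging the total service cost of $ALG$ against $OPT(\sigma)$.

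The first step is a multi-item analog of Observation~\ref{hold_back.lemma.service_cost_alg}: for each service $S\in\Psi_{ALG}$ the total cost $ALG$ pays in $S$ is at most $4\bigl(c(r)+\sum_{v\in V(S)}c(v)\bigr)$. This I would verify phase by phase. The joint cost $c(r)$, the surplus backlog of exactly $c(r)$ that triggers $S$, and the global holding phase (at most $2c(r)$) give at most $4c(r)$; for each mature item $v\in V(S)$, $c(v)$ from the item cost, the backlog portion attributable to $v$ (excluding surplus) equal to $c(v)$, and the local holding phase (at most $2c(v)$) give at most $4c(v)$; and for each premature item $v\in V(S)$, $c(v)$ for the item, backlog strictly below $c(v)$, and the local holding phase (at most $2c(v)$) again give at most $4c(v)$. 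Summing yields $ALG(\sigma)\le 4\sum_{S\in\Psi_{ALG}}\bigl(c(r)+\sum_{v\in V(S)}c(v)\bigr)$.

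The second step is to show $\sum_{S\in\Psi_{ALG}}\bigl(c(r)+\sum_{v\in V(S)}c(v)\bigr)\le 4\cdot OPT(\sigma)$. I would split this into two sub-bounds that mirror the single-item argument: a bound on the joint cost $c(r)\cdot|\Psi_{ALG}|$, using a four-class partition of $\Psi_{ALG}$ with respect to $\Psi_{OPT}$ (exactly as in Definition~\ref{definition_partition_psi_alg_single_item}), where the global holding phase plays the role of the single-item holding phase; and a bound on the item costs $\sum_{v\in V}c(v)\cdot|\{S\in\Psi_{ALG}:v\in V(S)\}|$ obtained by fixing each $v$ and applying the four-class partition to $\{S\in\Psi_{ALG}:v\in V(S)\}$ with respect to $\Psi_{OPT}^v$, where the local holding phase and mature backlog for $v$ play the analogous role. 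In each sub-bound, the type-$C$ services inject into the corresponding $OPT$ collection, charging either $c(r)$ or $c(v)$ directly to $OPT$'s service costs, and analogs of Lemmas~\ref{hold_back.lemma.type_a_single_request}, \ref{hold_back.lemma.type_b_single_request}, and~\ref{hold_back.lemma.type_d_single_request} show that the remaining services can be charged to $OPT$'s request costs.

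The main obstacle will be the premature backlog phase, which is the novel ingredient of Algorithm~\ref{hold_back_multi_item.alg} and has no counterpart in Section~\ref{section_single_item}. When a type-$D$ service at the per-item level includes $v$ prematurely, the backlog of $v$'s requests served in that service is strictly below $c(v)$, so the charge against $OPT$ must go through the $\hat{mature}$ variable rather than the raw $mature$ of~\cite{doi:10.1137/1.9781611978322.130}. Concretely, if $S_C$ is the preceding type-$C$ service triggered at time $t_0$ and $t_1$ is the time of the next $OPT$ service touching $v$, I would argue $\hat{mature}_{t_0}(v)\ge t_1$ by using the fact that $ALG$ preferred another premature item over $v$ in $S_C$; this is the item-level analog of the inequality $t_1\le\hat{d_p^{t_0}}$ that appears in the proof of Lemma~\ref{hold_back.lemma.type_d_single_request}. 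The reason the definition of $\hat{mature}$ ignores future requests whose backlog will not reach their current holding cost at $t_0$ is precisely so that this charging goes through without over-counting requests whose holding cost at $t_0$ already dominates their future backlog contribution.
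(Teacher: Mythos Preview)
Your decomposition differs from the paper's in a substantive way: the paper does \emph{not} run the per-item four-class partition over all services containing $v$, but first carves two special classes $\Psi_G,\Psi_H$ out of $\Psi_{ALG}$, runs the root-level $A$/$B$/$C$/$D$ split only on the remainder, and runs the per-item split only on $\Psi_G^{v}\cup\Psi_H^{v}$. The $\Psi_G$ condition records, for a service $S$ at time $t$, that every premature item $v'\in V_2(S)$ has $\hat{mature}_{t}(v')$ no later than the next $OPT$ service. That bound is precisely what makes the premature case go through at the item level: it guarantees that the requests defining $\hat{mature}_t(v)$, whose aggregate cost at time $\hat{mature}_t(v)$ is $c(v)$, are still unserved by $OPT$ at that time, so $OPT$ pays at least $c(v)$ for them.

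Your plan never secures such a bound, and the patch you propose for per-item type-$D$ is incoherent as stated: you suggest using that in the preceding per-item type-$C$ service $S_C$ ``$ALG$ preferred another premature item over $v$'', but at the per-item level $S_C$ satisfies $v\in V(S_C)$ by definition, so $v$ was not skipped there and no comparison of $\hat{mature}$ values is available. The argument you are reaching for (bounding $\hat{mature}$ via a preferred item) actually lives at the \emph{root} level, where it is used in the paper to handle mature-backlog requests for items $v\notin V(S_C)$ in the root-level type-$D$ case; and even there it only works because $S_C\notin\Psi_G$, which furnishes the dichotomy ``either $\sum_{v'\in V_2(S_C)}c(v')<c(r)$ or some $\hat{mature}_{t_0}(v')$ exceeds the next $OPT$ time''. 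Without isolating $\Psi_G$ (and, for the per-service cost accounting, $\Psi_H$) before the four-class split, neither your root-level type-$D$ argument for items not in $V(S_C)$ nor your item-level argument for premature inclusions of $v$ can be completed: both require a bound $\hat{mature}_{\cdot}(v)\le t_{\text{next }OPT}$ that your partition does not supply.
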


Recall that for a service $S$, regardless if it was triggered by $ALG$ or $OPT$, we denote by $V(S)\subseteq V$ the set of items that were served in the service $S$.

We denote by $P(S)$ the set of requests that were served during the service $S$.

If $S\in\Psi_{ALG}$ then we denote by $V_1(S)$ and $V_2(S)$ the set of items that were included in the service $S$ during the mature backlog phase and the premature backlog phases, respectively. Therefore we have $V(S)=V_1(S)\cup V_2(S)$ and $V_1(S)\cap V_2(S)=\emptyset$.

Fix $\sigma$, the sequence of requests. We begin by partitioning the services of $ALG$ to different types, as we did in the analysis of single item. 
As for the analysis of the single item problem, we add one dummy service of $OPT$ for all items that does not serve any request after all the services of $ALG$ and $OPT$ have been triggered. We do not charge $OPT$ for the cost of that dummy service.

We partition $\Psi_{ALG}$ to six disjoint sets. The first four sets are $\Psi_A,\Psi_B,\Psi_C,\Psi_D$ which are slightly different from single item and additional two new sets $\Psi_G,\Psi_H$ as follows. 

\begin{definition}
\label{definition_partition_psi_alg_multi_item} 
Initially, we define $R=\Psi_{ALG}$.
\begin{itemize}
\item 
Let $\Psi_H$ be all the services $S\in R$ that have $\sum_{v\in V_1(S)}c(v)\geq c(r)$. Remove $\Psi_H$ from $R$.
\item 
Let $\Psi_A$ be all the services $S\in R$ that occurred before the first service that $OPT$ performed. Remove $\Psi_A$ from $R$.
\end{itemize}
Let $S_i,S_{i+1}\in \Psi_{OPT}$ be two consecutive services triggered at times $t_i$ and $t_{i+1}$ (where $t_i < t_{i+1}$).   
\begin{itemize}
\item
Let $\Psi_G$ be the union over $i$ of services $S\in R$ triggered at time $z_i$ between $S_i$ and $S_{i+1}$ such that \textbf{both} $\sum_{v\in V_2(S)}c(v)\geq c(r)$ \textbf{and} $\max_{v\in V_2(S)}\hat{mature}_{z_i}(v)\leq t_{i+1}$. Remove $\Psi_G$ from $R$.
\item Let $J_i \in R$ be the first service of $R$ between $S_i$ and $S_{i+1}$ (triggered at time $y_i$) such that: $J_i$ served at least one request in the global holding phase with virtual deadline with respect to time $y_i$ of at least $t_{i+1}$ \textbf{or} the sum of all the holding costs of the requests $J_i$ served in its global holding phase was of at most $c(r)$.
\item
Let $\Psi_B$ be union over $i$ of all the services of $R$ between $S_i$ and $J_i$ (excluding). If the service $J_i$ as defined above does not exist, then all the services of $R$ between $S_i$ and $S_{i+1}$ will be in $\Psi_B$. 
\item
Let $\Psi_C$ be union over $i$ of $J_i$.
\item 
Let $\Psi_D$ be union over $i$ of all services of $R$ between $J_i$ and $S_{i+1}$ (excluding). 
\end{itemize}
\end{definition}

\begin{definition}
For each item $v\in V$ we denote by $\Psi_G^v$ the services $S\in\Psi_G$ that have the item $v$ served in the premature backlog phase. We also denote by $\Psi_H^v$ the services $S\in\Psi_H$ that have the item $v$ served in the mature backlog phase. More precisely we have
$$\Psi_G^v=\{S\in\Psi_G:v\in V_2(S)\}$$ $$\Psi_H^v=\{S\in\Psi_H:v\in V_1(S)\}$$
We also define
$$\Psi_{ABD}=\Psi_A\cup\Psi_B\cup\Psi_D.$$
\end{definition}

We now have the following formula for the cost of $OPT$ that consists of 3 terms: serving the root, serving the items and the costs of the requests (backlog and holding costs) .
\begin{observation}
\label{hold_back_multi.lemma.cost_opt}
We have $$OPT(\sigma)=|\Psi_{OPT}|\cdot c(r) + \sum_{v\in V}|\Psi_{OPT}^{v}|\cdot c(v) +\sum_{p}c_p^{OPT}$$
\end{observation}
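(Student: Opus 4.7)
The plan is to observe that this is a direct bookkeeping identity, obtained by writing out the total cost of $OPT$ explicitly and then swapping the order of summation. First I would split $OPT(\sigma)$ into two parts: the total cost paid for triggering services, and the total cost paid for the requests themselves (holding plus backlog). By the problem definition (and the cost model recalled in the Preliminaries), the total request cost is $\sum_p c_p^{OPT}$, so the only thing left is to rewrite the service-cost term.

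For the service-cost term, recall that each service $S\in\Psi_{OPT}$ incurs a cost of $c(r)+\sum_{v\in V(S)} c(v)$. Summing over $S\in\Psi_{OPT}$ gives
\begin{equation*}
\sum_{S\in\Psi_{OPT}}\Bigl(c(r)+\sum_{v\in V(S)} c(v)\Bigr)=|\Psi_{OPT}|\cdot c(r)+\sum_{S\in\Psi_{OPT}}\sum_{v\in V(S)}c(v).
\end{equation*}
Then I would swap the order of the double summation: for each item $v\in V$, the term $c(v)$ is counted once for every service $S\in\Psi_{OPT}$ with $v\in V(S)$, which by definition is exactly the set $\Psi_{OPT}^v$. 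Therefore $\sum_{S\in\Psi_{OPT}}\sum_{v\in V(S)}c(v)=\sum_{v\in V}|\Psi_{OPT}^v|\cdot c(v)$, and combining everything gives the stated identity.

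There is no genuine obstacle here; the only thing to be careful about is confirming that the cost model being used is the one stated in the Preliminaries (each service pays $c(r)$ once plus $c(v)$ for each item included, regardless of how many units of $v$ are served), and that $\Psi_{OPT}^v$ was defined precisely as $\{S\in\Psi_{OPT}:v\in V(S)\}$, so the swap of summation is justified without any subtlety. The proof is therefore essentially one line of algebra after unpacking the definitions.
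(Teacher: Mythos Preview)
Your proposal is correct and matches the paper's treatment: the paper states this as an observation with no proof, since it is exactly the bookkeeping identity you describe. Your unpacking of the service cost and the swap of summation using the definition of $\Psi_{OPT}^v$ is the natural (and only) way to see it.
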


We now have the following lemma regarding the cost of $ALG$:
\begin{lemma}
\label{hold_back_multi.lemma.cost_alg}
We have $$
ALG(\sigma) \leq 16\cdot \left( c(r)\cdot |\Psi_{ABD}|+ c(r)\cdot |\Psi_{C}|+\sum_{S\in\Psi_G}\sum_{v\in V_2(S)}c(v)+\sum_{S\in\Psi_H}\sum_{v\in V_1(S)}c(v)\right). $$
\end{lemma}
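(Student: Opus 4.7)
The plan is to bound the total cost of each individual service $S \in \Psi_{ALG}$ as a function of $c(r)$, $\sum_{v \in V_1(S)} c(v)$, and $\sum_{v \in V_2(S)} c(v)$, and then to absorb that bound into the appropriate term of the RHS depending on which of the four disjoint classes $\Psi_{ABD}, \Psi_C, \Psi_G, \Psi_H$ the service $S$ belongs to.

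First I would enumerate, phase by phase, the cost that $ALG$ incurs for a single service $S$ triggered at time $t$. The joint service cost is exactly $c(r) + \sum_{v \in V_1(S)} c(v) + \sum_{v \in V_2(S)} c(v)$. In the mature backlog phase, the total backlog paid equals $\sum_{v \in V_1(S)} b_t(v) = \sum_{v \in V_1(S)} c(v) + \sum_{v \in V_1(S)}(b_t(v)-c(v)) = \sum_{v \in V_1(S)} c(v) + b_t(r) = \sum_{v \in V_1(S)} c(v) + c(r)$, using that the trigger condition is $b_t(r)=c(r)$ and that $b_t(r)$ is exactly the total surplus of the mature items. In the premature backlog phase each included item $v \in V_2(S)$ is premature, so the backlog paid is at most $\sum_{v \in V_2(S)} c(v)$, and by the stopping rule in the algorithm this sum is always at most $2c(r)$. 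The local holding phase contributes at most $2 \sum_{v \in V(S)} c(v)$ by its stopping rule, and the global holding phase contributes at most $2c(r)$. Summing these contributions gives the clean bound
\[
\mathrm{cost}(S) \;\leq\; 4c(r) \;+\; 4\sum_{v \in V_1(S)} c(v) \;+\; 4\sum_{v \in V_2(S)} c(v).
\]

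Next I would handle the three cases separately, using only the defining properties of the partition. If $S \in \Psi_{ABD} \cup \Psi_C$, then $S \notin \Psi_H$ yields $\sum_{v \in V_1(S)} c(v) < c(r)$, and the algorithm's stopping rule always gives $\sum_{v \in V_2(S)} c(v) \leq 2c(r)$; plugging these into the above inequality gives $\mathrm{cost}(S) \leq 4c(r) + 4c(r) + 8c(r) = 16c(r)$. If $S \in \Psi_G$, then by definition $\sum_{v \in V_2(S)} c(v) \geq c(r)$, and since $S \notin \Psi_H$ we also have $\sum_{v \in V_1(S)} c(v) < c(r) \leq \sum_{v \in V_2(S)} c(v)$; combining yields $\mathrm{cost}(S) \leq 12 \sum_{v \in V_2(S)} c(v) \leq 16 \sum_{v \in V_2(S)} c(v)$. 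Finally, if $S \in \Psi_H$, then $\sum_{v \in V_1(S)} c(v) \geq c(r)$ and $\sum_{v \in V_2(S)} c(v) \leq 2c(r) \leq 2 \sum_{v \in V_1(S)} c(v)$, which together give $\mathrm{cost}(S) \leq 16 \sum_{v \in V_1(S)} c(v)$.

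Summing over all services, using the disjointness of the four classes, yields exactly the stated inequality. The only delicate point -- and the step I would be most careful about -- is the identity $\sum_{v \in V_1(S)} b_t(v) = \sum_{v \in V_1(S)} c(v) + c(r)$ for the mature backlog phase; it is crucial that every mature item at time $t$ is included in $V_1(S)$ (so that $b_t(r)$ is exactly the total surplus among items of $V_1(S)$), and that we pay for the full $b_t(v)$ of each such item rather than just the surplus. The rest of the argument is essentially case-analysis controlled by the defining thresholds of $\Psi_H$ and $\Psi_G$ together with the algorithm's built-in $2c(v)$/$2c(r)$ stopping rules.
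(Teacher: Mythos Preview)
Your proposal is correct and follows essentially the same approach as the paper: bound the cost of a single service by $4\bigl(c(r)+\sum_{v\in V_1(S)}c(v)+\sum_{v\in V_2(S)}c(v)\bigr)$ via a phase-by-phase accounting, and then do the three-way case split using the defining thresholds of $\Psi_H$ and $\Psi_G$ together with the algorithm's $2c(r)$ stopping rule. Your explicit computation $\sum_{v\in V_1(S)}b_t(v)=\sum_{v\in V_1(S)}c(v)+c(r)$ and your remark that it relies on $V_1(S)$ being exactly the set of mature items are the only points where you are slightly more detailed than the paper, but the argument is identical in substance.
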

\begin{proof}
Let $S\in\Psi_{ALG}$ be a service of $ALG$. The service $S$ causes $ALG$ to pay a service cost of $c(r)$ for the service, a service cost of $\sum_{v\in V_1(S)}c(v)$ for the mature backlog phase and a service cost of $\sum_{v\in V_2(S)}c(v)$ for the premature backlog phase. The service $S$ also causes $ALG$ paying a surplus backlog cost of $c(r)$ for the requests for mature items, a non-surplus backlog cost of $\sum_{v\in V_1(S)}c(v)$ for the requests for mature items and a non-surplus backlog cost of at most $\sum_{v\in V_2(S)}c(v)$ for the requests for premature items. The local holding phase causes $ALG$ paying a holding cost of at most $2\cdot(\sum_{v\in V_1(S)}c(v)+\sum_{v\in V_2(S)}c(v))$. The global holding phase causes $ALG$ paying a holding cost of at most $2\cdot c(r)$. In total, each service $S$ contributed to the total cost of $ALG$ a cost of at most 
\begin{equation}
\label{multi_item_contribution_of_service_to_alg_cost}
    4\cdot\left(c(r)+\sum_{v\in V_1(S)}c(v)+\sum_{v\in V_2(S)}c(v)\right).
\end{equation}

If $S\in\Psi_{ABD}$ or $S\in\Psi_{C}$ then we have $\sum_{v\in V_1(S)}c(v)\leq c(r)$ and $\sum_{v\in V_2(S)}c(v)\leq 2\cdot c(r)$ and thus by plugging this in \ref{multi_item_contribution_of_service_to_alg_cost} we get that such a service contributed to the total cost of $ALG$ a cost of at most $16\cdot c(r)$.

If $S\in\Psi_G$ then we have $$\sum_{v\in V_1(S)}c(v)\leq c(r)\leq \sum_{v\in V_2(S)}c(v)$$ and thus by plugging this in \ref{multi_item_contribution_of_service_to_alg_cost} we get that such a service contributed to the total cost of $ALG$ a cost of at most $16\cdot \sum_{v\in V_2(S)}c(v)$.

If $S\in\Psi_H$ then we have $\sum_{v\in V_2(S)}c(v)\leq 2\cdot c(r)$ and $c(r)\leq \sum_{v\in V_1(S)}c(v)$, thus by plugging this in \ref{multi_item_contribution_of_service_to_alg_cost} we get that such a service contributed to the total cost of $ALG$ a cost of at most $16\cdot \sum_{v\in V_1(S)}c(v)$.
\end{proof}

Note that as in the analysis for the single item problem, we can consider a function that maps each service $S\in\Psi_C$ to the last service of $OPT$ that occurred before it, this function is single-valued and the dummy service of $OPT$ from definition \ref{definition_partition_psi_alg_multi_item} does not have any service in $\Psi_C$ that is mapped to it, as in the analysis of the single item problem. This yields the following observation:

\begin{observation}
\label{multi_item_hold_back.observation.type_c}
    We have that $$|\Psi_C|\leq |\Psi_{OPT}|.$$
\end{observation}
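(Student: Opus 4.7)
The plan is to mimic exactly the argument sketched after Observation \ref{hold_back.lemma.type_c} for the single‑item case, now adapted to the slightly richer partition of $\Psi_{ALG}$ given in Definition \ref{definition_partition_psi_alg_multi_item}. The only structural fact about $\Psi_C$ that we need is that each of its members arises as a $J_i$ associated with a unique pair of consecutive $OPT$ services.

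Concretely, I would construct an explicit map $\phi:\Psi_C\to\Psi_{OPT}$ defined by $\phi(J_i)=S_i$, where $S_i$ is the $OPT$ service immediately preceding $J_i$ in time (and $S_{i+1}$ is the next $OPT$ service after $J_i$, with the dummy $OPT$ service used as $S_{i+1}$ if no genuine $OPT$ service follows). This map is well defined because every $S\in\Psi_C$ is, by Definition \ref{definition_partition_psi_alg_multi_item}, the specific service $J_i$ singled out inside the time window $(t_i,t_{i+1})$ between two consecutive $OPT$ services, so the choice of $S_i$ is forced.

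Next I would verify injectivity. By the very definition of $J_i$, only one such service is selected in the interval $(t_i,t_{i+1})$ — the first service of $R$ in that interval satisfying either the virtual‑deadline condition in the global holding phase or the total‑holding‑cost‑at‑most‑$c(r)$ condition. Hence for each index $i$ at most one element of $\Psi_C$ is mapped to $S_i$, proving that $\phi$ is injective. Finally, to conclude the inequality I would note that the dummy $OPT$ service never lies in the image of $\phi$: any $J_i\in\Psi_C$ is strictly between two consecutive $OPT$ services, so $\phi(J_i)=S_i$ is the earlier one, and since the dummy service was inserted after all $ALG$ and $OPT$ services, it can only play the role of $S_{i+1}$, never $S_i$. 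Therefore $\phi$ is an injection into $\Psi_{OPT}$ (the genuine services, excluding the uncharged dummy), giving $|\Psi_C|\leq |\Psi_{OPT}|$.

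There is no real obstacle here; the content is purely a careful bookkeeping argument that reuses the one‑element‑per‑interval structure. The one point deserving explicit mention in the write‑up is clarifying, as above, why the presence of the dummy service does not weaken the bound — i.e., that $\phi$ maps into the set of real $OPT$ services, not the augmented set, so the count on the right is exactly $|\Psi_{OPT}|$ as defined.
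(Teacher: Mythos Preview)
Your proposal is correct and follows essentially the same argument the paper gives: map each $J_i\in\Psi_C$ to the last $OPT$ service preceding it, observe that at most one $J_i$ is selected per interval between consecutive $OPT$ services, and note that the dummy $OPT$ service is never hit by this map. There is nothing to add.
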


We claim the following two lemmas:
\begin{lemma}
\label{hold_back_multi_item.lemma.type_abd}
    For each service $S\in\Psi_{ABD}$ we have that the total cost $OPT$ pays for the requests $ALG$ served in $S$ is at least $c(r)$, i.e. $$c(r)\leq\sum_{p\in P(S)}c_p^{OPT}.$$
\end{lemma}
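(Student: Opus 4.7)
My plan is to handle the three subclasses $\Psi_A$, $\Psi_B$, $\Psi_D$ of $\Psi_{ABD}$ separately, mirroring the three corresponding subsections of the single-item analysis (Sections~\ref{section_holdBack_single_item_a}--\ref{section_holdBack_single_item_d}). For each $S \in \Psi_{ABD}$ I pick a specific phase of the service whose total $ALG$-paid cost is already at least $c(r)$, and then show, request by request, that $c_p^{OPT} \geq c_p^{ALG}$ on the requests served in that phase; summation will give the lemma. The target phase is the mature backlog phase when $S \in \Psi_A$ or $S \in \Psi_D$ (the triggering condition $b(r) = c(r)$ guarantees the surplus backlog alone accounts for $c(r)$), and the global holding phase when $S \in \Psi_B$ (the defining property of $J_i$ forces this phase's total cost in $S$ to strictly exceed $c(r)$).

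For $S \in \Psi_A$ and $S \in \Psi_B$, the per-request inequality is essentially a direct translation of Lemmas~\ref{hold_back.lemma.type_a_single_request} and~\ref{hold_back.lemma.type_b_single_request}. For $\Psi_A$, $S$ precedes every $OPT$ service, so any overdue request $p$ served in the mature phase satisfies $d_p \leq t_p^{ALG} \leq t_p^{OPT}$ and backlog monotonicity gives $c_p^{OPT} \geq c_p^{ALG}$. For $\Psi_B$, the defining property of $J_i$ forces every request $p$ served in $S$'s global holding phase to have $\hat{d_p^{t_0}} < t_{i+1}$ (with $t_0 = t_p^{ALG}$); splitting on whether $OPT$ serves $p$ before or after $t_0$, the former case follows from holding monotonicity and the latter from the virtual-deadline identity combined with backlog monotonicity, exactly as in the single-item argument.

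The main obstacle is $S \in \Psi_D$. The target phase is the mature backlog phase; after discarding the trivial sub-case in which some served request has $c_p^{OPT} \geq c(r)$, we must show $c_p^{OPT} \geq c_p^{ALG}$ for each overdue $p$ served in that phase. When $OPT$ serves $p$ after $ALG$, backlog monotonicity suffices. When $OPT$ serves $p$ earlier, the absence of an $OPT$ service between $J_i$ (at time $t_0$) and $S$ forces $t_p^{OPT} \leq t_i < t_0$, so $p$ was still active in $ALG$ at $t_0$, and we must explain why $J_i$ did not serve $p$ there. This is where the multi-item structure is more delicate than in the single-item setting: the explanation depends both on whether $d_p \geq t_0$ or $d_p < t_0$ (which governs which phase of $J_i$ could have served $p$) and on whether $v_p \in V(J_i)$ at all. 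In each sub-case my plan is to exhibit a witness request $q$ served at $J_i$ whose virtual-deadline ordering combined with the defining property of $J_i \in \Psi_C$ forces $\hat{d_p^{t_0}} \geq t_{i+1} > t_p^{ALG}$; the virtual-deadline identity and the appropriate monotonicities then yield $c_p^{ALG} \leq c_p(t_0) \leq c_p^{OPT}$, exactly as in the single-item proof. The point I expect to require the most care is the case $v_p \notin V(J_i)$: the eligibility arguments for the local and global holding phases of $J_i$ collapse, and one must instead exploit the mature/premature backlog structure of $J_i$ together with the $\hat{mature}$-sorting rule to produce the witness and recover the virtual-deadline bound.
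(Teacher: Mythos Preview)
Your treatment of $\Psi_A$ and $\Psi_B$ is correct and matches the paper exactly.

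For $\Psi_D$ there is a genuine gap. You aim to prove the per-request inequality $c_p^{ALG}\le c_p^{OPT}$ for \emph{every} $p\in B_1(S)$, but this is false in general when $v_p\notin V(J_i)$. Take any such request with $d_p\le t_0$ (already overdue when $J_i$ is triggered): then $\hat{d_p^{t_0}}$ is not even defined, and nothing prevents $OPT$ from having served $p$ at its deadline for cost $0$ while $c_p^{ALG}=c_p(t_1)>0$. The $\hat{mature}$-sorting rule you plan to invoke does not produce a virtual-deadline witness for an individual request; what it yields is an \emph{aggregate} statement about item $v=v_p$: if $v\notin V(J_i)$ and $J_i\notin\Psi_G$, then either $v$ could have been added to $V_2(J_i)$ (contradiction), or $\hat{mature}_{t_0}(v)>t_{i+1}$, which only bounds a certain \emph{sum} of backlog costs by $c(v)$, not any single term.

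The paper's proof of the $\Psi_D$ case therefore does not try to show $c_p^{ALG}\le c_p^{OPT}$ for all $p\in B_1(S)$. It splits items of $V_1(S)$ by whether they belong to $V(J_i)$. For $v\in V_1(S)\cap V(J_i)$ your plan works verbatim (global-holding-phase witness, as in Lemma~\ref{hold_back.lemma.type_d_single_request}). For $v\in V_1(S)\setminus V(J_i)$ the paper partitions $B_1^v(S)$ into three sets $P_1,P_2,P_3$: the $\hat{mature}$ argument gives the aggregate bound $\sum_{p\in P_1}c_p^{ALG}<c(v)$, while for $p\in P_2$ (active at $t_0$ with $c_p(t_0)>c_p(t_1)$) and $p\in P_3$ (arrived after $t_0$) one gets $c_p^{ALG}\le c_p^{OPT}$ by direct holding/backlog monotonicity with no witness from $J_i$ at all. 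This yields $\bigl(\sum_{p\in B_1^v(S)}c_p^{ALG}\bigr)-c(v)\le\sum_{p\in B_1^v(S)}c_p^{OPT}$ for each such $v$. In the final accounting one subtracts $c(v)$ for every $v\in V_1(S)\setminus V(J_i)$; since $\sum_{p\in B_1(S)}c_p^{ALG}=c(r)+\sum_{v\in V_1(S)}c(v)$, the surplus $c(r)$ survives and the lemma follows. Your plan is missing precisely this aggregate-then-subtract step.
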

%Lemma \ref{hold_back_multi_item.lemma.type_abd} will be proven in the upcoming subsections.

%Now we present a lemma regarding the services of $ALG$ in $\Psi_G$ and $\Psi_H$.
\begin{lemma}
\label{hold_back_multi_item.lemma.type_gh}
For each $v\in V$ we have 
$$c(v)\cdot(|\Psi_G^{v}|+|\Psi_H^{v}|) \leq c(v)\cdot |\Psi_{OPT}^{v}| +\sum_{S\in \Psi_G^{v}\cup \Psi_H^{v}}\sum_{p\in P^{v}(S)}c_p^{OPT}.$$
\end{lemma}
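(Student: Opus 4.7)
The plan is to prove the lemma by a per-item charging argument. Fix the item $v$ and list the services of $\Psi_G^v \cup \Psi_H^v$ chronologically as $S_1, \ldots, S_k$ at times $t_1 < \cdots < t_k$, with the convention $t_0 = 0$. I will charge the $c(v)$ cost of each $S_j$ either (i) to a unique OPT service on $v$ lying in the window $(t_{j-1}, t_j]$, or (ii) to the OPT request costs $\sum_{p \in P^v(S_j)} c_p^{OPT}$, which I will show is at least $c(v)$ whenever option (i) is unavailable. Since the windows $(t_{j-1}, t_j]$ are pairwise disjoint, each OPT service on $v$ is used at most once, giving the $c(v)\cdot|\Psi_{OPT}^v|$ term; and since each request belongs to at most one $P^v(S)$ (ALG serves each request at most once), the request-cost charges are bounded by the full double sum on the RHS. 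The dummy OPT service on $v$ appended at the end can be used to match the final $S_k$ if needed.

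The bulk of the proof is establishing the bound in case (ii), i.e., when OPT has no service on $v$ inside $(t_{j-1}, t_j]$. For $S_j \in \Psi_H^v$, maturity of $v$ at $t_j$ produces a set $R \subseteq P^v(S_j)$ of overdue active requests with $\sum_{p \in R} c_p(t_j) \geq c(v)$. A key preliminary observation is that each $p \in R$ has $d_p \geq t_{j-1}$: otherwise $p$ would be overdue and active at $t_{j-1}$ and hence served by $S_{j-1}$ in its mature or premature backlog phase. Combined with the no-OPT-on-$v$ hypothesis, this forces each $p \in R$ to be served by OPT either after $t_j$ (so $c_p^{OPT} \geq c_p(t_j)$ by backlog monotonicity) or at time $t_p^{OPT} \leq t_{j-1} < d_p$ in the holding phase. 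For $S_j \in \Psi_G^v$, I invoke the defining condition $\hat{mature}_{t_j}(v) \leq t^*_j$, where $t^*_j$ is the first OPT service time (on any item) after $t_j$: the requests contributing to $\hat{mature}_{t_j}(v)$ have total value at least $c(v)$ when evaluated at $\hat{mature}_{t_j}(v)$. The deliberate exclusion in the $\hat{mature}$ definition of requests whose holding cost at $t_j$ exceeds their backlog at $\hat{mature}_{t_j}(v)$ ensures, by the same virtual-deadline reasoning as in the single-item analysis (mirroring the proof of Lemma \ref{hold_back.lemma.type_b_single_request}), that for every contributing $p$, OPT's payment at any time outside the window $(t_j, t^*_j]$ is at least $p$'s contribution.

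The main technical obstacle will be the lossy case in $\Psi_H^v$ where OPT serves some $p \in R$ in its holding phase at $t_p^{OPT} \leq t_{j-1}$, potentially paying only $c_p(t_p^{OPT}) \geq c_p(d_p)$, which can be much smaller than the backlog cost $c_p(t_j)$ used in the maturity count. I expect to close this gap by observing that such an early OPT service on $v$ itself provides a chargeable OPT token in an earlier window, so the naive per-window matching may need to be refined to a more global one-to-one pairing (for instance, always matching $S_j$ to the most recent not-yet-used OPT service on $v$). Threading this through --- ensuring each OPT service on $v$ is used at most once while every unit of $S_j$'s cost is accounted for --- is where the argument's technical weight lies.
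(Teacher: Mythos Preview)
Your proposal has a genuine gap precisely at the point you flag as the ``main technical obstacle,'' and the proposed fix does not work. The global one-to-one pairing cannot succeed: between two consecutive OPT services on $v$ there can be arbitrarily many ALG services of $\Psi_G^v\cup\Psi_H^v$, so you have at most one OPT token but many $S_j$'s to charge. All but one of them must be covered purely by request costs, and in the lossy case---OPT served $p$ at $t_p^{OPT}\le t_{j-1}<d_p$---you only get $c_p^{OPT}\ge c_p(t_{j-1})$ from holding-cost monotonicity, which says nothing about the backlog value $c_p(t_j)$ you need. No amount of re-pairing manufactures the missing inequality. A secondary gap in your $\Psi_G^v$ argument: the requests contributing to $\hat{mature}_{t_j}(v)$ with $t_j<d_p$ are still in their holding period at $t_j$, and you have not argued they belong to $P^v(S_j)$ at all (ALG may skip them in the local and global holding phases).

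The missing idea is that the \emph{local holding phase} of $S_j$ on item $v$ is what carries the argument, not the backlog phase alone. The paper re-runs the single-item $A/B/C/D$ partition on the subsequence $\Psi_G^v\cup\Psi_H^v$ versus $\Psi_{OPT}^v$, but keyed to the local holding phase of $v$: a $\Phi_C$ service is the first one in each OPT-on-$v$ gap whose local holding phase either reached a request with virtual deadline beyond the next OPT-on-$v$ time or accumulated at most $c(v)$ holding cost. Then $|\Phi_C|\le|\Psi_{OPT}^v|$ gives the token term. For any later $S\in\Phi_D$, the preceding $\Phi_C$ service $S_C$ (at time $t_0$) did \emph{not} serve $p$ in its local holding phase, so by the virtual-deadline ordering $\hat{d_p^{t_0}}$ exceeds the next OPT-on-$v$ time, hence exceeds $t_2=\hat{mature}_{t_1}(v)$; this yields $c_p(t_2)\le c_p(\hat{d_p^{t_0}})=c_p(t_0)\le c_p(t_p^{OPT})=c_p^{OPT}$, which is exactly the inequality your backlog-only approach cannot produce. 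The same $\Phi_C$ mechanism (via Observations \ref{hold_back_multi_item.lemma.alg_served_all_p_requests_1} and \ref{hold_back_multi_item.lemma.alg_served_all_p_requests_2} in the paper) also guarantees that the $P_2$ requests contributing to $\hat{mature}$ were actually served in $S$, closing the second gap.
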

Before proving Lemma \ref{hold_back_multi_item.lemma.type_abd} and Lemma \ref{hold_back_multi_item.lemma.type_gh} 
%will be proven in subsection \ref{section_multi_item_holdBack_gh}. 
we show how to prove the main theorem from the Lemmas.

\begin{proof}[Proof of Theorem \ref{hold_back_multi_item.theorem}]
Using Lemma \ref{hold_back_multi_item.lemma.type_abd} and summing it for all the services $S\in\Psi_{ABD}$ yields to
\begin{align*}
 c(r)\cdot |\Psi_{ABD}|
 \leq \sum_{S\in\Psi_{ABD}}\sum_{p\in P(S)}c_p^{OPT}.
\end{align*}
We also have 
\begin{align*}
\sum_{S\in\Psi_G}\sum_{v\in V_2(S)}c(v)+\sum_{S\in\Psi_H}\sum_{v\in V_1(S)}c(v) 
& =\sum_{v\in V}\sum_{S\in \Psi_G^{v}}c(v)+\sum_{v\in V}\sum_{S\in \Psi_H^{v}}c(v) \\
& =\sum_{v\in V}c(v)\cdot(|\Psi_G^{v}|+|\Psi_H^{v}|) \\
& \leq \sum_{v\in V}c(v)\cdot |\Psi_{OPT}^{v}| +\sum_{S\in \Psi_G^{v}\cup \Psi_H^{v}}\sum_{p\in P^{v}(S)}c_p^{OPT}
\end{align*}
where the first equality follows by changing the order of the summing and the inequality follows from applying Lemma \ref{hold_back_multi_item.lemma.type_gh} for each $v\in V$. Therefore, we have
\begin{align*}
& c(r)\cdot |\Psi_{ABD}|+ c(r)\cdot |\Psi_{C}|+\sum_{S\in\Psi_G}\sum_{v\in V_2(S)}c(v)+\sum_{S\in\Psi_H}\sum_{v\in V_1(S)}c(v) \\
\leq & \sum_{S\in\Psi_{ABD}}\sum_{p\in P(S)}c_p^{OPT}+c(r)\cdot |\Psi_{OPT}|+\sum_{v\in V}c(v)\cdot |\Psi_{OPT}^{v}| +\sum_{S\in \Psi_G^{v}\cup \Psi_H^{v}}\sum_{p\in P^{v}(S)}c_p^{OPT}\\
= & |\Psi_{OPT}|\cdot c(r)+\sum_{v\in V}c(v)\cdot |\Psi_{OPT}^{v}|+\sum_{S\in\Psi_{ABD}}\sum_{p\in P(S)}c_p^{OPT}+\sum_{v\in V}\sum_{S\in \Psi_G^{v}\cup \Psi_H^{v}}\sum_{p\in P^{v}(S)}c_p^{OPT} \\
\leq & |\Psi_{OPT}|\cdot c(r)+\sum_{v\in V}c(v)\cdot |\Psi_{OPT}^{v}|+\sum_{p}c_p^{OPT}
\end{align*}
where in the first inequality we used Observation \ref{multi_item_hold_back.observation.type_c} and the second inequality holds due to the fact that each request $p$ is served by $ALG$ in at most one service. Hence, we have 
\begin{align*}
ALG(\sigma) & \leq  16\cdot \left( c(r)\cdot |\Psi_{ABD}|+ c(r)\cdot |\Psi_{C}|+\sum_{S\in\Psi_G}\sum_{v\in V_2(S)}c(v)+\sum_{S\in\Psi_H}\sum_{v\in V_1(S)}c(v)\right) \\
& \leq 16\cdot \left( |\Psi_{OPT}|\cdot c(r)+\sum_{v\in V}c(v)\cdot |\Psi_{OPT}^{v}|+\sum_{p}c_p^{OPT} \right) = 16\cdot OPT(\sigma).
\end{align*}
where the first inequality holds due to Lemma \ref{hold_back_multi.lemma.cost_alg} and the equality follows from Observation \ref{hold_back_multi.lemma.cost_opt}.
\end{proof}

For a service $S\in\Psi_{ALG}$ we denote by $B_1(S)$, $B_2(S)$, $H_1(S)$ and $H_2(S)$ as the set of requests satisfied using the service $S$ in the mature backlog phase, the premature backlog phase, the local holding phase and the global holding phase, respectively.

For an item $v\in V(S)$, we define $B_1^v(S)$, $B_2^v(S)$, $H_1^v(S)$ and $H_2^v(S)$ in the same way as above except that we include only requests for the item $v$ in the definition. 

We define $P^v(S)$ to be the set of requests for the item $v$ that were served during the service $S$ in any of the phases.

Therefore we have $B_1(S)=\bigcup_{v\in V_1(S)}B_1^v(S)$, $B_2(S)=\bigcup_{v\in V_2(S)}B_2^v(S)$, $H_1(S)=\bigcup_{v\in V(S)}H_1^v(S)$ and $H_2(S)=\bigcup_{v\in V(S)}H_2^v(S)$.

We have for each item $v\in V_1(S)$ that $P^v(S)=B_1^v(S)\cup H_1^v(S) \cup H_2^v(S)$ and for each item $v\in V_2(S)$ that $P^v(S)=B_2^v(S)\cup H_1^v(S) \cup H_2^v(S).$ We also have $\bigcup_{v\in V(S)}P^v(S)=P(S).$

\subsection{Analyzing the services of type A}
\label{section_multi_item_holdBack_a}

The goal of this subsection is to prove lemma \ref{hold_back_multi_item.lemma.type_abd} for the case $S\in\Psi_A$. First, we have the following lemma:

\begin{lemma}
\label{hold_back_multi_item.lemma.type_a_single_request}
Let $S\in\Psi_A$ be a service and let $p\in B_1(S)$ be a request which $ALG$ served in $S$ during the mature backlog phase. Then $$c_p^{ALG}\leq c_p^{OPT}.$$
\end{lemma}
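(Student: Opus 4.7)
The plan is to mirror the proof of Lemma~3.3 (the single-item analog for $\Psi_A$) almost verbatim, since the argument depends only on the ordering of service times and the monotonicity of the backlog cost function, neither of which interacts with the multi-item structure.

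First I would observe that since $p \in B_1(S)$, the request $p$ was satisfied in the mature backlog phase of $S$. By definition of the mature backlog phase, only overdue requests (those with $d_p < t_p^{ALG}$, where $t_p^{ALG}$ is the time $S$ was triggered) are served there, so we have $d_p \leq t_p^{ALG}$. Next, since $S \in \Psi_A$, the service $S$ was triggered before $OPT$ performed any service at all; in particular, $OPT$ had not yet served $p$ by time $t_p^{ALG}$, which gives $t_p^{ALG} \leq t_p^{OPT}$. Chaining these, we obtain $d_p \leq t_p^{ALG} \leq t_p^{OPT}$, so both $t_p^{ALG}$ and $t_p^{OPT}$ lie in the backlog range $[d_p, \infty)$ on which $c_p$ is monotone non-decreasing. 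Therefore $c_p^{ALG} = c_p(t_p^{ALG}) \leq c_p(t_p^{OPT}) = c_p^{OPT}$, which is exactly the desired inequality.

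There is no genuine obstacle here: the multi-item setting does not alter the argument because we only reason about a single request and its own backlog function, and the event-triggering mechanism only affects \emph{when} $ALG$ triggers $S$, not the fact that $OPT$ has not yet acted on $p$. The only subtlety worth flagging is the definitional one: one should verify from the algorithm that in the mature backlog phase every satisfied request is indeed overdue, which follows directly from the definitions of $A_t^v$ and of $b_t(v) = \sum_{p \in A_t^v,\, d_p < t} c_p(t)$ together with the fact that only items $v$ with $b_t(v) \geq c(v)$ are included in $V_1(S)$ and only their overdue requests are served.
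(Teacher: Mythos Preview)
Your proposal is correct and follows essentially the same approach as the paper: the paper's proof simply states that it follows the proof of the single-item analog (Lemma~\ref{hold_back.lemma.type_a_single_request}) with the backlog phase replaced by the mature backlog phase, which is exactly what you do. Your additional remarks about why the multi-item structure does not interfere are accurate and harmless.
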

\begin{proof}
The proof follows the proof of Lemma \ref{hold_back.lemma.type_a_single_request} in subsection \ref{section_holdBack_single_item_a} where instead of referring to the backlog phase of Algorithm \ref{hold_back_single_item.alg}, we refer to the mature backlog phase of Algorithm \ref{hold_back_multi_item.alg}.
\end{proof}

We now prove lemma \ref{hold_back_multi_item.lemma.type_abd} for the case $S\in\Psi_A$.

\begin{proof}[Proof of Lemma \ref{hold_back_multi_item.lemma.type_abd} (case $S\in\Psi_A$)]
We have $$c(r)\leq c(r)+\sum_{v\in V_1(S)}c(v)=\sum_{p\in B_1(S)}c_p^{ALG}\leq \sum_{p\in B_1(S)}c_p^{OPT}\leq \sum_{p\in P(S)}c_p^{OPT}$$ where the equality is since the service $S$ was triggered when the variable $b(r)$, which equals to the sum of the surplus backlog costs for the mature items, reached the value of $c(r)$, and the variable $b(v)$ for each item $v\in B_1(S)$ had a value of $c(v)$, due to the item $v$ being mature. The second inequality follows by summing for the requests $p\in B_1(S)$ and using lemma \ref{hold_back_multi_item.lemma.type_a_single_request}.
\end{proof}

\subsection{Analyzing the services of type B}
\label{section_multi_item_holdBack_b}

The goal of this subsection is to prove lemma \ref{hold_back_multi_item.lemma.type_abd} for the case $S\in\Psi_B$. We begin by presenting this observation which follows from the definition of $\Psi_B$.
Let $t_0$ be the time $ALG$ triggered the service $S$ and let $t_1$ be the time of the next service triggered by $OPT$ after the service $S$.

\begin{observation}
\label{hold_back_multi_item.lemma.type_b_properties}
For each service $S\in\Psi_B$ we have:
\begin{enumerate}
\item  For each request $p$ that was served during the global holding phase of $S$ we have $\hat{d_p^{t_0}}\leq t_1$.
\item We have $$c(r)\leq \sum_{p\in H_2(S)}c_p^{ALG}.$$
\end{enumerate}
\end{observation}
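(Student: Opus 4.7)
The plan is to observe that both parts of the observation are essentially the direct negation of the two conditions used in Definition~\ref{definition_partition_psi_alg_multi_item} to pick out the service $J_i$, applied to a service $S \in \Psi_B$ that precedes $J_i$.

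First I would fix the index $i$ such that $S$ lies strictly between $S_i$ and $J_i$ (or, if $J_i$ does not exist, strictly between $S_i$ and $S_{i+1}$), so in particular $t_0$ is between $t_i$ and $t_{i+1} = t_1$. By definition, $J_i$ is the \emph{first} service of $R$ in this interval for which either (a) some request served in the global holding phase has virtual deadline (w.r.t.\ the service time) at least $t_{i+1}$, or (b) the total holding cost of requests served in the global holding phase is at most $c(r)$. Since $S \in \Psi_B$ is strictly earlier than $J_i$ (and in the no-$J_i$ case no service in the interval satisfies (a) or (b) at all), $S$ satisfies \textbf{neither} (a) nor (b).

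For part~1, negating condition (a) for $S$ yields: every request $p$ served in the global holding phase of $S$ has $\hat{d_p^{t_0}} < t_{i+1} = t_1$, which in particular gives $\hat{d_p^{t_0}} \le t_1$. For part~2, negating condition (b) for $S$ yields that the sum of the holding costs of the requests served in the global holding phase of $S$ is strictly greater than $c(r)$. Since these are exactly the requests in $H_2(S)$, and each such request $p$ was served at time $t_0 \le d_p$ (it is still in its holding period in the global holding phase) so that $c_p^{ALG} = c_p(t_0)$ is precisely the contribution of $p$ to that sum, we conclude $c(r) < \sum_{p \in H_2(S)} c_p^{ALG}$, which implies the stated inequality.

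I do not expect any real obstacle; the observation is a bookkeeping statement extracted from the partition definition. The only mild care needed is to handle the degenerate case where $J_i$ does not exist (so every service of $R$ between $S_i$ and $S_{i+1}$ lies in $\Psi_B$): by the definition, if $J_i$ does not exist then \emph{no} service in that interval satisfies either (a) or (b), so the same negations apply to $S$, and the argument goes through unchanged.
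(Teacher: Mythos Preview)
Your proposal is correct and matches the paper's own treatment: the paper simply states that the observation ``follows from the definition of $\Psi_B$'' without giving further detail, and your argument spells out exactly this, negating the two defining conditions for $J_i$ and handling the degenerate case where $J_i$ does not exist.
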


Note that Observation \ref{hold_back_multi_item.lemma.type_b_properties} is equivalent to Observation \ref{hold_back.lemma.type_b_properties} where the cost of $c$ is replaced with $c(r)$ and instead of referring to the holding phase of Algorithm \ref{hold_back_single_item.alg}, we refer to the global holding phase of Algorithm \ref{hold_back_multi_item.alg}. This yields the following lemma:

\begin{lemma}
\label{hold_back_multi_item.lemma.type_b_single_request}
Let $S\in\Psi_B$ be a service and let $p$ be a request which $ALG$ served in $S$ during the global holding phase. Then $$c_p^{ALG}\leq c_p^{OPT}.$$
\end{lemma}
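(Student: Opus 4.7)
The plan is to adapt the proof of Lemma \ref{hold_back.lemma.type_b_single_request} from the single-item setting essentially verbatim, since the paper already observes that Observation \ref{hold_back_multi_item.lemma.type_b_properties} mirrors Observation \ref{hold_back.lemma.type_b_properties} with $c$ replaced by $c(r)$ and the single-item holding phase replaced by the global holding phase. The key ingredient that made the single-item argument work -- virtual deadlines with respect to the service time, together with the fact that $ALG$'s service time lies before $p$'s deadline -- is preserved in the global holding phase of Algorithm \ref{hold_back_multi_item.alg}.

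I would begin by setting $t_0 = t_p^{ALG}$, which is the time $ALG$ triggered the service $S$. Since $p$ was satisfied during the global holding phase of $S$, we have $t_0 \leq d_p$. Then I would split into two cases based on whether $OPT$ serves $p$ before or after $ALG$.

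In the first case, $t_p^{OPT} \leq t_0$, giving $t_p^{OPT} \leq t_0 \leq d_p$, so both service times lie in the holding region of $c_p$, where the cost function is monotone non-increasing. Hence $c_p^{ALG} = c_p(t_0) \leq c_p(t_p^{OPT}) = c_p^{OPT}$. In the second case, $t_0 < t_p^{OPT}$, which cannot occur when the next $OPT$ service after $S$ is the dummy service (the dummy serves no requests), so there is a genuine next $OPT$ service time $t_1$. Applying part 1 of Observation \ref{hold_back_multi_item.lemma.type_b_properties} to the request $p \in H_2(S)$, we obtain $\hat{d_p^{t_0}} \leq t_1$, and therefore
\[
t_0 \leq d_p \leq \hat{d_p^{t_0}} \leq t_1 \leq t_p^{OPT}.
\]
Using the definition of the virtual deadline and monotonicity of the backlog cost on $[d_p,\infty)$ yields
\[
c_p^{ALG} = c_p(t_0) = c_p(\hat{d_p^{t_0}}) \leq c_p(t_p^{OPT}) = c_p^{OPT},
\]
completing the argument.

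Since the paper has already packaged the relevant structural property of $\Psi_B$ services into Observation \ref{hold_back_multi_item.lemma.type_b_properties}, I do not anticipate any real obstacle here; the only thing to double-check is that the request $p$ being in the global holding phase (rather than the local holding phase or a backlog phase) is precisely what allows us to invoke part 1 of that observation, and this is immediate from the definition of $H_2(S)$.
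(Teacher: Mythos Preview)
Your proposal is correct and is exactly the approach the paper takes: it simply instructs the reader to repeat the proof of Lemma \ref{hold_back.lemma.type_b_single_request}, replacing the holding phase by the global holding phase and invoking Observation \ref{hold_back_multi_item.lemma.type_b_properties} in place of Observation \ref{hold_back.lemma.type_b_properties}. Your write-up actually spells out the two cases and the chain of inequalities more explicitly than the paper does.
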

\begin{proof}
The proof follows the proof of Lemma \ref{hold_back.lemma.type_b_single_request} in subsection \ref{section_holdBack_single_item_b} where instead of referring to the holding phase of Algorithm \ref{hold_back_single_item.alg}, we refer to the global holding phase of Algorithm \ref{hold_back_multi_item.alg}. Also, in the proof, Observation \ref{hold_back_multi_item.lemma.type_b_properties} should be used rather than Observation \ref{hold_back.lemma.type_b_properties}.
\end{proof}

We now prove lemma \ref{hold_back_multi_item.lemma.type_abd} for the case $S\in\Psi_B$.

\begin{proof}[Proof of Lemma \ref{hold_back_multi_item.lemma.type_abd} (case $S\in\Psi_B$)]
We have $$c(r)\leq \sum_{p\in H_2(S)}c_p^{ALG}\leq \sum_{p\in H_2(S)}c_p^{OPT}\leq \sum_{p\in P(S)}c_p^{OPT}$$

where the first inequality is due to Observation \ref{hold_back_multi_item.lemma.type_b_properties} (part 2) and the second inequality follows by summing for the requests in $H_2(S)$ and using lemma \ref{hold_back_multi_item.lemma.type_b_single_request}.
\end{proof}

\subsection{Analyzing the services of type D}
\label{section_multi_item_holdBack_d}

The target of this subsection is to prove lemma \ref{hold_back_multi_item.lemma.type_abd} for the case $S\in\Psi_D$. Let $S\in\Psi_D$ be a service that $ALG$ triggered, and let $t_1$ be the time when $ALG$ triggered $S$. Let $S_C\in\Psi_C$ be the furthest service in $\Psi_C$ that was triggered by $ALG$ before the service $S$ and let $t_0$ be the time when $ALG$ triggered the service $S_C$. Since $S\in\Psi_D$ there were not any services triggered by $OPT$ between the two services $S_C$ and $S$. In order to prove Lemma \ref{hold_back_multi_item.lemma.type_abd} for the case $S\in\Psi_D$, we will analyze the requests that were served during the mature backlog phase of the service $S$. We begin with the requests for items that were served in the service $S_C$ (this is done in the following lemma) and then we will consider the  requests for items that were not served in the service $S_C$.
\begin{lemma} \label{hold_back_multi_item_lemma_tyep_d_yes_s_c}
Let $v\in V_1(S)\cap V(S_C)$ be an item that was included in the mature backlog phase of the service $S$ and was also served in the service $S_C$. Let $p\in B_1^v(S)$ be a request for the item $v$ that was served during the mature backlog phase of $S$. If we have $c_p^{OPT}\leq c(r)$ then we have $$c_p^{ALG}\leq c_p^{OPT}.$$
\end{lemma}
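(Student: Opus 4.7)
The plan is to mirror the argument of Lemma \ref{hold_back.lemma.type_d_single_request} from the single-item analysis, adapting it to the multi-item setting by exploiting the fact that $v$ appears in $V(S_C)$, so the global holding phase of $S_C$ is a natural venue for serving $p$. I would first split on whether $OPT$ serves $p$ after or before $ALG$. The easy case is $t_p^{OPT}\ge t_p^{ALG}=t_1$: since $p\in B_1^v(S)$ we have $d_p\le t_1\le t_p^{OPT}$, and the monotonicity of the backlog cost on $[d_p,\infty)$ directly yields $c_p^{ALG}\le c_p^{OPT}$. The remainder of the proof focuses on the case $t_p^{OPT}<t_1$.

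In that hard case, I would first nail down the chain of inequalities
\[
t_p^{OPT}\le t_0\le d_p\le t_1=t_p^{ALG}\le t_{\text{next}},
\]
where $t_{\text{next}}$ denotes the time of the first $OPT$-service after $S_C$. The inequality $t_p^{OPT}\le t_0$ follows because $S\in\Psi_D$ means no $OPT$-service occurs strictly between $S_C$ and $S$, so $OPT$ must have served $p$ at or before $t_0$. The inequality $t_0\le d_p$ uses the structure of Algorithm \ref{hold_back_multi_item.alg}: since $v\in V(S_C)$, both the mature and premature backlog phases of $S_C$ serve every overdue request for $v$; if $d_p<t_0$, then $p$ would have been served in $S_C$, contradicting $p\in B_1^v(S)$. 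The bound $d_p\le t_1$ is immediate from $p\in B_1^v(S)$, and $t_1\le t_{\text{next}}$ holds by the definition of $\Psi_D$.

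The key step is to show $t_1\le \hat{d_p^{t_0}}$, exploiting $S_C\in\Psi_C$. Using the hypothesis $c_p^{OPT}\le c(r)$ together with holding-cost monotonicity on $[a_p,d_p]$, I deduce $c_p(t_0)\le c_p(t_p^{OPT})=c_p^{OPT}\le c(r)$. Since $p$ is active at time $t_0$ in $ALG$, is a request for an item in $V(S_C)$, and has cost at most $c(r)$ at time $t_0$, it is an eligible candidate for the global holding phase of $S_C$; the fact that $p\in B_1^v(S)$ ensures it was not served earlier in the local holding phase of $v$ in $S_C$ either. Now invoke the defining property of $J_i=S_C$: either the global holding phase of $S_C$ accumulated total holding cost at most $c(r)$, or it included some request $q$ with $\hat{d_q^{t_0}}\ge t_{\text{next}}$. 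The former is ruled out, because then adding $p$ would have kept the sum at most $2c(r)$, so $p$ would have been served. Hence the latter holds, and because the global holding phase processes eligible candidates in non-decreasing order of virtual deadline with respect to $t_0$, the fact that $q$ was served while $p$ was not forces $\hat{d_p^{t_0}}\ge \hat{d_q^{t_0}}\ge t_{\text{next}}\ge t_1$.

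Finally I would chain the monotonicity bounds:
\[
c_p^{ALG}=c_p(t_1)\le c_p\bigl(\hat{d_p^{t_0}}\bigr)=c_p(t_0)\le c_p(t_p^{OPT})=c_p^{OPT},
\]
using backlog monotonicity on $[d_p,\infty)$ for the first inequality, the definition of virtual deadline for the equality, and holding-cost monotonicity on $[a_p,d_p]$ for the last inequality. The main obstacle compared to the single-item proof is the presence of two holding phases in $S_C$: I need to verify that $p$ was still a viable candidate for the global holding phase, which is guaranteed precisely by the assumption $p\in B_1^v(S)$ (ensuring $p$ was not served in either holding phase of $S_C$) together with the cost bound $c_p(t_0)\le c(r)$.
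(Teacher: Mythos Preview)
Your proof is correct and follows essentially the same approach as the paper, which simply instructs the reader to replay the proof of Lemma~\ref{hold_back.lemma.type_d_single_request} with the mature backlog phase replacing the backlog phase, the global holding phase replacing the holding phase, and $c=c(r)$. You have in fact been a bit more careful than the paper by explicitly noting why the presence of the local holding phase in $S_C$ does not interfere (namely, $p\in B_1^v(S)$ guarantees $p$ survived both holding phases of $S_C$ and remains an eligible candidate for the global holding phase), a point the paper dismisses in a single sentence.
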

\begin{proof}
    The proof follows the proof of Lemma \ref{hold_back.lemma.type_d_single_request} in subsection \ref{section_holdBack_single_item_d} where instead of referring to the backlog phase of Algorithm \ref{hold_back_single_item.alg}, we refer to the mature backlog phase of Algorithm \ref{hold_back_multi_item.alg}, instead of referring to the holding phase of Algorithm \ref{hold_back_single_item.alg}, we refer to the global holding phase of Algorithm \ref{hold_back_multi_item.alg}, and we also use $c=c(r)$. Note that the fact that Algorithm \ref{hold_back_multi_item.alg} also performs the premature backlog phase and local holding phase does not affect the proof.
\end{proof}
Now we analyze the items that were included in the mature backlog phase of the service $S$ and were not served in the service $S_C$. Let $v\in V_1(S)\setminus V(S_C)$ be such an item. 

Recall that each request $p\in B_1^v(S)$ for the item $v$ that was satisfied by the service $S$ in the mature backlog phase has $d_p\leq t_1$ since $p$ was served during the mature backlog phase of the service $S$. We partition the set $B_1^v(S)$ to the following three disjoint sets:
\begin{itemize}
\item $P_1$: the requests $p\in B_1^v(S)$ with $a_p\leq t_0$ \textbf{and} ( $d_p\leq t_0$ \textbf{or} $c_p(t_0)\leq c_p(t_1)$).
\item $P_2$: the requests $p\in B_1^v(S)$ with $a_p\leq t_0$ \textbf{and} $d_p> t_0$ \textbf{and} $c_p(t_0)>c_p(t_1)$.
\item $P_3$: the requests $p\in B_1^v(S)$ for which $t_0<a_p\leq t_1$. 
\end{itemize}
Observe that the requests in $P_1$ and $P_2$ have arrived before $ALG$ triggered the service $S_C$. The requests of $P_3$ have arrived after $ALG$ triggered the service $S_C$ (but before $ALG$ triggered the service $S$). Also observe that $B_1^v(S)=P_1\cup P_2\cup P_3$ and that these three sets are disjoint.

The following three lemmas deal with the requests $P_1$, $P_2$ and $P_3$ respectively.
\begin{lemma}
\label{hold_back_multi_item.lemma.type_d_p_1}
We have $$\sum_{p\in P_1}c_p^{ALG}< c(v).$$
\end{lemma}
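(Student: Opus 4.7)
The plan is to establish $\hat{mature}_{t_0}(v) > t_1$ and then read off the desired bound directly from the definition of $\hat{mature}_{t_0}(v)$. First I would verify the preconditions: every $p\in P_1$ satisfies $a_p\leq t_0$ and is unsatisfied by $ALG$ at $t_0$ (since $ALG$ serves $p$ only at $t_1$ via $S$), so $P_1\subseteq A_{t_0}^v$, and in particular $v$ has at least one active request at $t_0$; moreover $v\notin V_1(S_C)$ (as $v\notin V(S_C)$), so $v$ was premature at $t_0$. Thus $\hat{mature}_{t_0}(v)$ is well-defined and $v$ was a genuine candidate in the premature backlog phase of $S_C$.

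For the key step, I would suppose for contradiction that $\hat{mature}_{t_0}(v)\leq t_1$. The premature backlog phase of $S_C$ processes premature items in non-decreasing order of $\hat{mature}_{t_0}(\cdot)$ and adds them as long as the total item cost stays within $2c(r)$. Because $v$ was not picked (and, invoking the preamble of Section~\ref{section_multi_item}, we may assume $c(v)\leq c(r)$), the items placed in $V_2(S_C)$ before $v$'s turn must already have cumulative cost at least $2c(r)-c(v)\geq c(r)$, and by the sort order each such item $v'$ satisfies $\hat{mature}_{t_0}(v')\leq \hat{mature}_{t_0}(v)\leq t_1\leq t_{i+1}$, where $S_{i+1}$ is the OPT service immediately following $S_C$ (which is weakly after $S\in\Psi_D$). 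These are precisely the two defining conditions of $\Psi_G$, so $S_C\in\Psi_G$; this contradicts $S_C\in\Psi_C$, since $\Psi_G$ is removed from $R$ before $\Psi_C$ is defined. Hence $\hat{mature}_{t_0}(v)>t_1$.

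With this in hand, the very definition of $\hat{mature}_{t_0}(v)$ applied at $t'=t_1<\hat{mature}_{t_0}(v)$ yields
\[
\sum_{p\in A_{t_0}^v,\, d_p\leq t_0} c_p(t_1)\;+\sum_{p\in A_{t_0}^v,\, t_0<d_p\leq t_1,\, c_p(t_0)\leq c_p(t_1)} c_p(t_1)\;<\;c(v).
\]
Every $p\in P_1$ lies in $A_{t_0}^v$ and is overdue at $t_1$ (as $p\in B_1^v(S)$, so $d_p\leq t_1$), and is split by its defining disjunction into the first sum (case $d_p\leq t_0$) or the second sum (case $d_p>t_0$ with $c_p(t_0)\leq c_p(t_1)$). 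Since $c_p^{ALG}=c_p(t_1)$ for each such $p$, summing over $P_1$ produces $\sum_{p\in P_1}c_p^{ALG}<c(v)$.

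The main subtlety I expect to navigate is the greedy-knapsack step in paragraph two: pinning down that the items added before $v$'s turn genuinely have total cost at least $c(r)$ despite the mild ambiguity in the phrase ``as long as the sum of their costs $c(v)$ is at most $2\cdot c(r)$''. Under any reasonable interpretation (stop on first overflow, or skip overflowing items), the sorted prefix ending just before $v$ must use at least $2c(r)-c(v)$ of the budget, and the assumption $c(v)\leq c(r)$ then cleanly gives the $c(r)$ threshold needed to invoke $\Psi_G$.
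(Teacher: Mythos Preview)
Your proof is correct and follows essentially the same approach as the paper's: both argue that $\hat{mature}_{t_0}(v)\leq t_1$ would contradict $S_C\in\Psi_C$ (equivalently, would force $S_C\in\Psi_G$), and then extract the sum bound from the definition of $\hat{mature}$. The paper phrases the contradiction as a two-case split on which $\Psi_G$ condition fails, while you argue the contrapositive directly; these are the same argument.

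One correction worth making: you justify $c(v)\leq c(r)$ by invoking the preamble of Section~\ref{section_multi_item}, but that preamble explicitly says ``although no assumptions are made.'' The right source for this bound is already in your hypotheses: since $S\in\Psi_D$ we have $S\notin\Psi_H$, hence $\sum_{v'\in V_1(S)}c(v')<c(r)$, and as $v\in V_1(S)$ this gives $c(v)<c(r)$. (The paper's own Case~1 argument uses exactly this bound, silently.) With that fix your greedy-knapsack step goes through cleanly, and the rest of your argument---matching $P_1$ term-by-term against the two sums in the $\hat{mature}_{t_0}(v)$ definition at $t'=t_1$---is spelled out more explicitly than in the paper.
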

\begin{proof}
Let $t_2$ be the time of the next service that $OPT$ triggered after $ALG$ triggered the service $S$. Recall that $ALG$ triggered the service $S$ at time $t_1$ and thus $$t_0< t_1\leq t_2.$$ Since $S_C\in\Psi_C$ rather than $S_C\in\Psi_G$ at least one of the following two cases occurred:

\begin{enumerate}
\item We have $$\sum_{v'\in V_2(S_C)}c(v')<c(r).$$
\item There is an item $v'\in V_2(S_C)$ such that $t_2<\hat{mature}_{t_0}(v').$
\end{enumerate}
Assume by contradiction that $$c(v)\leq\sum_{p\in P_1}c_p^{ALG}.$$
In other words we have
$$c(v)\leq\sum_{p\in P_1}c_p(t_1).$$
Hence, due to the monotonicity of the backlog costs and the definition of $P_1$ we get that $$\hat{mature}_{t_0}(v)\leq t_1.$$

Consider the scenario in which case 1 occurred. $ALG$ could include the item $v$ in the premature phase of the service $S_C$ and this would still be valid because the sum of the costs of the items in the premature phase would be less than $2\cdot c(r)$ even if $ALG$ included the item $v$ in this phase. This yields a contradiction. 

Now, consider the scenario in which case 2 occurred. The only reason for $ALG$ to choose the item $v'$ to be included in the premature backlog phase of the service $S_C$ rather than the item $v$ is that $$\hat{mature}_{t_0}(v')\leq \hat{mature}_{t_0}(v),$$ that is since $ALG$ chose the items to include in the premature phase in non-decreasing order of $\hat{mature}_{t_0}$. Hence,
$$t_2<\hat{mature}_{t_0}(v')\leq \hat{mature}_{t_0}(v)\leq t_1$$ and that is a contradiction.
\end{proof}
Now that we dealt with the requests $P_1$ we deal with the requests $P_2.$
\begin{lemma}
\label{hold_back_multi_item.lemma.type_d_p_2}
For each request $p\in P_2$ we have $$c_p^{ALG}\leq c_p^{OPT}.$$
\end{lemma}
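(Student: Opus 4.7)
The plan is to argue by a clean case split on when $OPT$ serves the request $p$, exploiting the three defining conditions of $P_2$: $a_p \leq t_0$, $d_p > t_0$, and $c_p(t_0) > c_p(t_1)$. First I would note that since $p \in B_1^v(S)$ is served by $ALG$ in the mature backlog phase at time $t_1$, we have $d_p \leq t_1$ and thus $c_p^{ALG} = c_p(t_1)$ is a backlog cost. Next I would invoke the defining feature of $\Psi_D$: between $S_C$ (triggered at $t_0$) and $S$ (triggered at $t_1$) there are no $OPT$ services, and in fact if we write the surrounding $OPT$ services as $S_i, S_{i+1} \in \Psi_{OPT}$ at times $t_i < t_{i+1}$, then $t_i < t_0 < t_1 < t_{i+1}$; consequently either $t_p^{OPT} \leq t_i$ or $t_p^{OPT} \geq t_{i+1}$.

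In the first case, $a_p \leq t_p^{OPT} \leq t_i < t_0 < d_p$, so $OPT$ pays a holding cost for $p$. By the monotonicity (non-increasing) of the holding cost on $[a_p, d_p]$, we obtain $c_p^{OPT} = c_p(t_p^{OPT}) \geq c_p(t_0)$, and then the $P_2$ inequality $c_p(t_0) > c_p(t_1)$ yields $c_p^{OPT} > c_p(t_1) = c_p^{ALG}$. In the second case, $d_p \leq t_1 < t_{i+1} \leq t_p^{OPT}$, so $OPT$ pays a backlog cost, and the monotonicity (non-decreasing) of the backlog cost on $[d_p, \infty)$ gives $c_p^{OPT} = c_p(t_p^{OPT}) \geq c_p(t_1) = c_p^{ALG}$. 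Combining the two cases yields the claim.

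The main obstacle, which for this lemma is rather mild, is confirming exhaustiveness of the case split. One should also check the boundary situation in which $S_{i+1}$ is the dummy $OPT$ service added after all real services: since the dummy serves no request, $OPT$ must have served $p$ strictly before $t_1$, which (by the absence of $OPT$ services in $(t_i, t_{i+1})$) forces $t_p^{OPT} \leq t_i$ and reduces to the first case. Unlike the analogous Lemma \ref{hold_back_multi_item_lemma_tyep_d_yes_s_c} (which required a nontrivial argument about virtual deadlines at $S_C$), here the proof is essentially immediate because the condition $c_p(t_0) > c_p(t_1)$ is built directly into the definition of $P_2$ and converts a cost comparison at any time $\leq t_0$ into a cost comparison at $t_1$ for free.
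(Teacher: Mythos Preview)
Your proposal is correct and follows essentially the same approach as the paper: a case split on whether $OPT$ serves $p$ before or after the window $(t_0,t_1]$, using the absence of $OPT$ services between $S_C$ and $S$, the monotonicity of holding/backlog costs, and the defining inequality $c_p(t_0) > c_p(t_1)$ of $P_2$. The only cosmetic difference is that you phrase the split via the bracketing $OPT$ times $t_i,t_{i+1}$, whereas the paper simply argues that $t_p^{OPT} < t_1$ forces $t_p^{OPT} \leq t_0$; the logic is identical.
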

\begin{proof}
Recall that $ALG$ served the request $t$ during the mature backlog phase of the service $S$, which occurred at time $t_1$, i.e. $t_p^{ALG}=t_1$. In case $OPT$ served the request $p$ after $ALG$ served it, i.e. $t_1\leq t_p^{OPT}$ we have $d_p\leq t_1\leq t_p^{OPT}$ and thus due to the monotonicity of the backlog cost we have $c_p^{ALG}\leq c_p^{OPT}.$ Otherwise we are in the case where $OPT$ served the request $p$ before $ALG$ served $p$. This implies $t_p^{OPT}\leq t_0$ because there were no services triggered by $OPT$ between the service $S_C$, which $ALG$ triggered at time $t_0$, and the service $S$, which $ALG$ triggered at time $t_1$ and in which $ALG$ served the request $p$. We have $$t_p^{OPT}\leq t_0<d_p\leq t_1$$ and hence $$c_p^{ALG}=c_p(t_p^{ALG})=c_p(t_1)<c_p(t_0)\leq c_p(t_p^{OPT})=c_p^{OPT}$$
where the first inequality follows from $p\in P_2$ and the second inequality holds due to the monotonicity of the holding costs. 
\end{proof}
Now we will deal with the requests $P_3.$
\begin{lemma}
\label{hold_back_multi_item.lemma.type_d_p_3}
For each request $p\in P_3$ we have $$c_p^{ALG}\leq c_p^{OPT}.$$
\end{lemma}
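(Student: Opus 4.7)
The plan is to locate where $OPT$ must serve $p$ relative to the time $t_1$ at which $ALG$ serves $p$, and then invoke monotonicity of the backlog cost. Let $S_i$ (at time $t_i$) and $S_{i+1}$ (at time $t_{i+1}$) be the two consecutive $OPT$ services for which both $S_C$ and $S$ lie strictly between them, so that $t_i<t_0<t_1<t_{i+1}$ and, by the definition of $\Psi_D$, there are no $OPT$ services in the open interval $(t_i,t_{i+1})$. By the definition of $P_3$ we have $a_p>t_0$, so in particular $a_p>t_i$.

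First I would argue that $t_p^{OPT}\ge t_{i+1}$. Since $OPT$ cannot serve $p$ before its arrival $a_p$, and since no $OPT$ service occurs in $(t_i,t_{i+1})$, $OPT$ must serve $p$ at some service occurring at time at least $t_{i+1}$. A small technicality is to rule out the case that $S_{i+1}$ is the dummy service: if it were, then $OPT$ would never serve $p$ at all, contradicting the feasibility of $OPT$. Hence $S_{i+1}$ is a real $OPT$ service, and we obtain $t_p^{OPT}\ge t_{i+1}>t_1=t_p^{ALG}$.

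Next I would use that, since $p\in B_1^v(S)$ was served by $ALG$ in the mature backlog phase of $S$, we have $d_p<t_1$. Consequently both $t_p^{ALG}=t_1$ and $t_p^{OPT}\ge t_{i+1}$ lie in the backlog region $[d_p,\infty)$ of $c_p$, and $t_p^{ALG}\le t_p^{OPT}$. The monotonicity (non-decreasing) of $c_p$ on $[d_p,\infty)$ then yields
$$c_p^{ALG}=c_p(t_1)\le c_p(t_p^{OPT})=c_p^{OPT},$$
which is exactly the desired inequality.

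The only subtle point I anticipate is the dummy-service edge case, handled by the brief feasibility argument for $OPT$ above; beyond that, the proof merely unpacks the interval structure enforced by $S\in\Psi_D$ and invokes backlog monotonicity. This step is notably simpler than the parallel arguments for $P_1$ and $P_2$, because the late arrival condition $a_p>t_0$ together with the absence of $OPT$ services in $(t_i,t_{i+1})$ already forces $OPT$ to serve $p$ no earlier than $ALG$, so no virtual-deadline bookkeeping or prioritization argument is needed.
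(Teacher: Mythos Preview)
Your proof is correct and follows essentially the same approach as the paper's: both argue that since $a_p>t_0$ and there are no $OPT$ services between $S_C$ and $S$, $OPT$ must serve $p$ no earlier than $t_1=t_p^{ALG}$, and then invoke monotonicity of the backlog cost. Your version is simply more explicit about the surrounding $OPT$ services $S_i,S_{i+1}$ and the dummy-service edge case, whereas the paper compresses this into a single chain of inequalities $t_0<a_p\le d_p\le t_1\le t_p^{OPT}$.
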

\begin{proof}
%Since $p\in P_3$ then we have that the request $p$ arrived after the service $S_C$ was triggered by $ALG$. Since there were no services triggered by $OPT$ between the services $S_C$ and $S$, we have that $OPT$ served the request $p$ after $ALG$ triggered the service $S$, in other words we have
First,
$$t_0<a_p\leq d_p\leq \underbrace{t_1}_{=t_p^{ALG}}\leq t_p^{OPT}$$ where the first inequality is due to $p\in P_3$ and the last inequality is since $a_p\leq t_p^{OPT}$ and since between the services $S_C$ at time $t_0$ and $S$ at time $t_1$ there were no services that $OPT$ triggered, so $OPT$ had to serve the request $p$ after the service $S$. Due to the monotonicity of the backlog costs we have $c_p^{ALG}\leq c_p^{OPT}$. 
\end{proof}
Using what the claims on requests $P_1$, $P_2$ and $P_3$, we can complete the analysis of every item $v$ that was included in the mature backlog phase of the service $S$ and was not served in the service $S_C$.
\begin{lemma} \label{hold_back_multi_item_lemma_tyep_d_no_s_c}
For each item $v\in V_1(S)\setminus V(S_C)$ we have $$\left(\sum_{p\in B_1^v(S)}c_p^{ALG}\right)-c(v)\leq \sum_{p\in B_1^v(S)}c_p^{OPT}.$$
In other words, the sum of the surplus backlog costs for the item $v$ is upper bounded by the sum of costs $OPT$ paid for serving the requests that $ALG$ served in the mature backlog phase for the item $v$.
\end{lemma}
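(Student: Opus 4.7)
The plan is essentially pure bookkeeping: combine the three case-lemmas already established for the tripartite decomposition $B_1^v(S) = P_1 \cup P_2 \cup P_3$. Lemma \ref{hold_back_multi_item.lemma.type_d_p_1} bounds the total $ALG$-cost over $P_1$ by $c(v)$, which is exactly the slack term on the left-hand side; Lemmas \ref{hold_back_multi_item.lemma.type_d_p_2} and \ref{hold_back_multi_item.lemma.type_d_p_3} give the per-request domination $c_p^{ALG}\leq c_p^{OPT}$ on $P_2$ and $P_3$, which after summation covers the $OPT$-contribution on the right.

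Concretely, I would first observe that by the way $P_1$, $P_2$, $P_3$ are defined via complementary conditions on $a_p$, $d_p$, and the comparison $c_p(t_0)$ vs.\ $c_p(t_1)$, every request in $B_1^v(S)$ falls into exactly one of the three sets (the $a_p\leq t_0$ case splits cleanly into $P_1$ and $P_2$, and the $t_0<a_p\leq t_1$ case is $P_3$, where $a_p\leq t_1$ holds because $p$ was served at time $t_1$). Thus I can decompose
$$\sum_{p\in B_1^v(S)}c_p^{ALG} = \sum_{p\in P_1}c_p^{ALG} + \sum_{p\in P_2}c_p^{ALG} + \sum_{p\in P_3}c_p^{ALG}.$$

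Applying the three lemmas term-by-term and then extending the sum on the right from $P_2\cup P_3$ back up to $B_1^v(S)$ (using nonnegativity of the $P_1$ contributions to $OPT$), I obtain
$$\sum_{p\in B_1^v(S)}c_p^{ALG} \leq c(v) + \sum_{p\in P_2}c_p^{OPT} + \sum_{p\in P_3}c_p^{OPT} \leq c(v) + \sum_{p\in B_1^v(S)}c_p^{OPT}.$$
Subtracting $c(v)$ from both sides yields the stated inequality.

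Since all of the genuine analytical content has been absorbed into the three preceding lemmas, I do not expect any real obstacle here; the only points to verify carefully are that the partition is exhaustive and disjoint (as above), and that I am allowed to drop the $P_1$ summand on the $OPT$ side, which is immediate from nonnegativity of the cost functions.
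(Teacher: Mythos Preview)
Your proposal is correct and matches the paper's proof essentially line for line: decompose $B_1^v(S)$ into $P_1\cup P_2\cup P_3$, use Lemma~\ref{hold_back_multi_item.lemma.type_d_p_1} to absorb the $P_1$ sum into the $-c(v)$ slack, and use Lemmas~\ref{hold_back_multi_item.lemma.type_d_p_2} and~\ref{hold_back_multi_item.lemma.type_d_p_3} for per-request domination on $P_2$ and $P_3$, then add back the nonnegative $P_1$ contribution on the $OPT$ side. The paper presents the same chain of inequalities, just written with $-c(v)$ carried through rather than subtracted at the end.
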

\begin{proof}
We have 
\begin{align*}
    \left(\sum_{p\in B_1^v(S)}c_p^{ALG}\right)-c(v) 
    & =\sum_{p\in P_3}c_p^{ALG}+\sum_{p\in P_2}c_p^{ALG}+\sum_{p\in P_1}c_p^{ALG}-c(v)\\
    & < \sum_{p\in P_3}c_p^{ALG}+\sum_{p\in P_2}c_p^{ALG}\\
    & \leq \sum_{p\in P_3}c_p^{OPT}+\sum_{p\in P_2}c_p^{OPT} \\
    & \leq \sum_{p\in P_3}c_p^{OPT}+\sum_{p\in P_2}c_p^{OPT}+\sum_{p\in P_1}c_p^{OPT} %\\
    = \sum_{p\in B_1^v(S)}c_p^{OPT}
\end{align*}
where the first inequality follows from lemma \ref{hold_back_multi_item.lemma.type_d_p_1}. The second inequality follows from summing for all the requests $p\in P_2$ (and using lemma \ref{hold_back_multi_item.lemma.type_d_p_2}) and summing for all the requests $p\in P_3$ (and using lemma \ref{hold_back_multi_item.lemma.type_d_p_3}).
\end{proof}
We now prove lemma \ref{hold_back_multi_item.lemma.type_abd} for the case $S\in\Psi_D$.

\begin{proof}[Proof of Lemma \ref{hold_back_multi_item.lemma.type_abd} (case $S\in\Psi_D$)]
We have
\begin{align*}
c(r) & = c(r)+\sum_{v\in V_1(S)\setminus V(S_C)}c(v)-\sum_{v\in V_1(S)\setminus V(S_C)}c(v) \\
& \leq c(r)+\sum_{v\in V_1(S)}c(v)-\sum_{v\in V_1(S)\setminus V(S_C)}c(v) \\
& =\sum_{p\in B_1(S)}c_p^{ALG}-\sum_{v\in V_1(S)\setminus V(S_C)}c(v) \\
& = \sum_{v\in V_1(S)}\sum_{p\in B_1^v(S)}c_p^{ALG}-\sum_{v\in V_1(S)\setminus V(S_C)}c(v) \\ 
& = \sum_{v\in V_1(S)\cap V(S_C)}\sum_{p\in B_1^v(S)}c_p^{ALG}+\sum_{v\in V_1(S)\setminus V(S_C)}\left(\sum_{p\in B_1^v(S)}c_p^{ALG}\right)-c(v) \\
& \leq \sum_{v\in V_1(S)\cap V(S_C)}\sum_{p\in B_1^v(S)}c_p^{OPT}+\sum_{v\in V_1(S)\setminus V(S_C)}\sum_{p\in B_1^v(S)}c_p^{OPT} \\
& = \sum_{v\in V_1(S)}\sum_{p\in B_1^v(S)}c_p^{OPT} %\\
= \sum_{p\in B_1(S)}c_p^{OPT} %\\
\leq \sum_{p\in P(S)}c_p^{OPT}
\end{align*}
where the second equality is since the service $S$ was triggered when the variable $b(r)$, which equals to the sum of the surplus backlog costs for the mature items, reached the value of $c(r)$ and the variable $b(v)$ for each item $v\in B_1(S)$ had a value of $c(v)$, due to the item $v$ being mature. The second inequality follows by:
\begin{itemize}
\item Summing for all the items $v\in V_1(S)\cap V(S_C)$ and all the requests $p\in B_1^v(S)$, using lemma \ref{hold_back_multi_item_lemma_tyep_d_yes_s_c}.
\item Summing for all the items $v\in V_1(S)\setminus V(S_C)$, using lemma \ref{hold_back_multi_item_lemma_tyep_d_no_s_c}.
\end{itemize}
\end{proof}

\subsection{Analyzing the services of types G and H}
\label{section_multi_item_holdBack_gh}

The goal of this subsection is to prove lemma \ref{hold_back_multi_item.lemma.type_gh}. Fix $v\in V$. Our goal is thus to prove that

$$c(v)\cdot(|\Psi_G^{v}|+|\Psi_H^{v}|) \leq c(v)\cdot |\Psi_{OPT}^{v}| +\sum_{S\in \Psi_G^{v}\cup \Psi_H^{v}}\sum_{p\in P^{v}(S)}c_p^{OPT}.$$

Consider the subsequence of services, which contains only the services $\Psi_G^{v}\cup \Psi_H^{v}$ that $ALG$ triggered and the services $\Psi_{OPT}^v$ that $OPT$ triggered. For the purpose of the following definition, we add one dummy service of $OPT$ that will serve the item $v$ and does not serve any request after all the services of $\Psi_G^{v}\cup \Psi_H^{v}$ and $\Psi_{OPT}^v$ have been triggered. We do not charge $OPT$ for the cost of that dummy service. We partition $\Psi_G^{v}\cup \Psi_H^{v}$ to the following four disjoint sets $\Phi_A,\Phi_B,\Phi_C,\Phi_D$ as follows:

\begin{definition}
\label{definition_partition_phi_alg_multi_item} Firstly,
\begin{itemize}
\item Let $\Phi_A$ be all the services in $\Psi_G^{v}\cup \Psi_H^{v}$ that occurred before the first service of $\Psi_{OPT}^v$.
\end{itemize}
Given two consecutive services $S_i,S_{i+1}\in \Psi_{OPT}^v$ triggered at times $t_i$ and $t_{i+1}$ (where $t_i < t_{i+1}$) let $J_i$ be the first service of $\Psi_G^{v}\cup \Psi_H^{v}$ between $S_i$ and $S_{i+1}$ (triggered at time $y_i$) such that $J_i$ served at least one request in the local holding phase of the item $v$ with virtual deadline with respect to time $y_i$ of at least $t_{i+1}$ \textbf{or} the sum of all the holding costs of the requests $J_i$ served in its local holding phase of $v$ was of at most $c(v)$.
\begin{itemize}
\item Let $\Phi_B$ be union over $i$ of all the services of $\Psi_G^{v}\cup \Psi_H^{v}$ between $S_i$ and $J_i$ (excluding).
    If the service $J_i$ as defined above does not exist, then all the services of $\Psi_G^{v}\cup \Psi_H^{v}$ between $S_i$ and $S_{i+1}$ are in $\Psi_B$.
\item Let $\Phi_C$ be union over $i$ of all the services $J_i$ .
\item Let $\Phi_D$ be union over $i$ of all the services of $\Psi_G^{v}\cup \Psi_H^{v}$ between $J_i$ and $S_{i+1}$ (excluding).
\end{itemize}
\end{definition}

\begin{comment}
\begin{definition}
\label{definition_partition_phi_alg_multi_item}
As before, we add one dummy service of $OPT$ that does not serve any request after all the services of $ALG$ and $OPT$ have been triggered. This dummy service will serve the item $v$. We do not charge $OPT$ for the cost of that dummy service.

We partition the set of services $\Psi_G^{v}\cup \Psi_H^{v}$ that $ALG$ performed to the following four disjoint sets $\Phi_A,\Phi_B,\Phi_C,\Phi_D$ as follows:

All the services in $\Psi_G^{v}\cup \Psi_H^{v}$ that occurred before the first service of $\Psi_{OPT}^v$  will belong to $\Phi_A$.

For each two consecutive services of $S_1,S_2\in \Psi_{OPT}^v$:

\begin{itemize}
    \item Let $S\in \Psi_G^{v}\cup \Psi_H^{v}$ be the first service that $ALG$ performed between $S_1$ and $S_2$ such that $S$ served at least one request in the local holding phase of the item $v$ with virtual deadline of at least the time of $S_2$ \textbf{or} the sum of all the holding costs of the requests $S$ served in its local holding phase of $v$ was at most $c(v)$. 

    We define that all the services of $\Psi_G^{v}\cup \Psi_H^{v}$ between $S_1$ and $S$ (excluding) will be in $\Phi_B$, the service $S$ will be in $\Phi_C$ and the services of $\Psi_G^{v}\cup \Psi_H^{v}$ between $S$ and $S_2$ (excluding) will be in $\Phi_D$.

    \item If the service $S$ as defined above does not exist, then all the services of $\Psi_G^{v}\cup \Psi_H^{v}$ between $S_1$ and $S_2$ will be in $\Phi_B$.
\end{itemize}
\end{definition}
\end{comment}

Observe the similarities between this definition above and the partition of $\Psi_{ALG}$ in the analysis of the single item problem to the sets $\Psi_A$,$\Psi_B$,$\Psi_C$,$\Psi_D$.

Note that as in the analysis for the single item problem, we can consider a function that maps each service $S\in\Phi_C$ to the last service of $\Psi_{OPT}^v$ that occurred before it, this function is single-valued and the dummy service of $OPT$ from definition \ref{definition_partition_phi_alg_multi_item} does not have any service in $\Phi_C$ that is mapped to it, as in the analysis of the single item problem. This implies the following observation:

\begin{observation}
\label{hold_back_multi_item.lemma.gh.type_c}
    We have that $$|\Phi_C|\leq |\Psi_{OPT}^v|.$$
\end{observation}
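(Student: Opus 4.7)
The plan is to make precise the map sketched in the paragraph immediately preceding the observation, namely a map $f : \Phi_C \to \Psi_{OPT}^v$, and to show it is an injection whose image avoids the dummy service added in Definition \ref{definition_partition_phi_alg_multi_item}. For any $S \in \Phi_C$, Definition \ref{definition_partition_phi_alg_multi_item} asserts that $S = J_i$ for some index $i$, where $S_i, S_{i+1} \in \Psi_{OPT}^v$ are the two consecutive services of $\Psi_{OPT}^v$ between which $S$ was triggered (using the enlarged $\Psi_{OPT}^v$ that includes the dummy). I will define $f(S) := S_i$, the last service of $\Psi_{OPT}^v$ occurring strictly before $S$. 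This is single-valued because $S$ lies in exactly one such gap $(S_i, S_{i+1})$.

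Next I will verify injectivity. Suppose $f(S) = f(S') = S_i$ for $S, S' \in \Phi_C$. Then both $S$ and $S'$ fall between the same consecutive pair $S_i, S_{i+1}$, so by the construction of $\Phi_C$ both equal $J_i$ for the same $i$. But $J_i$ is designated by Definition \ref{definition_partition_phi_alg_multi_item} as the \emph{first} service in $\Psi_G^v \cup \Psi_H^v$ strictly between $S_i$ and $S_{i+1}$ meeting the stated condition, hence $J_i$ is uniquely determined by $i$. Therefore $S = S' = J_i$, and $f$ is injective.

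Finally I will show that the image of $f$ lies entirely inside the original (non-dummy) services of $\Psi_{OPT}^v$, so the injection in fact yields $|\Phi_C| \le |\Psi_{OPT}^v|$ in the non-dummy count used by the observation. This is immediate from the placement of the dummy: it is added \emph{after} every service of $\Psi_G^v \cup \Psi_H^v$ and every (original) service of $\Psi_{OPT}^v$, so no service $J_i \in \Phi_C$ can have the dummy as the last preceding service of $\Psi_{OPT}^v$. Combining injectivity with this observation about the image gives $|\Phi_C| \le |\Psi_{OPT}^v|$. There is no real obstacle here, as the argument is a straightforward counting step — the only subtle point is bookkeeping the dummy and confirming it is never in the range of $f$, which the construction makes automatic.
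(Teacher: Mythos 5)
Your proposal is correct and matches the argument the paper gives in the paragraph immediately preceding the observation: define a map from $\Phi_C$ to $\Psi_{OPT}^v$ (with the dummy added) sending each $J_i$ to the last preceding service $S_i$ of $\Psi_{OPT}^v$, observe that it is injective since each gap contains at most one $J_i$, and note that the dummy is never in the image because it is placed after all other services. You simply make explicit the bookkeeping that the paper leaves implicit.
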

Now we define the following:
\begin{definition}
    We say that a service $S\in(\Psi_G^{v}\cup \Psi_H^{v})\setminus \Phi_C$ is \textbf{covered} if the total cost $OPT$ pays for the requests for the item $v$ that $ALG$ served in $S$ is at least $c(v)$. In other words:
    $$c(v)\leq \sum_{p\in P^v(S)}c_p^{OPT}.$$
\end{definition}

In order to prove lemma \ref{hold_back_multi_item.lemma.type_gh}, we prove that all the services $(\Psi_G^{v}\cup \Psi_H^{v})\setminus \Phi_C$ are covered. 

We begin with the services $\Psi_H^{v}\setminus \Phi_C$. Let $S\in \Psi_H^{v}\setminus \Phi_C$ be such a service. Due to the definition of $\Psi_H^{v}$ we have that the item $v$ was mature in the service $S$ and thus the amount of backlog accumulated by the unsatisfied overdue requests for $v$ was at least $c(v)$ during the service $S$. We note that it could have been even higher (up to an additional backlog cost of at most $c(r)$) due to surplus backlog costs for $v$.

The proof that the service $S$ is covered consists of following the process of subsections \ref{section_holdBack_single_item_a}, \ref{section_holdBack_single_item_b} and \ref{section_holdBack_single_item_d} (depending on the set $\Phi_A$, $\Phi_B$, $\Phi_D$ that $S$ belongs to). The differences between this proof and the proof in subsections \ref{section_holdBack_single_item_a}, \ref{section_holdBack_single_item_b} and \ref{section_holdBack_single_item_d} is that instead of referring to the backlog phase of Algorithm \ref{hold_back_single_item.alg}, we refer to the mature backlog phase of Algorithm \ref{hold_back_multi_item.alg}, instead of referring to the holding phase of Algorithm \ref{hold_back_single_item.alg}, we refer to the local holding phase of the item $v$ of Algorithm \ref{hold_back_multi_item.alg}, and we also use $c=c(v)$.

Moreover, while in the analysis for single item we had that the backlog cost served in the backlog phase was exactly $c$, here we have that the backlog cost served in the mature backlog phase may be higher than $c$ (actually $c(v)$ in our case): it is between $c(v)$ and $c(v)+c(r)$, due to surplus backlog cost for $v$ which is between $0$ and $c(r)$. However, it is easy to see that all the proofs for a single item continue to hold even if the total backlog cost is higher than $c$, rather than equal to $c$. %, so this difference does not cause a problem to the proof. 
Indeed, for example, in the analysis for the services $\Psi_A$ that was done in subsection \ref{section_holdBack_single_item_a} in order to prove lemma \ref{hold_back.lemma.type} (for the case $S\in\Psi_A$), we argued that for each request that $ALG$ served in the backlog phase we have that $OPT$ paid at least the cost that $ALG$ paid for it (lemma \ref{hold_back.lemma.type_a_single_request}) and then we claimed that since $ALG$ paid a total backlog cost of $c$ for the requests it served during the backlog phase, by summing for all these requests we have that $OPT$ paid at least the backlog cost that $ALG$ paid, which is $c$. In other words, the backlog cost of $c$ that $ALG$ paid was used as a lower bound to the cost $OPT$ paid for these requests. Even if we had that $ALG$ paid a backlog cost which is more than $c$ for these requests, rather than equal to $c$ the proof still holds.
%- this would not cause a problem in the proof, on the contrary: it means that $OPT$ also paid even more than $c$ for these requests because the lower bound of the cost that $OPT$ paid is increased. 
A similar situation occurs in the analysis of the requests in $\Psi_D$ in single item (or $\Phi_D$ in our case) where we again rely on the fact that the total backlog cost for requests served in the backlog phase of the service is at least $c$ and it does not necessary need to be equal to $c$ for the proof to work.

Another difference between the proof that the services $\Psi_H^{v}\setminus \Phi_C$ are covered and the proof done in subsections \ref{section_holdBack_single_item_a}, \ref{section_holdBack_single_item_b} and \ref{section_holdBack_single_item_d} is that $ALG$ may serve additional requests for the item $v$ during the global holding phase. This difference %even though it is interesting to mention it, 
does not affect the proof that the services $\Psi_H^{v}\setminus \Phi_C$ are covered. We thus have the following Corollary.
\begin{corollary}
\label{hold_back_multi_item.lemma.gh.h_covered}
    All the services $\Psi_H^{v}\setminus \Phi_C$ are covered.
\end{corollary}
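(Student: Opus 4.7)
The plan is to split the corollary into the three sub-cases $S \in \Phi_A$, $S \in \Phi_B$, $S \in \Phi_D$ that partition $\Psi_H^v \setminus \Phi_C$, and in each case mirror the single-item argument from Subsections~\ref{section_holdBack_single_item_a}, \ref{section_holdBack_single_item_b}, \ref{section_holdBack_single_item_d} respectively, under the dictionary: replace the constant $c$ by $c(v)$, replace "backlog phase" by "mature backlog phase for item $v$", replace "holding phase" by "local holding phase for item $v$", and interpret virtual-deadline comparisons relative to the local holding phase of~$v$. The definition of $\Phi_C$ in Definition~\ref{definition_partition_phi_alg_multi_item} is engineered precisely so that the "anchor" service $J_i \in \Phi_C$ plays exactly the role that services in $\Psi_C$ played for the single-item algorithm (with respect to the local holding phase of~$v$), so the structural property "either some served local-holding request has virtual deadline $\ge t_{i+1}$, or the sum of local-holding costs is $\le c(v)$" is available whenever we need it.

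The first step in each sub-case is a single-request lemma, analogous to Lemmas~\ref{hold_back.lemma.type_a_single_request}, \ref{hold_back.lemma.type_b_single_request}, \ref{hold_back.lemma.type_d_single_request}, showing that for any request $p \in B_1^v(S)$ (and, in the $\Phi_B$ case, any $p \in H_2^v(S)$ or in the local-holding analogue $H_1^v(S)$) we have $c_p^{ALG} \le c_p^{OPT}$. These proofs transfer almost verbatim, since they only use the monotonicity of the holding and backlog cost functions and the dynamic-virtual-deadline priority rule, both of which apply locally per item. The second step is to sum over all requests in $B_1^v(S)$ and conclude
\[
c(v) \;\le\; \sum_{p \in B_1^v(S)} c_p^{ALG} \;\le\; \sum_{p \in B_1^v(S)} c_p^{OPT} \;\le\; \sum_{p \in P^v(S)} c_p^{OPT},
\]
where the first inequality uses the maturity of $v$ at the service time of $S$ (which follows from $S \in \Psi_H^v$, meaning $v \in V_1(S)$), not the equality $=c$ that held in the single-item case.

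The main conceptual point to verify, and the only place where one has to be careful, is that two features of Algorithm~\ref{hold_back_multi_item.alg} which are absent from Algorithm~\ref{hold_back_single_item.alg} do not break the transferred proofs: (i) the total backlog served for $v$ at $S$ may lie anywhere in $[c(v), c(v)+c(r)]$ rather than equal $c(v)$, and (ii) $ALG$ may additionally serve requests for $v$ in the premature backlog, local holding, and global holding phases. For (i), every use of "backlog served $=c$" in the single-item proofs appears as a lower bound on $\sum c_p^{ALG}$, so it is only strengthened by having backlog served $\ge c(v)$. For (ii), these extra served requests either belong to $P^v(S)$ (so their $OPT$-cost can only help us) or belong to some other service — in which case they are irrelevant to the charging for $S$. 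With those two sanity checks done, the three case proofs carry over, and combining them yields that every service in $\Psi_H^v \setminus \Phi_C$ is covered.
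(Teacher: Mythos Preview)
Your proposal is correct and follows essentially the same approach as the paper: the paper's ``proof'' of this corollary is precisely the informal discussion you give, namely that one transplants the single-item arguments of Subsections~\ref{section_holdBack_single_item_a}--\ref{section_holdBack_single_item_d} under the dictionary $c\mapsto c(v)$, backlog phase $\mapsto$ mature backlog phase of $v$, holding phase $\mapsto$ local holding phase of $v$, and then checks that (i) having backlog $\ge c(v)$ rather than $=c(v)$ only helps and (ii) the extra requests served in the global holding phase are harmless. One small slip: in the $\Phi_B$ case the relevant request set is $H_1^v(S)$ (local holding), not $H_2^v(S)$ (global holding); and in the $\Phi_D$ case remember that the single-request inequality $c_p^{ALG}\le c_p^{OPT}$ is proved only under the hypothesis $c_p^{OPT}\le c(v)$, with the complementary case handled by the trivial observation that a single such request already witnesses coverage.
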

Now we deal with the services $\Psi_G^{v}\setminus \Phi_C$ and prove that they are covered. This is harder than the proof regarding the services $\Psi_H^{v}\setminus \Phi_C$ since while the minor differences between the proof that the services $\Psi_H^{v}\setminus \Phi_C$ are covered and the proof done for single item did not affect the proofs given in the analysis for single item, here there is a major difference between the services $\Psi_G^{v}\setminus \Phi_C$ and the services in single item. Specifically, the backlog cost that $ALG$ paid for the item $v$ in the premature backlog phase  (due to the definition of $\Psi_G$) is lower than $c(v)$ (it may be even $0$). This is in contrast to the analysis of a single item, where we relied on the fact that it is equal (or higher) than $c$.

In order to overcome this problem, we begin by observing that this difference does not affect the proof as long as services in $\Phi_B$ are concerned: the analysis done in subsection \ref{section_holdBack_single_item_b} for the services $\Psi_B$ did not use the backlog cost of the service and, in particular, did not rely on the fact that the total backlog cost that $ALG$ pays in the service should be equal (or higher) than $c$. Instead, the holding cost which $ALG$ paid during the holding phase was the cost we used as a lower bound to the cost that $OPT$ paid in order to prove that the service is covered. Therefore, we can follow the proof with the minor differences  discussed above when we proved that the services $\Psi_H^{v}\setminus \Phi_C$ are covered. We thus have the following corollary.
\begin{corollary}
\label{hold_back_multi_item.lemma.gh.be_covered}
    All the services $\Psi_G^{v}\cap \Phi_B$ are covered.
\end{corollary}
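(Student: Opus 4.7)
The plan is to mirror almost verbatim the argument used in Subsection \ref{section_holdBack_single_item_b} for services of type $\Psi_B$ in the single-item setting, with the substitutions $c \leftarrow c(v)$, the holding phase of Algorithm \ref{hold_back_single_item.alg} replaced by the local holding phase of item $v$ in Algorithm \ref{hold_back_multi_item.alg}, and the consecutive $OPT$-services $S_i, S_{i+1}$ replaced by consecutive services in $\Psi_{OPT}^v$ as in Definition \ref{definition_partition_phi_alg_multi_item}. The corollary's own prose already signals that this is the intended route; my job is to verify that nothing in the $\Psi_G^v$ case breaks it.

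Fix $S \in \Psi_G^v \cap \Phi_B$ triggered at time $t_0$ and let $t_1$ be the time of the next service in $\Psi_{OPT}^v$ after $S$. The first step is to extract, from the defining property of $\Phi_B$ (namely, that $S$ fails to qualify as the $J_i$ that closes its block), the analogue of Observation \ref{hold_back.lemma.type_b_properties}: (i) every request $p$ that $ALG$ served during the local holding phase of $v$ in $S$ satisfies $\hat{d_p^{t_0}} \leq t_1$; and (ii) the sum of holding costs paid in the local holding phase of $v$ during $S$ is strictly larger than $c(v)$. Both follow by negating the two disjuncts in the definition of $J_i$.

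Next, I would establish the pointwise inequality $c_p^{ALG} \leq c_p^{OPT}$ for every $p \in H_1^v(S)$, adapting the proof of Lemma \ref{hold_back.lemma.type_b_single_request}. If $OPT$ serves $p$ no later than $ALG$, monotonicity of the holding cost on $[a_p, d_p]$ gives the inequality since $t_p^{OPT} \leq t_0 \leq d_p$. Otherwise, since $p$ requests item $v$, the $OPT$-service that serves $p$ must include $v$, so by definition of $t_1$ we have $t_1 \leq t_p^{OPT}$; combined with $t_0 \leq d_p \leq \hat{d_p^{t_0}} \leq t_1$ (using (i)) and the definition of the virtual deadline, monotonicity of the backlog cost yields
$$c_p^{ALG} = c_p(t_0) = c_p(\hat{d_p^{t_0}}) \leq c_p(t_p^{OPT}) = c_p^{OPT}.$$
Summing over $p \in H_1^v(S)$ and invoking (ii) gives $c(v) \leq \sum_{p \in H_1^v(S)} c_p^{ALG} \leq \sum_{p \in H_1^v(S)} c_p^{OPT} \leq \sum_{p \in P^v(S)} c_p^{OPT}$, which is precisely the coverage property.

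The main subtlety, and the only place where the multi-item setting differs substantively from Subsection \ref{section_holdBack_single_item_b}, is the argument that $t_1 \leq t_p^{OPT}$ in Case 2: here $t_1$ is the next service in $\Psi_{OPT}^v$ rather than in $\Psi_{OPT}$, so we must rely on the fact that $p$ is a request for item $v$ to ensure that the $OPT$-service serving $p$ indeed lies in $\Psi_{OPT}^v$. This is immediate and presents no real obstacle. Note also that we are crucially not using anything about the backlog phase (mature or premature) of $S$, which is fortunate because, as the exposition preceding the corollary stresses, for $S \in \Psi_G^v$ the premature backlog cost for item $v$ can be arbitrarily small (possibly zero) — the entire coverage comes from the local holding phase.
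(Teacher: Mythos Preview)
Your proposal is correct and follows exactly the route the paper intends: it mirrors the $\Psi_B$ analysis of Subsection \ref{section_holdBack_single_item_b} with the substitutions $c \leftarrow c(v)$ and holding phase $\leftarrow$ local holding phase of $v$, exactly as the paragraph preceding the corollary prescribes. You even spell out the one subtlety (that $t_1$ here is the next service in $\Psi_{OPT}^v$, and the inequality $t_1 \le t_p^{OPT}$ holds because $p$ is a request for $v$) more explicitly than the paper does.
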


Before we can complete the proof of lemma \ref{hold_back_multi_item.lemma.type_gh}, we are left with the task of proving that all the services $\Psi_G^{v}\cap (\Phi_A \cup\Phi_D)$ are covered. This is harder than the proof for the other types of services, since in the analysis in subsections \ref{section_holdBack_single_item_a} and \ref{section_holdBack_single_item_d} we relied on the fact that the total backlog cost that $ALG$ pays in the backlog phase of a service is $c$ (or more). However, by the definition of $\Psi_G^{v}$ the backlog cost $ALG$ pays for the requests for the item $v$ in the premature backlog phase of a service $S\in\Psi_G^{v}$ is less than $c(v)$ rather than equal to $c(v)$ (or greater than $c(v)$). Hence (similar to what we did in lemmas \ref{hold_back.lemma.type_a_single_request} and \ref{hold_back.lemma.type_d_single_request}) for each request for the item $v$ that $ALG$ served in the premature backlog phase of $S$, we have that  $OPT$ pays for the request at least the backlog cost that $ALG$ pays for it in $S$. From that we conclude, by summing up for these requests, similarly to subsections \ref{section_holdBack_single_item_a} and \ref{section_holdBack_single_item_d}, that the total cost $OPT$ pays for these requests is at least the total backlog cost that $ALG$ pays for the item $v$ in the premature phase of $S$. However, since this backlog cost that $ALG$ pays is less than $c(v)$ (actually it may be even $0$) then we cannot infer a lower bound of $c(v)$ for the cost $OPT$ pays for these requests. We need another way to prove the following lemma:
\begin{lemma}
\label{hold_back_multi_item.lemma.gh.ad_covered}
    All the services $\Psi_G^{v}\cap (\Phi_A \cup\Phi_D)$ are covered.
\end{lemma}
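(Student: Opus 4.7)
The plan is to fix $S\in\Psi_G^v\cap(\Phi_A\cup\Phi_D)$ triggered at time $z$, and to exhibit inside $P^v(S)$ a witness of cost at least $c(v)$ under $OPT$. Let $R$ be the set of active requests for $v$ at time $z$ that appear in the defining sum of $\hat{mature}_z(v)$; by continuity, $\sum_{p\in R}c_p(\hat{mature}_z(v))=c(v)$. Because $S\in\Psi_G$, we have $\hat{mature}_z(v)\le t_{i+1}\le t^*$, where $t^*$ is the first $\Psi_{OPT}^v$ service after $S$. Since every term of this sum is non-negative, any $p\in R$ with $d_p>z$ satisfies $c_p(z)\le c_p(\hat{mature}_z(v))\le c(v)$, so $p$ is eligible for the local holding phase of $S$, and by monotonicity of the backlog cost its virtual deadline obeys $\hat{d_p^z}\le\hat{mature}_z(v)\le t^*$.

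For $S\in\Phi_A$, no $\Psi_{OPT}^v$ service precedes $S$, so every $p\in R$ has $t_p^{OPT}\ge t^*\ge\hat{mature}_z(v)\ge d_p$ and hence $c_p^{OPT}\ge c_p(\hat{mature}_z(v))$ by backlog monotonicity. If every $p\in R$ is served in $S$ (overdue ones in the premature backlog phase since $v\in V_2(S)$, non-overdue ones in the local holding phase), summing immediately gives $\sum_{p\in P^v(S)}c_p^{OPT}\ge c(v)$. Otherwise some non-overdue $p\in R$ is excluded from the local holding phase, which by eligibility can only occur because the budget $2c(v)$ was exhausted before $p$ in the $\hat{d^z}$-ordering; this yields $\sum_{p'\in H_1^v(S)}c_{p'}^{ALG}>2c(v)-c_p(z)\ge c(v)$, and for every such $p'$ the ordering forces $\hat{d_{p'}^z}\le\hat{d_p^z}\le t^*\le t_{p'}^{OPT}$, so $c_{p'}^{OPT}\ge c_{p'}(\hat{d_{p'}^z})=c_{p'}(z)=c_{p'}^{ALG}$, and summing once more gives $\sum_{p\in P^v(S)}c_p^{OPT}>c(v)$.

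For $S\in\Phi_D$, let $J_i\in\Phi_C$ be the preceding $\Phi_C$ service, triggered at time $z_{J_i}$, with $t_i^*<z_{J_i}<z<t^*$ the consecutive $\Psi_{OPT}^v$ services bracketing $S$. The only new phenomenon is that $OPT$ may serve some $p\in R$ at $t_p^{OPT}\le t_i^*<z_{J_i}$, since no $\Psi_{OPT}^v$ service lies in $(t_i^*,t^*)$. Any such $p$ must be active in $ALG$ at time $z_{J_i}$ and satisfy $d_p>z_{J_i}$ (else it would have been served in the backlog phase of $J_i$, contradicting activity at $z$). We then invoke the defining property of $\Phi_C$ to bound $c_p^{OPT}$: if $J_i$'s local holding phase for $v$ had total cost at most $c(v)$, then necessarily $c_p(z_{J_i})>c(v)$ (otherwise $p$ would have been included in $J_i$'s local holding phase within budget, contradicting activity at $z$), so $c_p^{OPT}=c_p(t_p^{OPT})\ge c_p(z_{J_i})>c(v)$ by monotonicity of the holding cost on $[a_p,d_p]$; if instead $J_i$'s local holding phase included a request $q$ with $\hat{d_q^{z_{J_i}}}\ge t^*$, a virtual-deadline comparison against $q$ forces either $c_p(z_{J_i})>c(v)$ (same conclusion) or $\hat{d_p^{z_{J_i}}}\ge\hat{d_q^{z_{J_i}}}\ge t^*\ge\hat{mature}_z(v)$, and then $c_p^{OPT}\ge c_p(z_{J_i})=c_p(\hat{d_p^{z_{J_i}}})\ge c_p(\hat{mature}_z(v))$ by backlog monotonicity. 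Combined with the straightforward case $t_p^{OPT}\ge t^*$ (handled as in $\Phi_A$), this reproduces the per-request bound $c_p^{OPT}\ge c_p(\hat{mature}_z(v))$ for every $p\in R$; and the same budget-exhaustion argument as in the $\Phi_A$ case, applied to the local holding phase of $S$ with the analogous $\Phi_C$-analysis for any $p'\in H_1^v(S)$ satisfying $t_{p'}^{OPT}\le t_i^*$, handles the sub-case where not every $p\in R$ ends up in $P^v(S)$. The main obstacle is precisely this careful case analysis for $\Phi_D$, because the virtual deadlines used in the definition of $\Phi_C$ are taken with respect to $z_{J_i}$ while those of $R$ and $\hat{mature}_z(v)$ are with respect to $z$, so translating between the two time references requires repeated appeals to continuity and to the monotonicity of the cost functions on the holding and backlog intervals.
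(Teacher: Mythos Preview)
Your proposal is correct and follows essentially the same approach as the paper. Your set $R$ is exactly the paper's $P_1\cup P_2$, your per-request bound $c_p^{OPT}\ge c_p(\hat{mature}_z(v))$ is the content of the paper's Lemma~4.26, and your budget-exhaustion branch is the paper's Lemma~4.27; the only organizational difference is that you split on $\Phi_A$ versus $\Phi_D$ at the top level, whereas the paper proves the two per-request lemmas uniformly and then does a three-way case split at the end (total local holding $\ge c(v)$ with all virtual deadlines $\le t_2$, versus the complementary case where $P_2\subseteq P^v(S)$).

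Two small imprecisions worth tightening: first, your claim that $\sum_{p\in R}c_p(\hat{mature}_z(v))=c(v)$ ``by continuity'' needs care, since the membership condition $d_p\le t'$ is discrete and the defining sum can jump as $t'$ crosses a deadline; the paper only uses the inequality $\ge c(v)$ here and proves the per-term bound $c_p(t_2)\le c(v)$ separately. Second, in the $\Phi_D$ budget-exhaustion sub-case you invoke ``the analogous $\Phi_C$-analysis'' for $p'\in H_1^v(S)$ with $t_{p'}^{OPT}\le t_i^*$, but this is unnecessary: since $t_{p'}^{OPT}\le t_i^*<z\le d_{p'}$, holding monotonicity already gives $c_{p'}^{OPT}\ge c_{p'}(z)=c_{p'}^{ALG}$ directly, which is exactly what the paper's Lemma~4.27 does in that branch.
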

Lemma \ref{hold_back_multi_item.lemma.gh.ad_covered} is proved in subsection \ref{section_multi_item_holdBack_gh_ad}.
From Lemma \ref{hold_back_multi_item.lemma.gh.ad_covered} and Corollaries \ref{hold_back_multi_item.lemma.gh.be_covered} and \ref{hold_back_multi_item.lemma.gh.h_covered} we have the following corollary:
\begin{corollary}
\label{hold_back_multi_item.lemma.gh.all_covered}
    All the services $(\Psi_G^{v}\cup \Psi_H^{v})\setminus \Phi_C$ are covered.
\end{corollary}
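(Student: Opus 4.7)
The plan is to exploit the defining property $\hat{mature}_z(v)\le \tau$ of $S\in\Psi_G^v$, where $z$ is the time $ALG$ triggers $S$ and $\tau$ is the time of the next $OPT$ service after $S$. Unfolding the definition of $\hat{mature}$, there is a set $Q\subseteq A^v_z$ consisting of (i) all requests overdue at $z$ and (ii) non-overdue requests $p$ with $d_p\le\tau$ and $\hat{d_p^{z}}\le\tau$, such that $\sum_{p\in Q} c_p(\tau)\ge c(v)$. Every overdue request in $Q$ already lies in $B_2^v(S)\subseteq P^v(S)$, since $ALG$ satisfies all overdue $v$-requests during the premature backlog phase of $S$.

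I first handle $S\in\Phi_A$. Here $OPT$ has not served any request for $v$ before $z$, so $t_p^{OPT}\ge t_1\ge\tau\ge d_p$ for every $p\in Q$, where $t_1$ is the first service of $\Psi_{OPT}^v$; monotonicity of backlog yields $c_p^{OPT}\ge c_p(\tau)$. If every $p\in Q$ lies in $P^v(S)$ we are immediately done by summing. Otherwise fix a non-overdue $p^*\in Q\setminus P^v(S)$ and split on $c_{p^*}(z)$. When $c_{p^*}(z)\le c(v)$, $p^*$ was eligible for the local holding phase of $v$ in $S$ but was skipped, which forces that phase to have stopped on its $2c(v)$ budget; hence $\sum_{p\in H_1^v(S)} c_p(z)>c(v)$, and combining $\hat{d_p^{z}}\le \hat{d_{p^*}^{z}}\le\tau\le t_p^{OPT}$ with the argument from subsection \ref{section_multi_item_holdBack_b} gives $c_p^{OPT}\ge c_p(z)$ for each $p\in H_1^v(S)$, yielding coverage after summing.

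The main obstacle is the remaining sub-case $c_{p^*}(z)>c(v)$: such a $p^*$ has $c_{p^*}^{OPT}\ge c_{p^*}(\tau)\ge c_{p^*}(z)>c(v)$ on its own, yet it does not contribute to the coverage sum unless $p^*\in P^v(S)$, i.e., it was picked up by the global holding phase of $S$. I expect to close this by splitting on whether the global phase reaches $p^*$ in its virtual-deadline order (then $p^*\in H_2^v(S)$ alone covers $c(v)$), and otherwise arguing that the global phase exhausted its $2c(r)$ budget on earlier-virtual-deadline $v$-requests that together dominate $c(v)$ worth of $OPT$-cost, in direct analogy with the local argument above.

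Finally, for $S\in\Phi_D$ the same $Q$-based scheme applies, but $OPT$ may have served some $p\in A^v_z$ before $z$ through a previous service in $\Psi_{OPT}^v$, breaking the clean bound $t_p^{OPT}\ge\tau$. To recover, I use the preceding $J_i\in\Phi_C$ (at time $t_0$): its defining alternative---either $J_i$ served a request in its local holding of $v$ with virtual deadline at $t_0$ at least $t_1$, or the local holding sum of $v$ at $J_i$ was at most $c(v)$---provides precisely the structural hook needed to transfer the $\Phi_A$ argument to this setting, in direct analogy with the single-item Lemma \ref{hold_back.lemma.type_d_single_request} of subsection \ref{section_holdBack_single_item_d}.
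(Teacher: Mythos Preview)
Your outline addresses only the case $S\in\Psi_G^v\cap(\Phi_A\cup\Phi_D)$; the remaining cases $\Psi_H^v\setminus\Phi_C$ and $\Psi_G^v\cap\Phi_B$ are the content of Corollaries~\ref{hold_back_multi_item.lemma.gh.h_covered} and~\ref{hold_back_multi_item.lemma.gh.be_covered}, so presumably you take those as given and focus on the remaining piece (which is Lemma~\ref{hold_back_multi_item.lemma.gh.ad_covered}). For that piece your plan is close to the paper's, but it contains a genuine gap.

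The gap is precisely the sub-case $c_{p^*}(z)>c(v)$. Your proposed resolution through the global holding phase does not work: that phase iterates over the remaining requests for \emph{all} items in $V(S)$, not just for $v$, in a single virtual-deadline order. Hence when the $2c(r)$ budget is exhausted before reaching $p^*$, the requests that consumed it may belong entirely to items other than $v$; they contribute nothing to $\sum_{p\in P^v(S)}c_p^{OPT}$, and the analogy with the local-phase budget argument collapses. Worse, if $c_{p^*}(z)>c(r)$ then $p^*$ is not even eligible for the global phase and no fallback is available.

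The cause of this bad sub-case is your choice to parametrise $Q$ by $\tau$ (the time of the next $OPT$ service) rather than by $t_2:=\hat{mature}_z(v)\le\tau$. If instead you take the non-overdue part of $Q$ to be $\{p\in A_z^v:\ z<d_p\le t_2,\ c_p(z)\le c_p(t_2)\}$, then the minimality in the definition of $\hat{mature}$ forces $c_p(t_2)\le c(v)$ for every $p\in Q$ (otherwise a strictly earlier time would already witness the threshold). Combined with the membership condition $c_p(z)\le c_p(t_2)$, this gives $c_p(z)\le c(v)$ for every non-overdue $p\in Q$, so the problematic sub-case never arises and the local holding phase alone suffices. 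This is exactly what the paper does (Observation~\ref{f_p_t_2_at_most_c_v} and the surrounding argument in Subsection~\ref{section_multi_item_holdBack_gh_ad}); once you make this single change, your $\Phi_A$ and $\Phi_D$ sketches line up with the paper's proof and the global holding phase is not needed at all.
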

Now we are ready to prove lemma \ref{hold_back_multi_item.lemma.type_gh}.
\begin{proof}[Proof of Lemma \ref{hold_back_multi_item.lemma.type_gh}]
We have 
\begin{align*}
c(v)\cdot(|\Psi_G^{v}|+|\Psi_H^{v}|) & = c(v)\cdot|\Psi_G^{v}\cup\Psi_H^{v}|\\
& = c(v)\cdot|\Phi_C|+c(v)\cdot|(\Psi_G^{v}\cup\Psi_H^{v})\setminus \Phi_C| \\
& \leq c(v)\cdot |\Psi_{OPT}^{v}| +\sum_{S\in \Psi_G^{v}\cup \Psi_H^{v}}\sum_{p\in P^{v}(S)}c_p^{OPT}
\end{align*}
The first equality is due to $\Psi_G^{v}\cap\Psi_H^{v}=\emptyset$. In the inequality we used Observation \ref{hold_back_multi_item.lemma.gh.type_c} and also by summing for all the services $S\in(\Psi_G^{v}\cup \Psi_H^{v})\setminus \Phi_C$, using Corollary \ref{hold_back_multi_item.lemma.gh.all_covered}.
\end{proof}
\subsection{Analyzing the leftover services of types G}
\label{section_multi_item_holdBack_gh_ad}
The goal of this subsection is to prove lemma \ref{hold_back_multi_item.lemma.gh.ad_covered} from subsection \ref{section_multi_item_holdBack_gh}. Let $v$ be an item and let $S\in \Psi_G^{v}\cap (\Phi_A \cup\Phi_D)$ be a service of $ALG$. We need to prove that
$$c(v)\leq \sum_{p\in P^v(S)}c_p^{OPT}.$$
Let $t_1$ be the time when $ALG$ triggered the service $S$. Let $t_2=\hat{mature}_{t_1}(v)$. Let $P_1$ be the set of requests for the item $v$ which were active when $ALG$ triggered the service $S$ and reached their deadline prior to the service $S$. Let $P_2$ be the set of requests for the item $v$ which were active when $ALG$ triggered the service $S$ and  their deadlines are between $t_1$ and $t_2$ and they have bigger (or equal) backlog costs at time $t_2$ than their holding costs at time $t_1$. Formally:
$$P_1=\{p\in A^v_{t_1}:d_p\leq t_1 \}$$
$$P_2=\{p\in A^v_{t_1}: t_1<d_p\leq t_2 \cap c_p(t_1)\leq c_p(t_2)\}$$
Due to the definition of $\hat{mature}$ we have the following observation:
\begin{observation}
\label{inequality_cost_P1_P2}
We have
$$c(v)\leq \sum_{p\in P_1\cup P_2}c_p(t_2).$$
\end{observation}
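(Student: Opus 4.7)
The plan is to derive the observation essentially by unfolding the definition of $\hat{mature}_{t_1}(v)$ and evaluating at the time $t' = t_2$. Recall that
$$\hat{mature}_{t_1}(v) = \argmin_{t'}\left\{\left(\sum_{p\in A^v_{t_1},\,d_p\leq t_1}c_p(t')+\sum_{p\in A^v_{t_1},\,t_1<d_p\leq t',\,c_p(t_1)\leq c_p(t')}c_p(t')\right)\geq c(v)\right\},$$
so by the very choice of $t_2 = \hat{mature}_{t_1}(v)$ the corresponding parenthesized sum at $t' = t_2$ is at least $c(v)$ (we may use continuity of the backlog functions, which we assumed without loss of generality in \cref{section_preliminaries}, to see it is in fact exactly $c(v)$, though this sharper form is not needed).

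Next I would observe that the two index sets appearing in the $\hat{mature}$ summation are precisely $P_1$ and $P_2$ as defined just before the observation. Indeed, the first sum ranges over active requests for $v$ whose deadline has already passed by $t_1$, which is exactly $P_1$; the second sum ranges over active requests for $v$ whose deadline lies in $(t_1, t_2]$ and whose backlog cost at $t_2$ is at least the holding cost at $t_1$, which is exactly $P_2$. Since requests in $P_1$ and $P_2$ have deadlines in disjoint ranges, the two index sets are disjoint, and the sum over $P_1 \cup P_2$ equals the sum of the two separate sums.

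Combining, we obtain
$$c(v) \;\leq\; \sum_{p\in P_1}c_p(t_2) + \sum_{p\in P_2}c_p(t_2) \;=\; \sum_{p\in P_1\cup P_2}c_p(t_2),$$
which is the desired inequality. There is no real obstacle here: the only subtlety is checking that the ``argmin threshold'' definition of $\hat{mature}$ really does imply that the sum at $t' = t_2$ is at least $c(v)$ (which is immediate from the definition and the monotonicity/continuity of the backlog functions), and that the index partitioning in the $\hat{mature}$ formula matches the definitions of $P_1$ and $P_2$ verbatim.
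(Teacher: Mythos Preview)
Your proof is correct and follows exactly the approach the paper indicates: the paper simply states that the observation holds ``due to the definition of $\hat{mature}$,'' and your argument is precisely the unfolding of that definition at $t=t_1$, $t'=t_2$, together with the verification that the two index sets match $P_1$ and $P_2$ verbatim.
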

Consider the first service that $OPT$ triggered after $ALG$ triggered the service $S$, and let $t_3$ be the time of this service of $OPT$ \footnote{Since $S\in\Phi_A \cup \Phi_D$ then $OPT$ served the item $v$ at least once after $ALG$ triggered the service $S$
%. In particular, this implies that $OPT$ triggered a service at least once after $ALG$ triggered $S$. Therefore, we can consider the next service that $OPT$ triggered after $ALG$ triggered the service $S$. We
and thus $t_3$ is well-defined.}. 
Since $S\in \Psi_G^{v}$ then 
$$\max_{v'\in V_2(S)}\hat{mature}_{t_1}(v')\leq t_3.$$
In particular this means
$$t_1< t_2=\hat{mature}_{t_1}(v)\leq t_3.$$
Consider a request $p\in P_1\cup P_2$. If we had $c(v)<c_p(t_2)$ then by continuous of the backlog cost we have (for small enough $\epsilon>0$) that $c(v)<c_p(t_2-\epsilon)$, which implied that $\hat{mature}_{t_1}(v)<t_2$, that yields a contradiction. This implies the following observation:

\begin{observation}
\label{f_p_t_2_at_most_c_v}
For each request $p\in P_1\cup P_2$ we have $$c_p(t_2)\leq c(v).$$
\end{observation}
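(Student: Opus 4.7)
The plan is to prove this by contradiction, leveraging the minimality property built into the definition of $\hat{mature}_{t_1}(v)$. Suppose for some $p\in P_1\cup P_2$ we had $c_p(t_2)>c(v)$. Since $c_p$ is continuous and non-decreasing on $[d_p,\infty)$ and $t_2\geq d_p$ in both cases ($p\in P_1$ means $d_p\leq t_1<t_2$; $p\in P_2$ means $d_p\leq t_2$), there is some $\epsilon>0$ for which $c_p(t_2-\epsilon)>c(v)$ as well. The strategy is to show that $p$ still appears in the sum defining $\hat{mature}_{t_1}(v)$ evaluated at time $t_2-\epsilon$, so that this sum already reaches $c(v)$ strictly before $t_2$, contradicting the fact that $t_2$ is the \emph{argmin} over such times $t'$.

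First I would treat the case $p\in P_1$. Here $d_p\leq t_1$, so for any $\epsilon>0$ small enough that $t_2-\epsilon>t_1$ the request $p$ lies in the first sum of the definition of $\hat{mature}_{t_1}(v)$, contributing at least $c_p(t_2-\epsilon)>c(v)$. Hence the whole sum at $t_2-\epsilon$ exceeds $c(v)$, so $\hat{mature}_{t_1}(v)\leq t_2-\epsilon<t_2$, a contradiction.

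Next I would treat $p\in P_2$, which is slightly more delicate because $p$ contributes to the second sum only when $d_p\leq t'$ and $c_p(t_1)\leq c_p(t')$. I would pick $\epsilon>0$ small enough that (i) $t_2-\epsilon>t_1$, (ii) if $d_p<t_2$ then $d_p\leq t_2-\epsilon$, and (iii) $c_p(t_2-\epsilon)>c(v)$. Because $p\in P_2$ satisfies $c_p(t_1)\leq c_p(t_2)$ and one can assume (by the continuity/perturbation convention stated in the preliminaries) that the cost functions are continuous, for such $\epsilon$ one also has $c_p(t_1)\leq c_p(t_2-\epsilon)$, so $p$ is in the eligible set at time $t_2-\epsilon$ and contributes $c_p(t_2-\epsilon)>c(v)$, again contradicting the minimality of $t_2$. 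The only edge case is when $d_p=t_2$ exactly; there the contribution of $p$ is activated only at $t_2$, but in that case $c_p(t_2)=c_p(d_p)$ lies at the transition between holding and backlog, and one can either apply the standard perturbation to make the cost strictly continuous across $d_p$, or note that the infimum formulation of $\hat{mature}$ together with continuity of the total sum from the left forces the sum at $t_2-\epsilon$ to already be $\geq c_p(t_2-\epsilon)>c(v)$ in the limit. The main obstacle is thus just handling these boundary continuity issues cleanly; the conceptual content is simply that a single request contributing more than $c(v)$ at $t_2$ would cause the threshold to be reached earlier.
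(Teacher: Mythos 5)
Your argument is correct and is essentially the same as the paper's: assume $c_p(t_2)>c(v)$, use continuity to get $c_p(t_2-\epsilon)>c(v)$ for small $\epsilon$, and conclude that the sum defining $\hat{mature}_{t_1}(v)$ already meets the threshold $c(v)$ strictly before $t_2$, contradicting the $\argmin$. The paper's proof is a one-liner that does not spell out the $P_1$/$P_2$ case split or the eligibility check (that $p$ actually appears in the second sum at $t_2-\epsilon$), so your version is a more explicit rendering of the same idea.
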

It is clear that $ALG$ served all the requests $P_1$ during the premature backlog phase of the service $S$ but not necessary all the requests $P_2$ at the service $S$. 
Nevertheless, in case where the sum of holding costs that $ALG$ paid for requests in the local holding phase of the item $v$ in the service $S$ is at most $c(v)$, all requests $P_2$ are served at $S$. This is since if there would be such unversed request $p\in P_2$ then by observation \ref{f_p_t_2_at_most_c_v} we have that $ALG$ could include the request $p$ in the local holding phase of the item $v$ in the service $S$ while the total holding cost would be of at most $2\cdot c(v)$. This yields the following:
\begin{observation}
\label{hold_back_multi_item.lemma.alg_served_all_p_requests_1}
If the sum of holding costs that $ALG$ paid for requests in the local holding phase of the item $v$ in the service $S$ is at most $c(v)$ then $ALG$ served all the requests $P_2$ during the service $S$. In other words:
$$\sum_{p\in H_1^v(S)}c_p^{ALG}\leq c(v) \implies P_2\subseteq  P^v(S)$$
\end{observation}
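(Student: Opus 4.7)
The plan is to show that the hypothesis $\sum_{p\in H_1^v(S)}c_p^{ALG}\leq c(v)$ forces the local holding phase for $v$ to terminate by exhausting its candidate set rather than by hitting the $2c(v)$ cap, and then to verify that every $p\in P_2$ belongs to that candidate set.

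First I would establish that every $p\in P_2$ is eligible to be considered in the local holding phase for $v$. By definition of $P_2$ we have $c_p(t_1)\leq c_p(t_2)$, and by Observation \ref{f_p_t_2_at_most_c_v} we have $c_p(t_2)\leq c(v)$. Combined, $c_p(t_1)\leq c(v)$. Since $p\in A^v_{t_1}$ is still active at the service time and has holding cost at most $c(v)$ at $t_1$, it meets exactly the admission criterion the local holding phase of $v$ applies. Note also that since $d_p>t_1$ the request $p$ is not overdue, so it is not affected by the mature/premature backlog phases; its only chance of being included in $S$ is through the local (or global) holding phase.

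Next I would argue that the local holding phase cannot halt prematurely under the hypothesis. The phase iterates the eligible requests in non-decreasing order of their virtual deadlines with respect to $t_1$ and admits them one by one as long as the running total stays within $2c(v)$. Suppose the phase terminated before exhausting all eligible requests; let $q$ be the next pending eligible request at the moment of termination and $\Sigma$ be the running total at that moment. Termination means $\Sigma + c_q(t_1) > 2c(v)$, and since $q$ is eligible we have $c_q(t_1)\leq c(v)$, yielding $\Sigma > c(v)$. But $\Sigma$ is precisely the final value $\sum_{p'\in H_1^v(S)}c_{p'}^{ALG}$, which by the hypothesis is at most $c(v)$, a contradiction. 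Hence the phase processes every eligible request, and in particular every $p\in P_2$, so $P_2\subseteq H_1^v(S)\subseteq P^v(S)$.

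The only subtlety is confirming that the virtual-deadline scan does not skip over an eligible request $p\in P_2$ without halting: since the algorithm always admits the next eligible request whenever doing so keeps the total within the $2c(v)$ budget, the only way $p$ can fail to be admitted is if the loop terminates before reaching $p$, which is the case ruled out above.
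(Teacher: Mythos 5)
Your proof is correct and uses the same core idea as the paper: Observation \ref{f_p_t_2_at_most_c_v} together with the monotonicity of backlog costs gives $c_p(t_1)\leq c_p(t_2)\leq c(v)$ for every $p\in P_2$, so each such $p$ is eligible, and the hypothesis $\sum_{p\in H_1^v(S)}c_p^{ALG}\leq c(v)$ guarantees that admitting any eligible request keeps the running total within $2c(v)$. The paper states this more tersely (``$ALG$ could include $p$ and the total would still be at most $2c(v)$''), whereas you spell out the loop-termination analysis explicitly; that extra care is welcome but does not change the substance of the argument.
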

Now consider the case 
%The following observation considers another case in which it is guaranteed that $ALG$ served all the requests $P_2$ in the service $S$: the case 
where there is a request $p$ that $ALG$ served during the local holding phase of the item $v$ in the service $S$ which has a virtual deadline with respect to time $t_1$ of later than $t_2$. This also implies that all the requests $P_2$ are served at $S$. The reason is that $ALG$ chooses the requests to serve in the local holding phase according to their virtual deadlines with respect to the time of the service. Note that from the definition of $P_2$ and the monotonicity of the backlog costs we have that for each request $p\in P_2$ that its virtual deadline with respect to $t_1$ is at most $t_2$ and hence
$\hat{d_p^{t_1}}\leq t_2.$
\begin{observation}
\label{hold_back_multi_item.lemma.alg_served_all_p_requests_2}
If there is a request $p$ that $ALG$ served during the local holding phase of the item $v$ in the service $S$ which has a virtual deadline with respect to time $t_1$ of bigger than $t_2$ then $ALG$ served all the requests $P_2$ during the service $S$. In other words:
$$ \left(\exists p\in H_1^v(S):t_2<\hat{d_p^{t_1}} \right) \implies P_2\subseteq  P^v(S)$$
\end{observation}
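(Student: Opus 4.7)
The plan is to show that every $q\in P_2$ is served by $ALG$ in the local holding phase of $v$ inside the service $S$, by combining the virtual--deadline ordering used in that phase with the hypothesis that some $p\in H_1^v(S)$ satisfies $\hat{d_p^{t_1}}>t_2$.

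First I would establish that $\hat{d_q^{t_1}}\le t_2$ for every $q\in P_2$. By definition of $P_2$ we have $d_q\le t_2$ and $c_q(t_1)\le c_q(t_2)$; since $c_q$ is continuous and non-decreasing on $[d_q,\infty)$ and $c_q(d_q)\le c_q(t_1)$ (the holding cost is non-increasing on $[a_q,d_q]$ and $t_1<d_q$), the intermediate value theorem yields some $\tau\in[d_q,t_2]$ with $c_q(\tau)=c_q(t_1)$, giving $\hat{d_q^{t_1}}\le t_2<\hat{d_p^{t_1}}$. Hence $q$ is strictly earlier than $p$ in the virtual--deadline order scanned by the local holding phase of $v$ in $S$.

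Next I would verify that $q$ is eligible for the local holding phase and has not been served in an earlier phase of $S$. Eligibility: $q\in A_{t_1}^v$ is still active at time $t_1$, and $c_q(t_1)\le c_q(t_2)\le c(v)$ by Observation~\ref{f_p_t_2_at_most_c_v}, which is precisely the cost threshold applied by the local holding phase. Since $t_1<d_q$, the request $q$ is not overdue at $t_1$, so it cannot have been served during the mature or premature backlog phase of $S$. Consequently, if $q\notin P^v(S)$, then $q$ was necessarily skipped during the local holding phase of $v$.

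To finish, I would argue by contradiction using the stopping rule of the local holding phase: eligible requests are scanned in non-decreasing order of virtual deadline with respect to $t_1$ and added one by one as long as the cumulative holding cost stays at most $2c(v)$, the phase halting as soon as the next eligible request would push it over the budget. If the eligible $q$ is skipped then every request considered strictly after $q$ is also skipped, but $p$ is strictly after $q$ in that order, contradicting $p\in H_1^v(S)$. The only real subtlety is this stopping interpretation, which is the same convention already tacitly used in Observation~\ref{hold_back_multi_item.lemma.alg_served_all_p_requests_1} and throughout the single--item analysis (e.g.\ Section~\ref{section_holdBack_single_item_d}), so I would adopt it without further comment.
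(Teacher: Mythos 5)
Your argument is correct and follows essentially the same route as the paper: show that every $q\in P_2$ has $\hat{d_q^{t_1}}\le t_2<\hat{d_p^{t_1}}$, so $q$ is scanned before $p$ in the local holding phase and must therefore have been served before the phase reached $p$. The only additions you make are the explicit IVT step for $\hat{d_q^{t_1}}\le t_2$, the eligibility check $c_q(t_1)\le c_q(t_2)\le c(v)$ via Observation~\ref{f_p_t_2_at_most_c_v}, and the remark that $q$ cannot have been consumed in an earlier (backlog) phase of $S$; these are useful details the paper leaves implicit but do not constitute a different approach.
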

Now we show the following lemma on the requests $P_1\cup P_2$.
%Even though.
The proof idea follows the proof of lemma \ref{hold_back.lemma.type_d_single_request} from subsection \ref{section_holdBack_single_item_d} but for completeness we prove below the full details.
%, we prove it without relying on the work done in the proof of lemma \ref{hold_back.lemma.type_d_single_request}.
\begin{lemma}
\label{hold_back_multi_item.lemma.multi_item_gh_ad_single_request_type1}
For each request $p\in P_1\cup P_2$ that $OPT$ pays for it a cost of at most $c(v)$ we have that its cost at time $t_2$ is less or equal than the cost $OPT$ paid for it, i.e. $$\forall p\in P_1\cup P_2: c_p^{OPT}\leq c(v) \implies c_p(t_2)\leq c_p^{OPT}.$$
\end{lemma}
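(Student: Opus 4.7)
The plan is to mirror the proof of Lemma~\ref{hold_back.lemma.type_d_single_request}, doing a case analysis on the position of $t_p^{OPT}$ versus $t_2$. If $t_p^{OPT}\ge t_2$, then for $p\in P_1\cup P_2$ we have $d_p\le t_2\le t_p^{OPT}$, both times lie in the backlog region of $p$, and monotonicity of the backlog cost yields $c_p(t_2)\le c_p^{OPT}$ directly.

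The remaining case is $t_p^{OPT}<t_2$, which I would first rule out when $S\in\Phi_A$. Since $p$ is a request for $v$, OPT must serve it during some service in $\Psi_{OPT}^v$; by the definition of $\Phi_A$, the first such service occurs strictly after $S$. Combined with $t_2=\hat{mature}_{t_1}(v)\le t_3$ (which holds because $S\in\Psi_G^v$ and therefore $v\in V_2(S)$) and $t_3$ being at most the first $\Psi_{OPT}^v$-service after $t_1$, this forces $t_p^{OPT}\ge t_2$, contradicting the assumption.

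The main obstacle is $S\in\Phi_D$ with $t_p^{OPT}<t_2$. Let $J\in\Phi_C$ be the $\Phi_C$-service immediately preceding $S$, triggered at some time $y$, and let $t_{next}$ be the time of the next $\Psi_{OPT}^v$-service after $S$. Because no $\Psi_{OPT}^v$-service lies between $y$ and $t_{next}$, we must have $t_p^{OPT}\le y$. Next I would verify that $p$ was a legitimate candidate in the local holding phase of $v$ at $J$: since $v\in V(J)$ and $p$ is unserved by ALG at $t_1>y$, the request $p$ cannot be overdue at time $y$ (else $J$ would have served it in a backlog phase), giving $a_p\le t_p^{OPT}\le y\le d_p$; holding-monotonicity then yields $c_p(y)\le c_p(t_p^{OPT})=c_p^{OPT}\le c(v)$. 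I then split on the dichotomy defining $J\in\Phi_C$. If the local holding phase of $v$ in $J$ has total holding cost at most $c(v)$, then ALG could have inserted $p$ while keeping the total under $2c(v)$, contradicting that $p$ is unserved; otherwise $J$ included some request $q$ with $\hat{d_q^y}\ge t_{next}$, and since ALG processes candidates in non-decreasing order of virtual deadline but included $q$ while skipping $p$, one deduces $\hat{d_p^y}\ge\hat{d_q^y}\ge t_{next}\ge t_3\ge t_2$. Chaining monotonicities then gives
\[
c_p(t_2)\;\le\; c_p(\hat{d_p^y})\;=\;c_p(y)\;\le\;c_p(t_p^{OPT})\;=\;c_p^{OPT},
\]
where the first inequality uses backlog-monotonicity on $[d_p,\hat{d_p^y}]$, the equality is the definition of virtual deadline, and the final inequality uses holding-monotonicity on $[t_p^{OPT},y]\subseteq[a_p,d_p]$. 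The delicate point is confirming that the greedy truly skips $p$ in favor of $q$ rather than halting before reaching either; this uses the candidate-filter bound $c_p(y)\le c(v)$ together with the ordering interpretation of the local holding phase.
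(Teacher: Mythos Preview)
Your proposal is correct and follows essentially the same approach as the paper's proof: the same case split on $t_p^{OPT}$ versus $t_2$, the same elimination of $\Phi_A$, and the same use of the preceding $\Phi_C$-service (your $J$ at time $y$, the paper's $S_C$ at time $t_0$) together with the two-clause dichotomy defining $\Phi_C$ to force $\hat{d_p^{y}}\ge t_2$ and conclude via the chain $c_p(t_2)\le c_p(\hat{d_p^{y}})=c_p(y)\le c_p(t_p^{OPT})$. Your explicit introduction of $t_{next}$ (the next $\Psi_{OPT}^v$-service) and the observation $t_{next}\ge t_3\ge t_2$ is slightly cleaner than the paper's direct use of $t_3$, but the substance is identical.
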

\begin{proof}
We have $d_p\leq t_2$ regardless if $p\in P_1$ or $p\in P_2$.

In case $t_2\leq t_p^{OPT}$ we have $d_p\leq t_2\leq t_p^{OPT}$ and thus due to the monotonicity of the backlog cost we have $c_p(t_2)\leq c_p^{OPT}$. Hence, we need to deal with the case $t_p^{OPT}< t_2$.
Due to $S\in \Psi_G^{v}$ and the definition of $t_2$ we have that $OPT$ did not trigger any service between the service $S$ (which $ALG$ triggered) and time $t_2$. Therefore, we have $t_p^{OPT}\leq t_1$ and $OPT$ served the request $p$ before $ALG$ triggered the service $S$. 
If we had $S\in\Phi_A$ then this would imply that the service $S$ occurred before the first time $OPT$ triggered a service which served the item $v$, in other words the service $S$ occurred before the first service in $\Psi_{OPT}^{v}$, in particular this would imply that $OPT$ served the request $p$ after $ALG$ triggered the service $S$, thus a contradiction. We thus have $S\notin \Phi_A$. Recall that $S\in \Phi_A\cup \Phi_D$. Therefore, we have $S\in \Phi_D$.

Let $S_C\in\Phi_C$ be the furthest service in $\Phi_C$ that was triggered by $ALG$ before the service $S$ and let $t_0$ be the time when $ALG$ triggered the service $S_C$.

Since $S\in\Phi_D$ we know that the item $v$ was not served by $OPT$ between the two services $S_C$ and $S$. 

Recall that we also have that $OPT$ served the request $p$ before $ALG$ triggered the service $S$. We therefore must have that $OPT$ served the request $p$ before the service $S_C$, i.e. $t_p^{OPT}\leq t_0$. We note that there may be services of $OPT$ which have been triggered between the two services $S_C$ and $S$ in which the item $v$ was not served, but this does not disturb the proof because $OPT$ could not serve the request $p$ for the item $v$ in those services.

%We therefore must have $t_0\leq d_p$ because otherwise $ALG$ would have served the request $p$ during the backlog phase of service $S_C$ rather than during the backlog phase of the service $S$.

We have 
\begin{equation}
\label{t_0d_p_t_2}
t_0<d_p\leq t_2 
\end{equation}
where the first inequality is because if we had $d_p\leq t_0$ then $ALG$ would have served the request $p$ in the mature backlog phase or premature backlog phase of the service $S_C$ so the request $p$ would not have been active in $ALG$ at time $t_1$ in the service $S$. 

We also have 
\begin{equation} \label{inequality_times_d_multi_item_g}
t_p^{OPT}\leq t_0< t_1< t_2\leq t_3
\end{equation}
where the second inequality is because $ALG$ triggered the service $S_C$ before it triggered the service $S$ and $ALG$ does not trigger more than one service at the same time.

Next we show that $t_3\leq \hat{d_p^{t_0}}$. 
Since $OPT$ served the request $p$ before the service $S_C$, we have that $p$ was active in $ALG$ at time $t_0$.
Due to the monotonicity of the holding costs and inequality \ref{t_0d_p_t_2} we have $$c_p(t_0)\leq c_p(t_p^{OPT})=c_p^{OPT}\leq c(v).$$
Note that since $S_C\in\Phi_C$ then the service $S_C$ satisfies at least one of the following two conditions:
\begin{enumerate}
    \item There is a request $q$ served by $ALG$ during the local holding phase of the item $v$ in the service $S_C$ with a virtual deadline with respect to time $t_0$ of at least $t_3$, i.e. $\hat{d_q^{t_0}}\geq t_3$.
    \item The sum of the holding costs of the requests $S_C$ served in its local holding phase of $v$ is at most $c(v)$.
\end{enumerate}
If the service $S_C$ satisfied (2) then the request $p$ was active in $ALG$ at time $t_0$ and since $c_p(t_0)\leq c(v)$ then $ALG$ could have included the request $p$ in the local holding phase of the item $v$ in the service $S_C$ and the total holding costs of the requests $S_C$ served in its local holding phase of $v$ would be of at most $2\cdot c(v)$, which implies a contradiction. In other words, the only reason $ALG$ did not include the request $p$ during the local holding phase of $v$ in the service $S_C$ is since (1) above occurred. 

Since $ALG$ chose to include the request $q$ but not the request $p$ in the local holding phase of the item $v$ in the service $S_C$ and this choice is according to their virtual deadlines with respect to time $t_0$, we have that $t_3\leq \hat{d_q^{t_0}}\leq \hat{d_p^{t_0}}$ as needed. This with inequality \ref{inequality_times_d_multi_item_g} implies that 
$t_2\leq \hat{d_p^{t_0}}$. Hence, 
$$c_p(t_2)\leq c_p(\hat{d_p^{t_0}})=c_p(t_0)\leq f(t_p^{OPT})=c_p^{OPT}$$ where the first inequality is due to the inequality above and the monotonicity of the backlog costs, the first equality follows from the definition of virtual deadlines and the second inequality holds due to inequalities \ref{t_0d_p_t_2} and \ref{inequality_times_d_multi_item_g} and the monotonicity of the holding cost.
\end{proof}

Now, We show a lemma regarding the cost of $ALG$ and $OPT$ for requests with early virtual deadlines with respect to time $t_1$. The proof follows the logic of the proof of lemma \ref{hold_back.lemma.type_b_single_request} from subsection \ref{section_holdBack_single_item_b} and appears below for completeness.
%\ref{hold_back.lemma.type_b_single_request}.
\begin{lemma}
\label{hold_back_multi_item.lemma.multi_item_gh_ad_single_request_type2}
For each request $p\in H_1^v(S)$ for the item $v$ that was served during the local holding phase of the item $v$ in the service $S$ that has a virtual deadline with respect to time $t_1$ of at most $t_2$ (i.e. $\hat{d_p^{t_1}}\leq t_2$) we have $$c_p^{ALG}\leq c_p^{OPT}.$$
\end{lemma}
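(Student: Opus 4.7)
The plan is to follow closely the proof of Lemma \ref{hold_back.lemma.type_b_single_request} in subsection \ref{section_holdBack_single_item_b}, adapted to the multi-item setting and to the bounding time $t_2$ (rather than $t_1$) that appears in the new virtual-deadline hypothesis. Since $p \in H_1^v(S)$ was served during the local holding phase of $v$, we have $t_p^{ALG} = t_1$ and $t_1 \leq d_p$, so $c_p^{ALG} = c_p(t_1)$.

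I would split into two cases according to whether $OPT$ served $p$ before or after $ALG$. If $t_p^{OPT} \leq t_1$, then $a_p \leq t_p^{OPT} \leq t_1 \leq d_p$ places both service times in the holding regime, and monotonicity of the holding cost on $[a_p,d_p]$ yields $c_p(t_1) \leq c_p(t_p^{OPT})$ directly. If $t_1 < t_p^{OPT}$, the key observation is that $S \in \Phi_A \cup \Phi_D$ guarantees that $OPT$ did not serve item $v$ strictly between time $t_1$ and time $t_3$ (by the definitions in \ref{definition_partition_phi_alg_multi_item}, no service of $\Psi_{OPT}^v$ falls between $S$ and $t_3$), so $t_p^{OPT} \geq t_3$.

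With that in hand, chain the inequalities
\begin{equation*}
t_1 \leq d_p \leq \hat{d_p^{t_1}} \leq t_2 \leq t_3 \leq t_p^{OPT},
\end{equation*}
where the third inequality is the hypothesis of the lemma, the fourth is $t_2 = \hat{mature}_{t_1}(v) \leq \max_{v' \in V_2(S)} \hat{mature}_{t_1}(v') \leq t_3$ from $S \in \Psi_G^v$, and the fifth is the observation above. Then
\begin{equation*}
c_p^{ALG} = c_p(t_1) = c_p(\hat{d_p^{t_1}}) \leq c_p(t_p^{OPT}) = c_p^{OPT},
\end{equation*}
where the second equality is the definition of virtual deadline and the inequality uses monotonicity of the backlog cost on $[d_p, \infty)$.

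I don't anticipate a real obstacle: the argument is essentially the single-item lemma with $t_3$ playing the role of the next $OPT$-service time and $t_2$ sandwiched between $\hat{d_p^{t_1}}$ and $t_3$. The only subtlety is verifying that $t_p^{OPT} \geq t_3$ in the second case, which requires carefully unpacking the definitions of $\Phi_A$ and $\Phi_D$ relative to the subsequence $\Psi_{OPT}^v$ (not relative to all of $\Psi_{OPT}$), so that one can assert that between $t_1$ and $t_3$ no service of $OPT$ serves item $v$ at all.
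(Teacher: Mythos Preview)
Your proof is correct and follows the same route as the paper's: the same two-case split on whether $t_p^{OPT}\le t_1$ or $t_p^{OPT}>t_1$, and in the second case the same chain $t_1\le d_p\le \hat{d_p^{t_1}}\le t_2\le t_3\le t_p^{OPT}$ followed by the virtual-deadline identity and backlog monotonicity. The only slight overcomplication is your justification of $t_p^{OPT}\ge t_3$: in the paper's setup $t_3$ is defined as the time of the first service in all of $\Psi_{OPT}$ after $S$, so once $t_p^{OPT}>t_1$ the inequality $t_p^{OPT}\ge t_3$ is immediate from that definition alone, with no need to unpack $\Phi_A,\Phi_D$ or restrict to $\Psi_{OPT}^v$.
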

\begin{proof}
We have $t_p^{ALG}=t_1\leq d_p$ since the request $p$ was served by $ALG$ using the service $S$ during the local holding phase of the item $v$.

In the case that $OPT$ served the request $p$ before $ALG$, i.e. $t_p^{OPT}\leq t_1$ we have that $t_p^{OPT}\leq t_1\leq d_p$. The monotonicity of the holding cost implies that $c_p^{ALG}\leq c_p^{OPT}$. We note that this case can only occur if $S\in\Phi_D$ since otherwise $S\in\Phi_A$ then the service $S$, in which $ALG$ served the request $p$, was triggered by $ALG$ before $OPT$ served the item $v$ and the request $p$.

The other case that needs to be considered is when $OPT$ served the request $p$ after $ALG$, i.e. $t_p^{OPT}>t_1$. This certainly happens if $S\in\Phi_A$ but it may also happen if $S\in\Phi_D$. In that case
$$t_1\leq d_p\leq \hat{d_p^{t_1}} \leq t_2\leq t_3\leq t_p^{OPT}$$ 
where the last inequality holds due to the definition of $t_3$ and the fact that $t_1<t_p^{OPT}$. Hence, $$c_p^{ALG}=c_p(t_1)=c_p(\hat{d_p^{t_1}})\leq c_p(t_p^{OPT})=c_p^{OPT}$$ where the second equality is due to the definition of virtual deadlines and the first inequality is due to the monotonicity of the backlog cost.
\end{proof}

Now we are ready to prove lemma \ref{hold_back_multi_item.lemma.gh.ad_covered}. 
\begin{proof}[Proof of Lemma \ref{hold_back_multi_item.lemma.gh.ad_covered}]
If the total holding cost that $ALG$ paid for requests in the local holding phase of the item $v$ in the service $S$ is at least $c(v)$ and all the requests that $ALG$ served in the local holding phase of the item $v$ have virtual deadlines with respect to time $t_1$ of at most $t_2$ then we have
$$c(v)\leq \sum_{p\in H_1^v(S)}c_p^{ALG}\leq \sum_{p\in H_1^v(S)}c_p^{OPT}\leq \sum_{p\in P^v(S)}c_p^{OPT}$$
where in the second inequality we sum for all the requests $p\in H_1^v(S)$, using lemma \ref{hold_back_multi_item.lemma.multi_item_gh_ad_single_request_type2}. 

Otherwise, we can use observation \ref{hold_back_multi_item.lemma.alg_served_all_p_requests_1} or observation \ref{hold_back_multi_item.lemma.alg_served_all_p_requests_2} and get that $P_2\subseteq P^v(S)$. Due to the definition of $P_1$ we also have $P_1\subseteq P^v(S)$. 

Therefore, if there is a request $q\in P_1\cup P_2$ that has $c(v)\leq c_q^{OPT}$ then  $$c(v)\leq c_q^{OPT}\leq\sum_{p\in P_1\cup P_2}c_p^{OPT}\leq \sum_{p\in P^v(S)}c_p^{OPT}.$$
Otherwise, for each request $p\in P_1\cup P_2$ we have $c_p^{OPT}< c(v)$ and thus we can use lemma \ref{hold_back_multi_item.lemma.multi_item_gh_ad_single_request_type1}:
$$c(v)\leq \sum_{p\in P_1\cup P_2}c_p(t_2)\leq \sum_{p\in P_1\cup P_2}c_p^{OPT}\leq \sum_{p\in P^v(S)}c_p^{OPT}$$ where the first inequality is due to observation \ref{inequality_cost_P1_P2} and the second inequality follows from summing for all the requests $p\in P_1\cup P_2$, using lemma \ref{hold_back_multi_item.lemma.multi_item_gh_ad_single_request_type1}.
\end{proof}

\section{Conclusion and Open Problems}
In this paper, we resolved an open problem suggested in \cite{doi:10.1137/1.9781611978322.130} and provided a constant competitive algorithm for  the Joint Replenishment Problem (JRP) with holding and backlog costs with arbitrary (request dependent) functions. Previously, such an algorithm was known only when all requests have the same cost functions.
The natural open problem is to generalize these results to multilevel aggregations. 
There is a line of work on online
algorithms for multilevel trees with hard deadlines or with just backlog costs (and no holding).
Finally, improving competitive ratios for different versions of Online JRP, either by strengthening the upper bounds or proving
lower bounds, is intriguing open questions.

\bibliographystyle{plain}
\bibliography{bib.bib, biblio-2022}

\begin{thebibliography}{10}

\bibitem{arkin1989computational}
Esther Arkin, Dev Joneja, and Robin Roundy.
\newblock Computational complexity of uncapacitated multi-echelon production planning problems.
\newblock {\em Operations research letters}, 8(2):61--66, 1989.

\bibitem{azar2021online}
Yossi Azar, Arun Ganesh, Rong Ge, and Debmalya Panigrahi.
\newblock Online service with delay.
\newblock {\em ACM Transactions on Algorithms (TALG)}, 17(3):1--31, 2021.

\bibitem{azar2019general}
Yossi Azar and Noam Touitou.
\newblock General framework for metric optimization problems with delay or with deadlines.
\newblock In {\em 2019 IEEE 60th Annual Symposium on Foundations of Computer Science (FOCS)}, pages 60--71. IEEE, 2019.

\bibitem{becchetti2009latency}
Luca Becchetti, Alberto Marchetti-Spaccamela, Andrea Vitaletti, Peter Korteweg, Martin Skutella, and Leen Stougie.
\newblock Latency-constrained aggregation in sensor networks.
\newblock {\em ACM Transactions on Algorithms (TALG)}, 6(1):1--20, 2009.

\bibitem{bienkowski2020online}
Marcin Bienkowski, Martin B{\"o}hm, Jaroslaw Byrka, Marek Chrobak, Christoph D{\"u}rr, Luk{\'a}{\v{s}} Folwarczn{\`y}, {\L}ukasz Je{\.z}, Ji{\v{r}}{\'\i} Sgall, Nguyen~Kim Thang, and Pavel Vesel{\`y}.
\newblock Online algorithms for multilevel aggregation.
\newblock {\em Operations Research}, 68(1):214--232, 2020.

\bibitem{bienkowski2021new}
Marcin Bienkowski, Martin B{\"o}hm, Jaroslaw Byrka, Marek Chrobak, Christoph D{\"u}rr, Luk{\'a}{\v{s}} Folwarczn{\`y}, {\L}ukasz Je{\.z}, Ji{\v{r}}{\'\i} Sgall, Nguyen~Kim Thang, and Pavel Vesel{\`y}.
\newblock New results on multi-level aggregation.
\newblock {\em Theoretical Computer Science}, 861:133--143, 2021.

\bibitem{bienkowski2015approximation}
Marcin Bienkowski, Jaros{\l}aw Byrka, Marek Chrobak, Neil Dobbs, Tomasz Nowicki, Maxim Sviridenko, Grzegorz {\'S}wirszcz, and Neal~E Young.
\newblock Approximation algorithms for the joint replenishment problem with deadlines.
\newblock {\em Journal of Scheduling}, 18(6):545--560, 2015.

\bibitem{bienkowski2014better}
Marcin Bienkowski, Jaroslaw Byrka, Marek Chrobak, {\L}ukasz Je{\.z}, Dorian Nogneng, and Ji{\v{r}}{\'\i} Sgall.
\newblock Better approximation bounds for the joint replenishment problem.
\newblock In {\em Proceedings of the twenty-fifth annual ACM-SIAM symposium on discrete algorithms}, pages 42--54. SIAM, 2014.

\bibitem{bienkowski2013online}
Marcin Bienkowski, Jaroslaw Byrka, Marek Chrobak, {\L}ukasz Je{\.z}, Ji{\v{r}}{\'\i} Sgall, and Grzegorz Stachowiak.
\newblock Online control message aggregation in chain networks.
\newblock In {\em Workshop on Algorithms and Data Structures}, pages 133--145. Springer, 2013.

\bibitem{bosman2020improved}
Thomas Bosman and Neil Olver.
\newblock Improved approximation algorithms for inventory problems.
\newblock In {\em International conference on integer programming and combinatorial optimization}, pages 91--103. Springer, 2020.

\bibitem{brito2012competitive}
Carlos~Fisch Brito, Elias Koutsoupias, and Shailesh Vaya.
\newblock Competitive analysis of organization networks or multicast acknowledgment: How much to wait?
\newblock {\em Algorithmica}, 64:584--605, 2012.

\bibitem{buchbinder2017depth}
Niv Buchbinder, Moran Feldman, Joseph Naor, and Ohad Talmon.
\newblock O (depth)-competitive algorithm for online multi-level aggregation.
\newblock In {\em Proceedings of the Twenty-Eighth Annual ACM-SIAM Symposium on Discrete Algorithms}, pages 1235--1244. SIAM, 2017.

\bibitem{buchbinder2013online}
Niv Buchbinder, Tracy Kimbrel, Retsef Levi, Konstantin Makarychev, and Maxim Sviridenko.
\newblock Online make-to-order joint replenishment model: Primal-dual competitive algorithms.
\newblock {\em Operations Research}, 61(4):1014--1029, 2013.

\bibitem{cheung2016submodular}
Maurice Cheung, Adam~N Elmachtoub, Retsef Levi, and David~B Shmoys.
\newblock The submodular joint replenishment problem.
\newblock {\em Mathematical Programming}, 158:207--233, 2016.

\bibitem{chrobak2014online}
Marek Chrobak.
\newblock Online aggregation problems.
\newblock {\em ACM SIGACT News}, 45(1):91--102, 2014.

\bibitem{dooly1998tcp}
Daniel~R Dooly, Sally~A Goldman, and Stephen~D Scott.
\newblock Tcp dynamic acknowledgment delay (extended abstract) theory and practice.
\newblock In {\em Proceedings of the thirtieth annual ACM symposium on Theory of computing}, pages 389--398, 1998.

\bibitem{dooly2001line}
Daniel~R Dooly, Sally~A Goldman, and Stephen~D Scott.
\newblock On-line analysis of the tcp acknowledgment delay problem.
\newblock {\em Journal of the ACM (JACM)}, 48(2):243--273, 2001.

\bibitem{FNR14}
Takuro Fukunaga, Afshin Nikzad, and R.~Ravi.
\newblock Deliver or hold: Approximation algorithms for the periodic inventory routing problem.
\newblock In {\em Approximation, Randomization, and Combinatorial Optimization. Algorithms and Techniques, {APPROX/RANDOM}}, pages 209--225, 2014.

\bibitem{karlin2001dynamic}
Anna~R Karlin, Claire Kenyon, and Dana Randall.
\newblock Dynamic tcp acknowledgement and other stories about e/(e-1).
\newblock In {\em Proceedings of the thirty-third annual ACM symposium on Theory of computing}, pages 502--509, 2001.

\bibitem{literature}
Moutaz Khouja and Suresh Goyal.
\newblock A review of the joint replenishment problem literature: 1989--2005.
\newblock {\em European journal of operational Research}, 186(1):1--16, 2008.

\bibitem{levi2008constant}
Retsef Levi, Robin Roundy, David Shmoys, and Maxim Sviridenko.
\newblock A constant approximation algorithm for the one-warehouse multiretailer problem.
\newblock {\em Management Science}, 54(4):763--776, 2008.

\bibitem{LRS06}
Retsef Levi, Robin Roundy, and David~B. Shmoys.
\newblock Primal-dual algorithms for deterministic inventory problems.
\newblock {\em Math. Oper. Res.}, 31(2):267--284, 2006.

\bibitem{levi2006improved}
Retsef Levi and Maxim Sviridenko.
\newblock Improved approximation algorithm for the one-warehouse multi-retailer problem.
\newblock In {\em International Workshop on Approximation Algorithms for Combinatorial Optimization}, pages 188--199. Springer, 2006.

\bibitem{mari2024online}
Mathieu Mari, Micha{\l} Paw{\l}owski, Runtian Ren, and Piotr Sankowski.
\newblock Online multi-level aggregation with delays and stochastic arrivals.
\newblock {\em arXiv preprint arXiv:2404.09711}, 2024.

\bibitem{mcmahan2021d}
Jeremy McMahan.
\newblock A $ d $-competitive algorithm for the multilevel aggregation problem with deadlines.
\newblock {\em arXiv preprint arXiv:2108.04422}, 2021.

\bibitem{doi:10.1137/1.9781611978322.130}
Benjamin Moseley, Aidin Niaparast, and R.~Ravi.
\newblock Putting off the catching up: Online joint replenishment problem with holding and backlog costs.
\newblock In {\em Proceedings of the 2025 Annual ACM-SIAM Symposium on Discrete Algorithms (SODA)}, pages 3865--3883, 2025.

\bibitem{nagarajan2016approximation}
Viswanath Nagarajan and Cong Shi.
\newblock Approximation algorithms for inventory problems with submodular or routing costs.
\newblock {\em Mathematical Programming}, 160:225--244, 2016.

\bibitem{nonner2009approximating}
Tim Nonner and Alexander Souza.
\newblock Approximating the joint replenishment problem with deadlines.
\newblock {\em Discrete Mathematics, Algorithms and Applications}, 1(02):153--173, 2009.

\bibitem{PengWangWang}
Lu~Peng, Lin Wang, and Sirui Wang.
\newblock A review of the joint replenishment problem from 2006 to 2022.
\newblock {\em Management System Engineering}, 1, 11 2022.

\bibitem{seiden2000guessing}
Steven~S Seiden.
\newblock A guessing game and randomized online algorithms.
\newblock In {\em Proceedings of the thirty-second annual ACM symposium on Theory of computing}, pages 592--601, 2000.

\bibitem{touitou2023improved}
Noam Touitou.
\newblock Improved and deterministic online service with deadlines or delay.
\newblock {\em arXiv preprint arXiv:2306.05744}, 2023.

\end{thebibliography}
\clearpage

\appendix

\end{document}